\renewcommand{\Re}{\mathrm{Re}\,}
\renewcommand{\Im}{\mathrm{Im}\,}
\newcommand{\ds}{\displaystyle}
\renewcommand{\vec}{\mathbf}
\newtheorem{thm}{Theorem}[section]
\newtheorem{lem}[thm]{Lemma}
\newtheorem{prop}[thm]{Proposition}
\newtheorem{rhp}[thm]{RH problem}
\numberwithin{equation}{section}
\newcommand{\al}{\alpha}
\theoremstyle{remark}
\newtheorem{remark}[thm]{Remark}
\numberwithin{equation}{section}
\newcommand{\eq}{\begin{equation}}
\newcommand{\nq}{\end{equation}}
\newcommand{\eqa}{\begin{eqnarray}}
\newcommand{\nqa}{\end{eqnarray}}
\begin{document}

\title{Gap probability at the hard edge for random matrix ensembles with pole singularities in the potential}
\author{Dan Dai\footnotemark[1], ~Shuai-Xia Xu\footnotemark[2] ~and Lun Zhang\footnotemark[3]}

\renewcommand{\thefootnote}{\fnsymbol{footnote}}
\footnotetext[1]{Department of Mathematics, City University of Hong Kong, Tat Chee
Avenue, Kowloon, Hong Kong. E-mail: \texttt{dandai@cityu.edu.hk}}
\footnotetext[2]{Institut Franco-Chinois de l'Energie Nucl\'{e}aire, Sun Yat-sen University,
Guangzhou 510275, China. E-mail: \texttt{xushx3@mail.sysu.edu.cn}}
\footnotetext[3] {School of Mathematical Sciences and Shanghai Key Laboratory for Contemporary Applied Mathematics, Fudan University, Shanghai 200433, China. E-mail: \texttt{lunzhang@fudan.edu.cn }}
\maketitle

\begin{abstract}
We study the Fredholm determinant of an integrable operator acting on the interval $(0,s)$ whose kernel is constructed out of the $\Psi$-function associated with a hierarchy of higher order analogues to the Painlev\'{e} III equation. This Fredholm determinant describes the critical behavior of the eigenvalue gap probability at the hard edge of unitary invariant random matrix ensembles perturbed
by poles of order $k$ in a certain scaling regime. Using the Riemann-Hilbert method, we obtain the large $s$ asymptotics of the Fredholm determinant. Moreover, we derive a Painlev\'e type formula of the Fredholm determinant, which is expressed in terms of an explicit integral involving a solution to a coupled Painlev\'e III system.
\end{abstract}

\textbf{2010 Mathematics Subject Classification}: 33E17; 34M55; 41A60.
\medskip

\textbf{Keywords}: unitary ensembles; singular potentials; gap probability; asymptotics; Painlev\'e and coupled Painlev\'e equations; Riemann-Hilbert problem; Deift-Zhou steepest descent analysis.

\tableofcontents
\section{Introduction}
\subsection{Random matrix ensembles with pole singularities in the potential}
In this paper, we are concerned with the following unitary invariant random matrix ensembles
\begin{equation}\label{eq:probmeasure}
\frac{1}{Z_n} (\det M)^\alpha \exp[- n\textrm{tr}\, V_k(M)]dM, \qquad \alpha>-1,
\end{equation}
defined on the space $\mathcal{H}_n^+$ of $n \times n$ positive definite Hermitian matrices $M=(M_{ij})_{1\leq i,j \leq n}$, where
\begin{equation}
dM=\prod_{i=1}^{n}dM_{ii} \prod_{1 \leq i < j \leq n} d\Re M_{ij}d\Im M_{ij},
\end{equation}
\begin{equation}
  Z_n=\int_{\mathcal{H}_n^+} (\det M)^\alpha \exp[- n\textrm{tr}\, V_k(M)]dM
\end{equation}
is the normalization constant,  and the potential
\begin{equation} \label{vt-def}
  V_k(x):= V(x) + \left(\frac{t}{x}\right)^k, \qquad x \in (0,\infty), \quad t >0 ,  \quad k \in \mathbb{N}.
\end{equation}
In \eqref{vt-def}, it is assumed that the regular part $V$ of the potential is real analytic on $[0, \infty)$, and also we are in the one-cut regular case in the sense of \cite{DKMVZ99,KM2000}. This particularly implies that the limiting mean distribution of the eigenvalues as $n\to\infty$ for $t=0$ is supported on $[0,b]$ for some positive $b$.

Since the ensembles are unitary invariant, we have that the eigenvalues of a random matrix in \eqref{eq:probmeasure} form a determinantal point process \cite{Johansson06,Soshnikov00}, whose joint probability density function is explicitly given by
\begin{equation}\label{eq:jpdf}
\frac{1}{\widehat{Z}_n}\prod_{1\leq i < j \leq n}(x_j-x_i)^2\prod_{j=1}^n w(x_j),
\end{equation}
with
$$\widehat{Z}_n=\int_{[0,\infty)^n}\prod_{1\leq i < j \leq n}(x_j-x_i)^2\prod_{j=1}^n w(x_j)dx_j$$
and
\begin{equation}\label{eq:weight}
w(x)=x^\alpha e^{-nV_k(x)}.
\end{equation}

Clearly, random matrix ensembles of the form \eqref{eq:probmeasure} can be interpreted as unitary ensembles perturbed by a pole of order $k$ at the origin. These singularly perturbed ensembles have attracted lots of interests recently. On one hand, they arise in a variety of problems. These problems include statistics for zeta zeros and eigenvalues \cite{BS08}, eigenvalues of Wigner-Smith time-delay matrix in the context of quantum transport and electrical characteristics of chaotic cavities \cite{Brouwer99,Brouwer97,TM13}, and random matrix models in the field of spin-glasses \cite{Akemann14}, etc.. On the other hand, these ensembles are natural candidates to exhibit new critical phenomena. Indeed, for each fixed $t>0$, the eigenvalues are repelled from the origin due to the pole singularities in the potential. This behavior is quite different from the unperturbed case (i.e., $t=0$), in which the eigenvalues will accumulate near the origin as $n \to \infty$ for the one-cut regular case. As a consequence, it is expected that some new phenomena will appear near the origin if one takes $t\to 0$ and simultaneously $n\to \infty$, which also corresponds to a phase transition between two different edge behaviors. Studies of this aspect have been conducted in \cite{Bri:Mezz:Mo} for the perturbed Gaussian unitary ensembles, in \cite{ChenIts2010,Lyu:Gri:Chen2017,XDZ2015,XDZ} for the perturbed Laguerre unitary ensembles, and in \cite{ACM} for the general case as indicated in \eqref{eq:probmeasure}. It comes out that the Painlev\'e III equation and its hierarchy play an important role in describing the critical behavior, as we will review from the viewpoint of correlation kernel in what follows.


\subsection{Double scaling limit of the correlation kernel at the hard edge}
Let us denote by $K_{n}(x,y;t)$ the correlation kernel associated with \eqref{eq:jpdf}, which takes the following form
\begin{equation} \label{eq:correlation kernel}
  K_{n}(x,y;t)=h_{n-1}^{-1} \sqrt{w(x) w(y)}
\frac {\pi_n(x)\pi_{n-1}(y)-\pi_{n-1}(x)\pi_n(y)}{x-y}, \quad x,y >0.
\end{equation}
Here $\pi_j(x)$, $j=0,1,\cdots,$ is a family of monic polynomials of degree $j$ satisfying the orthogonality conditions
\begin{equation}
\int_0^{\infty}\pi_j(x)\pi_m(x)w(x)dx=h_j\delta_{j,m},
\end{equation}
where the weight function $w$ is given in \eqref{eq:weight}.

To state the relevant results, we also need the following Riemann-Hilbert(RH) problem.
\begin{figure}[h]
\begin{center}
   \setlength{\unitlength}{1truemm}
   \begin{picture}(100,70)(-5,2)
       \put(40,40){\line(-1,-1){25}}
       \put(40,40){\line(-1,1){25}}
       \put(40,40){\line(-1,0){30}}

       \put(25,55){\thicklines\vector(1,-1){1}}
       \put(30,40){\thicklines\vector(1,0){1}}
       \put(25,25){\thicklines\vector(1,1){1}}

       \put(39,36.3){$0$}
       \put(20,11){$\Sigma_3$}
       \put(20,66){$\Sigma_1$}
       \put(3,40){$\Sigma_2$}
       \put(25,44){$\Omega_2$}
       \put(25,34){$\Omega_3$}
       \put(55,40){$\Omega_1$}

       \put(40,40){\thicklines\circle*{1}}

   \end{picture}
   \caption{The jump contours $\Sigma_j$ and the regions $\Omega_j$, $j=1,2,3,$ for the RH problem for $\Psi$.}
   \label{fig:jumpsPsi}
\end{center}
\end{figure}
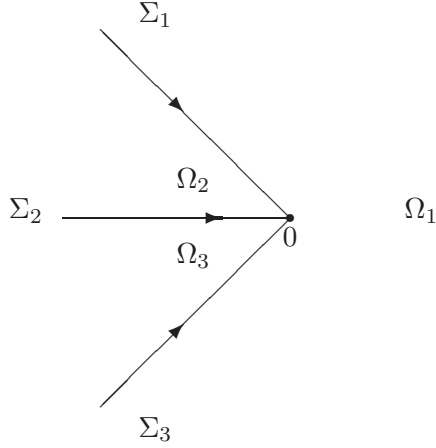
\begin{rhp} \label{rhp:Psi}
Let $\lambda>0$ and $\alpha > -1$. We look for a $2\times 2$ matrix-valued function
$\Psi(z)=\Psi(z;\lambda) $
satisfying
\begin{enumerate}
\item[\rm (1)] $\Psi(z)$ is defined and analytic in $\mathbb{C}\setminus \{\cup^3_{j=1}\Sigma_j\cup\{0\}\}$, where the contours $\Sigma_j$, $j=1,2,3$,  are illustrated in Figure \ref{fig:jumpsPsi}.
\item[\rm (2)] $\Psi$ has limiting values $\Psi_{\pm}(z)$ for $z\in \cup^3_{j=1}\Sigma_j$, where $\Psi_+$ ($\Psi_-$) denotes the limiting values from the left (right) side of $\Sigma_j$, and
\begin{equation}\label{eq:Psi-jump}
 \Psi_+(z)=\Psi_-(z)
 \left\{
 \begin{array}{ll}
    \left(
                               \begin{array}{cc}
                                 1 & 0 \\
                               e^{\alpha \pi i}& 1 \\
                                 \end{array}
                             \right), &  z \in \Sigma_1, \\[.4cm]
    \left(
                               \begin{array}{cc}
                                0 & 1\\
                               -1 & 0 \\
                                 \end{array}
                             \right),  &  z \in \Sigma_2, \\[.4cm]
    \left(
                               \begin{array}{cc}
                                 1 & 0 \\
                                 e^{-\alpha \pi i} & 1 \\
                                 \end{array}
                             \right), &   z \in \Sigma_3.
 \end{array}  \right .  \end{equation}

\item[\rm (3)] As $z\to\infty$, there exist functions $p(\lambda)$, $q(\lambda)$ and $r(\lambda)$ such that
\begin{equation}\label{eq:Psi-infty}
 \Psi(z;\lambda)=
 \left( I + \frac{1}{z}
 \begin{pmatrix}
 q(\lambda) & -i r(\lambda) \\
 i p(\lambda) & -q(\lambda)
 \end{pmatrix} +
O(z^{-2}) \right) z^{-\frac{1}{4} \sigma_3} \frac{I + i \sigma_1}{\sqrt{2}} e^{z^{\frac{1}{2}} \sigma_3},
   \end{equation}
   where the branch cuts of $z^{-\frac{1}{4}}$ and $z^{\frac{1}{2}}$ are taken along the negative real axis, and
$$\sigma_1=\begin{pmatrix}
 0 & 1 \\
 1 & 0
 \end{pmatrix}, \qquad \sigma_3=\begin{pmatrix}
 1 & 0 \\
 0 & -1
 \end{pmatrix}
 $$
are the Pauli matrices.
\item[\rm (4)]As $z\to 0$, there exists a unimodular matrix $\Psi_0(\lambda)$, independent of $z$, such that
\begin{equation}\label{eq:Psi-origin}
\Psi(z;\lambda)=\Psi_0(\lambda)(I+O(z)) e^{-\left(-\frac{\lambda}{z}\right)^k\sigma_3}z^{\frac \alpha2\sigma_3}\left\{
  \begin{array}{ll}
    I, &  z \in \Omega_1, \\
   \left(
                               \begin{array}{cc}
                                 1 & 0 \\
                               -e^{\alpha\pi i}& 1 \\
                                 \end{array}
                             \right),  &  z \in \Omega_2, \\
    \left(
                               \begin{array}{cc}
                                 1 &0 \\
                                  e^{-\alpha \pi i} &1 \\
                                 \end{array}
                             \right), &   z \in \Omega_3,
 \end{array}  \right .
  \end{equation}
where the regions $\Omega_i$, $i=1,2,3$, are depicted in Figure \ref{fig:jumpsPsi}.
\end{enumerate}
\end{rhp}

It was shown in \cite{ACM} that there exists a unique solution to the above RH problem for each $k\in\mathbb{N}$, $\alpha>-1$ and $\lambda>0$.  This generalizes the result in \cite{XDZ} for the case $k=1$.

Now, for $z\in\Omega_1$, let us set
\begin{equation}
\begin{pmatrix}
\psi_1(z;\lambda) \\
\psi_2(z;\lambda)
\end{pmatrix}:=\Psi(z;\lambda)\begin{pmatrix}
1 \\
0
\end{pmatrix}.
\end{equation}
These two functions can be extended analytically in the whole complex plane with a cut along $\Sigma_1$, which we still denote by $\psi_i(z;\lambda)$, $i=1,2$. This particularly implies that
\begin{equation}\label{eq:psinegx}
\begin{pmatrix}
\psi_1(x;\lambda) \\
\psi_2(x;\lambda)
\end{pmatrix}=\Psi_-(x;\lambda)\begin{pmatrix}
1 \\
-e^{-\alpha \pi i}
\end{pmatrix}, \qquad x<0.
\end{equation}

Let us consider the correlation kernel $K_n(x,y;t)$ given in \eqref{eq:correlation kernel}. Assume that the parameter $t\to 0$ and $n\to \infty$ in such a way that
\begin{equation}\label{eq:doublescaling}
2^{-\frac{1}{k}}c_V n^{\frac{2k+1}{k}}t \to \lambda>0,
\end{equation}
where $c_V$ is a constant depending on the regular part $V$ in the potential. It was shown in \cite[Theorem 1.10]{ACM} that
\begin{equation}\label{eq:hardedgelimit}
 \lim_{n\to\infty} \frac {1}{c_Vn^2} K_n\left (\frac {u}{c_Vn^2} ,\frac {v}{c_Vn^2}; t \right) = K_{\textrm{PIII}}(u,v;\lambda),
  \end{equation}
uniformly  for $u$, $v$ and $\lambda$  in  any compact subsets of  $(0,\infty)$,
where
\begin{equation}\label{eq:PIIIkernel}
 K_{\textrm{PIII}}(u,v;\lambda):=e^{\alpha \pi i }\frac{\psi_1(-u;\lambda)\psi_2(-v;\lambda)-  \psi_1(-v;\lambda)\psi_2(-u;\lambda)}{2\pi i(u-v)}.
  \end{equation}
The subscript $\textrm{PIII}$ is used to indicate that the RH problem \ref{rhp:Psi} for $\Psi$ is related to a hierarchy of higher order analogues to the Painlev\'{e} III equation, which we will explain next.


\begin{remark} \label{remark-Bessel relation}
When $\lambda=0$, RH problem \ref{rhp:Psi} for $\Psi$ can be solved explicitly in terms of modified Bessel functions; cf. \cite{KMVV2004}. Furthermore, this also corresponds to the case $t=0$ in \eqref{vt-def}, i.e., the singular part in the potential $V_k$ vanishes. Since the hard edge scaling limit of the correlation kernel for the unperturbed unitary ensembles is given by the classical Bessel kernel \cite{Forrester93} in general (cf. \cite{Vanlessen}), it is then natural to expect some transition will occur between these two kernels as the parameter $\lambda$ varies. This is indeed the case. According to \cite{ACM,XDZ}, we have
\begin{equation}\label{eq:PIIItoBesl}
K_{\textrm{PIII}}(u,v;\lambda)=K_{\textrm{Bes}}(u,v)+O(\lambda),  \qquad \lambda \to 0^+,
\end{equation}
uniformly for $u,v$ in any compact subset of $(0,\infty)$,
where
\begin{equation}\label{def:KBes}
K_{\rm Bes}(x,y) = \frac{J_\alpha(\sqrt x)\sqrt y J'_\alpha
(\sqrt y)-\sqrt x J_\alpha'(\sqrt x)J_\alpha(\sqrt y)}{2(x-y)},\quad
\end{equation}
is the Bessel kernel of order $\alpha$, and $J_\alpha(x)$ is the Bessel function of the first kind of order
$\alpha$; see \cite{AS}.
\end{remark}

\subsection{Connection with a Painlev\'{e} III hierarchy}\label{subsec:PIIIhierarchy}
The (1,2) entry $r(\lambda)$ of the residue term at infinity for $\Psi$ in \eqref{eq:Psi-infty} is connected to a Painlev\'{e} III hierarchy. More precisely, given $k\in\mathbb{N}$, let us consider the following $k+1$ ODEs for $k+1$ unknown functions
$(\varrho(\lambda), \ell_1(\lambda), \ldots, \ell_k(\lambda))$ indexed by $p=0,\ldots, k$:
\begin{equation}\label{eq:PIIIequ}
\left\{
  \begin{array}{ll}
    \varrho=-\frac{1}{4\ell^2_k}((\ell_k^2)''-3(\ell_k')^2+\tau_0), & \hbox{$p=0$,} \\
    \sum\limits_{q=0}^p(\ell_{k-p+q+1}\ell_{k-q}-(\ell_{k-p+q}\ell_{k-q})''+3\ell'_{k-p+q}\ell'_{k-q}-4\varrho\ell_{k-p+q}\ell_{k-q})=\tau_p, & \hbox{$1\leq p \leq k$,}
  \end{array}
\right.
\end{equation}
where $(') = \frac{d}{d\lambda}$, $\tau_p$, $p=0,1,\ldots,k$ are real constants, and
$$\ell_{k+1}(\lambda)=0,\qquad \ell_0(\lambda)=\frac{\lambda}{2}.$$

If $k=1$, the ODE for $\ell_1$ reads
$$\ell_1''(\lambda)=\frac{\ell_1'(\lambda)^2}{\ell_1(\lambda)}
-\frac{\ell_1'(\lambda)}{\lambda}-\frac{\ell_1(\lambda)^2}{\lambda}
-\frac{\tau_0}{\ell_1(\lambda)}+\frac{\tau_1}{\lambda},$$
which can be identified as the Painlev\'{e} III equation \cite[Equation (32.2.3)]{DLMF} with parameters $\alpha=-1, \beta= \tau_1, \gamma = 0$ and $\delta=-\tau_0p$. Hence, the system of equations \eqref{eq:PIIIequ} is called the $k$-th member of the Painlev\'{e} III hierarchy. For general $k>1$, one can obtain an ODE of order $2k$ for $\ell_1$ from \eqref{eq:PIIIequ} by eliminating the other functions $\varrho$ and $\ell_p$ ($2\leq p \leq k$).

The connection between $r(\lambda)$ and the Painlev\'{e} III hierarchy \eqref{eq:PIIIequ} is shown in \cite[Theorem 1.6]{ACM}, as stated in the following proposition.
\begin{prop}\label{prop:PIII}
Let $\Psi(z;\lambda)$ be the unique solution to RH problem \ref{rhp:Psi} for $\alpha>-1$ and $\lambda>0$.
Then, the limit
\begin{equation}
y_{\alpha}(\lambda)=-2i\frac{d}{d\lambda}\left(\lim_{z \to \infty} z\Psi\left(z;\lambda^2\right)e^{-z^{\frac{1}{2}} \sigma_3} \frac{I - i \sigma_1}{\sqrt{2}} z^{\frac{1}{4} \sigma_3}  \right)_{12} = -2 \frac{d}{d\lambda} \left( r\left(\lambda^2\right) \right)
\end{equation}
exists,  where $(M)_{ij}$ stands for the $(i,j)$-th entry of a matrix $M$. Furthermore, $y_{\alpha}(\lambda)$ is a solution of the equation for $\ell_1$ of the $k$-th member of the Painlev\'{e} III hierarchy \eqref{eq:PIIIequ}, with the parameters $\tau_p$, $p=0,\ldots,k$ given by
\begin{equation}
\tau_p=\left\{
         \begin{array}{ll}
           4^{2k+1}k^2, & \hbox{$p=0$,} \\
          -(-4)^{k+1}\alpha k, & \hbox{$p=k$,} \\
          0, & \hbox{$0<p<k$.}
         \end{array}
       \right.
\end{equation}
\end{prop}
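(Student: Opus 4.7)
The approach is the classical Lax pair / isomonodromic deformation method. Since every jump matrix in \eqref{eq:Psi-jump} is constant in both $z$ and $\lambda$, the matrices
\[
A(z;\lambda) := \frac{\partial \Psi}{\partial z}\,\Psi^{-1}, \qquad U(z;\lambda) := \frac{\partial \Psi}{\partial \lambda}\,\Psi^{-1}
\]
are single-valued rational functions of $z$ with poles only at $z=0$ and $z=\infty$, and their pole structures can be read off from the local data \eqref{eq:Psi-infty} and \eqref{eq:Psi-origin}. The existence of the limit defining $y_\alpha(\lambda)$ follows at once from \eqref{eq:Psi-infty}: using the elementary identity $\tfrac12(I+i\sigma_1)(I-i\sigma_1)=I$, one computes that the bracketed matrix equals $zI + M_1(\lambda^2) + O(z^{-1})$, where $M_1$ denotes the residue matrix in \eqref{eq:Psi-infty}. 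Its $(1,2)$ entry converges to $-i\,r(\lambda^2)$, so the limit exists and the identity $y_\alpha(\lambda) = -2\,(d/d\lambda)r(\lambda^2)$ is immediate.

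For the Painlev\'{e} connection, I would determine the precise forms of $A$ and $U$. Since the $\lambda$-dependence in \eqref{eq:Psi-infty} appears only at order $O(z^{-1})$, the matrix $U$ vanishes at $z=\infty$; a direct computation with the essential singularity $e^{z^{1/2}\sigma_3}$ shows that $A$ is bounded at $z=\infty$ as well. On the other hand, the factor $e^{-(-\lambda/z)^k \sigma_3}\,z^{(\alpha/2)\sigma_3}$ in \eqref{eq:Psi-origin} forces $A$ to have a pole of order $k+1$ and $U$ a pole of order $k$ at $z=0$. Thus
\[
A(z;\lambda) = A_\infty(\lambda) + \sum_{j=1}^{k+1}\frac{A_{-j}(\lambda)}{z^j},\qquad U(z;\lambda) = \sum_{j=1}^{k}\frac{U_{-j}(\lambda)}{z^j},
\]
where the coefficients are determined by matching with \eqref{eq:Psi-infty} and \eqref{eq:Psi-origin}. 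After a gauge transformation by $\Psi_0(\lambda)$, the Laurent coefficients at $z=0$ can be parametrized by scalar unknowns which, subject to the normalizations $\ell_0(\lambda)=\lambda/2$ and $\ell_{k+1}(\lambda)=0$, become precisely the functions $\varrho,\ell_1,\dots,\ell_k$ appearing in \eqref{eq:PIIIequ}.

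The zero-curvature (compatibility) condition $\partial_\lambda A - \partial_z U + [A,U] = 0$, matched power by power in $z$, then yields exactly the system \eqref{eq:PIIIequ}. The function $\ell_1$ is identified with the scaled $r$-coefficient: the $(1,2)$-entry of the subleading term of $A$ at $z=\infty$ encodes $r$, and via the recursion built into \eqref{eq:PIIIequ} this is linked to $\ell_1$, yielding $y_\alpha$ after the substitution $\lambda \mapsto \lambda^2$. Elimination of $\varrho,\ell_2,\dots,\ell_k$ then produces the promised $2k$-th order scalar ODE for $y_\alpha$. The constants $\tau_p$ are pinned down by the non-dynamical data of the RH problem: $\tau_0 = 4^{2k+1}k^2$ comes from the leading essential singularity $e^{z^{1/2}\sigma_3}$ at infinity together with the $k$-dependent rescaling hidden in \eqref{eq:Psi-origin}, $\tau_k = -(-4)^{k+1}\alpha k$ is read off from the formal monodromy exponent $(\alpha/2)\sigma_3$ at $z=0$, and the intermediate $\tau_p$ vanish because $0$ and $\infty$ are the only singularities. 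The principal obstacle is the clean algebraic matching of the Laurent coefficients under the $\Psi_0$-gauge and the verification that the compatibility conditions reduce precisely to the scalar form \eqref{eq:PIIIequ}; the full computation is carried out in detail in \cite{ACM}.
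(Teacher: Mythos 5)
The paper itself does not prove this proposition: it is quoted verbatim from \cite[Theorem 1.6]{ACM}, with the text preceding the statement explicitly attributing it there. So there is no in-paper proof for your attempt to be compared against. That said, your approach is precisely the isomonodromic/Lax-pair method that \cite{ACM} uses, and the one part you do carry out in full — the verification that the limit exists and equals $-2\,(d/d\lambda)\,r(\lambda^2)$ — is correct: multiplying \eqref{eq:Psi-infty} on the right by $e^{-z^{1/2}\sigma_3}(I-i\sigma_1)/\sqrt{2}\,\cdot z^{\sigma_3/4}$ and using $\tfrac12(I+i\sigma_1)(I-i\sigma_1)=I$ cancels the prefactors, the diagonal $zI$ term contributes nothing to the $(1,2)$ entry, and the residue matrix delivers $-ir(\lambda^2)$; then $-2i\cdot(-i)=-2$. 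For the hierarchy connection you correctly identify the pole structure of $A=\Psi_z\Psi^{-1}$ (bounded at $\infty$ after the conjugation by $z^{-\sigma_3/4}(I+i\sigma_1)/\sqrt{2}$; pole of order $k+1$ at $0$ from $e^{-(-\lambda/z)^k\sigma_3}$) and of $U=\Psi_\lambda\Psi^{-1}$, and invoke zero curvature — this matches \cite{ACM}. Two caveats, both minor given that the paper also defers here: your explanation of where $\tau_0$ and $\tau_k$ come from is heuristic rather than a computation (in fact they fall out of $\det A(z)$ near $z=0$, read off from the $z^{-2k-2}$ and $z^{-k-2}$ coefficients, together with the scaling to $\ell_j$); and the vanishing of the intermediate $\tau_p$ is due to the exponent $(-\lambda/z)^k$ being a pure monomial rather than a polynomial, which is worth saying explicitly. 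Since you acknowledge the remaining Laurent matching is carried out in \cite{ACM}, this is consistent with the paper's own level of treatment.
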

In addition, the asymptotic behavior of $y_{\alpha}$ can be derived from the following asymptotics of $r(\lambda)$:
\begin{equation}\label{eq:rsmall}
r(\lambda)=\frac{1}{8}\left(1-4\alpha^2\right)+O\left( \lambda^k \right), \qquad \textrm{as } \lambda \to 0^+
\end{equation}
and
\begin{equation}\label{eq:rlarge}
r(\lambda)=\lambda^{\frac{2k}{2k+1}}\left(\beta_{k-2}-\frac{3}{2}z_0\right)-(-z_0)^\frac{1}{2}\lambda^{\frac{k}{2k+1}} \alpha +O(1),
\qquad \textrm{as } \lambda \to + \infty,
\end{equation}
where
$$z_0=-\left(\frac{2^{k-1}(k-1)!}{(2k-1)!!}\right)^{-\frac{2}{2k+1}}, \quad \beta_j=-(-z_0)^{-\frac{3}{2}-j}\frac{(2j+1)!!}{2^jj!}$$
with $(2j+1)!!=(2j+1)(2j-1)\cdots 3 \cdot 1$ being the double factorial;
see \cite[Equations (5.17) and (4.57)]{ACM}.

\subsection{Gap probability at the hard edge}
 Let $\mathcal{K}_{\textrm{PIII}}$ be the integral operator with kernel $\chi_{[0,s]}(u) K_{\textrm{PIII}}(u,v) \chi_{[0,s]}(v)$ in \eqref{eq:PIIIkernel} acting on the function space $L^2((0,\infty))$, where $s>0$ and $\chi_J$ is the characteristic function of the interval $J$. It is the aim of this paper to study the Fredholm determinant
\begin{equation} \label{KPiii-def}
  \det\left(I-\mathcal{K}_{\textrm{PIII}}\right).
\end{equation}
Due to the determinantal structure \eqref{eq:jpdf}, the above function gives us the gap probability (the probability of finding no eigenvalues) on the interval $(0,s)$ for the limiting process of the random matrix ensembles \eqref{eq:probmeasure}, i.e.,
\begin{multline} \label{Kpiii-gap-prob}
\det(I-\mathcal{K}_{\textrm{PIII}})
\\
=\lim \textrm{Prob}\left(\textrm{$M$ distributed according to \eqref{eq:probmeasure} has no eigenvalues in $ \left(0, \frac{s}{c_Vn^2} \right)$}\right),
\end{multline}
where the limit is understood that both $t\to 0$ and $n\to \infty$ such that condition \eqref{eq:doublescaling} holds.

It is well-known that, for many limiting correlation kernels arising from random matrix theory, the associated Fredholm determinants are related to systems of integrable differential equations \cite{TW94c}--\cite{TW93}. Note that kernel $K_{\textrm{PIII}}(u,v)$ in \eqref{eq:PIIIkernel} can be viewed as a generalization of the classical Bessel kernel $K_{\textrm{Bes}}(u,v)$ in \eqref{def:KBes} (see Remark \ref{remark-Bessel relation}). Let us recall some basic facts regarding the Bessel kernel.

Define $\mathcal{K}_{\textrm{Bes}}$ to be the integral operator with kernel $\chi_{[0,s]}(u) K_{\textrm{Bes}}(u,v) \chi_{[0,s]}(v)$ acting on the function space $L^2((0,\infty))$. The celebrated Tracy-Widom formula (see \cite[Equation (1.19)]{TW94}) is
\begin{equation}\label{eq:TWbessel}
  \det\left(I-\mathcal{K}_{\textrm{Bes}}\right) = \exp \left( -\frac{1}{4} \int_0^s \log\left(\frac{s}{\tau}\right) \, \mathfrak{q}^2(\tau) d\tau \right),
\end{equation}
where the function $\mathfrak{q}(\tau)$ satisfies the following equation
\begin{equation} \label{q-pv}
  \tau(\mathfrak{q}^2-1) (\tau\mathfrak{q}')' = \mathfrak{q}(\tau\mathfrak{q}')^2 + \frac{1}{4}(\tau-\alpha^2)\mathfrak{q} + \frac{1}{4} \tau \mathfrak{q}^3 (\mathfrak{q}^2 -2)
\end{equation}
with the boundary condition
\begin{equation} \label{q-pv-bc}
  \mathfrak{q}(\tau) \sim \frac{1}{2^\alpha \Gamma(1+\alpha)} \tau^{\alpha/2}, \qquad \textrm{as } \tau \to 0^+.
\end{equation}
Note that, the equation \eqref{q-pv}  reduces to a special case of the Painlev\'e V equation via the transformation $y(\tau) = (q(\tau^2 ) -1)/(q(\tau^2 )+ 1)$. Furthermore, the following large $s$ asymptotics of $\det(I-\mathcal{K}_{\textrm{Bes}})$ is conjectured in \cite{TW94}, and later rigorously proved in \cite{DKV11} (see also \cite{Bothner16}):
\begin{equation}\label{eq:asyBesDet}
\ln \det(I-\mathcal{K}_{\textrm{Bes}}) = -\frac{1}{4}s+ \alpha s^{1/2}-\frac{\alpha^2}{4}\ln s
+  \tau_{\alpha}+ O\left(s^{-1/2}\right),\quad \textrm{as } s \to +\infty,
\end{equation}
where the constant $ \tau_{\alpha}$ is given by
\begin{equation}\label{def:tau}
\tau_\alpha=\ln \left(\frac{G(1+\alpha)}{(2\pi )^{\alpha/2}} \right)
\end{equation}
with $G(z)$ being the Barnes $G$-function.

It would then be natural to derive the large $s$ asymptotics of $\det(I-\mathcal{K}_{\textrm{PIII}})$ and to ask for its Painlev\'{e} type formula, which will be the main results of the present work stated in what follows.  We note that similar problems have been addressed for other generalizations of Bessel kernel recently in \cite{CGS,Girotti14,Stro14,Zhang17}.


\section{Statement of results}


\subsection{Large gap asymptotics}

Our first result is the following large $s$ asymptotics of the gap probability.

\begin{thm}\label{thm:asy}
For $\alpha>-1$, $s>0$ and $\lambda>0$, let $\mathcal{K}_{\textrm{PIII}}$ be the integral operator with kernel $\chi_{[0,s]}(u) K_{\textrm{PIII}}(u,v) \chi_{[0,s]}(v)$ in \eqref{eq:PIIIkernel} acting on the function space $L^2((0,\infty))$ and denote
\begin{equation}\label{def:Fnotation}
F(s;\lambda):=\ln \det(I-\mathcal{K}_{\textrm{PIII}}).
\end{equation}
Then, we have, as $s\to +\infty$,
\begin{equation}\label{eq:sasylndet}
F(s;\lambda) = -\frac{1}{4}s+ \alpha s^{1/2} -\frac{\alpha^2}{4}\ln s
+\int_0^\lambda \frac{1}{2t} \left(r(t)+\frac{\alpha^2}{2}-\frac{1}{8}\right) dt+  \tau_{\alpha}+ O\left(s^{-1/2} \right),
\end{equation}
where the function $r(t)$ is related to the Painlev\'{e} III hierarchy as stated in Proposition \ref{prop:PIII} and the constant $\tau_{\alpha}$ is given in \eqref{def:tau}.
\end{thm}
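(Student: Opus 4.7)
The plan is to attack $F(s;\lambda)$ through a Riemann-Hilbert analysis of the Fredholm determinant, combined with a one-parameter deformation in $\lambda$ that reduces the problem to the already understood Bessel case at $\lambda=0$. Since $K_{\textrm{PIII}}$ is an integrable kernel in the sense of Its, Izergin, Korepin and Slavnov, its resolvent is encoded in a $2\times 2$ matrix RH problem for a function $Y(z)=Y(z;s,\lambda)$ whose jump is supported on $(0,s)$ and built from $\psi_1(-z;\lambda)$ and $\psi_2(-z;\lambda)$. Standard manipulations give a closed-form differential identity
\[
\partial_\lambda F(s;\lambda)=-\mathrm{tr}\bigl[(I-\mathcal{K}_{\textrm{PIII}})^{-1}\partial_\lambda \mathcal{K}_{\textrm{PIII}}\bigr],
\]
which, after contour deformation, becomes expressible solely in terms of large-$z$ expansion coefficients of $Y$ and of $\Psi(\cdot;\lambda)$.

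Next I would carry out a Deift-Zhou steepest descent analysis on $Y$ as $s\to+\infty$. After rescaling $z\mapsto sz$ and introducing the standard hard-edge $g$-function $g(z)=2\sqrt{z}$, a sequence of transformations $Y\to T\to S$ (including opening of lenses) leads to a global parametrix $P^{(\infty)}$, a soft-edge local parametrix at the right endpoint built from Bessel functions, and a hard-edge local parametrix $P^{(0)}$ near $z=0$ constructed directly from $\Psi(z;\lambda)$ of RH problem \ref{rhp:Psi}, which is designed precisely to match the required jumps and endpoint behavior. Matching $P^{(\infty)}$ with $P^{(0)}$ on the circle around the origin yields a small-norm RH problem $R=I+O(s^{-1/2})$, and the first subleading term of $R$ is computable explicitly in terms of $p(\lambda)$, $q(\lambda)$, and in particular $r(\lambda)$ from \eqref{eq:Psi-infty}.

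Substituting this expansion back into the differential identity should produce, uniformly for $\lambda$ in compact subsets of $(0,+\infty)$,
\[
\partial_\lambda F(s;\lambda)=\frac{1}{2\lambda}\Bigl(r(\lambda)+\frac{\alpha^2}{2}-\frac{1}{8}\Bigr)+O\!\left(s^{-1/2}\right),\qquad s\to+\infty.
\]
Integrating in $\lambda$ from a small $\lambda_1>0$ up to $\lambda$ and then letting $\lambda_1\to 0^+$ is legitimate because the small-$\lambda$ behavior \eqref{eq:rsmall} forces the integrand to be $O(t^{k-1})$ near zero, hence integrable. On the other hand, the Bessel limit \eqref{eq:PIIItoBesl} gives $F(s;\lambda_1)\to \ln\det(I-\mathcal{K}_{\textrm{Bes}})$, and comparing with the known large-$s$ expansion \eqref{eq:asyBesDet} furnishes the linear, square-root, logarithmic and constant ($\tau_\alpha$) contributions in $s$. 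Combining these two ingredients yields the announced formula \eqref{eq:sasylndet}.

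The main obstacle will be establishing uniformity of the $O(s^{-1/2})$ remainder with respect to $\lambda$ on compact subsets of $(0,+\infty)$ and, crucially, controlling the interchange of the limits $s\to\infty$ and $\lambda_1\to 0^+$ used to transfer the additive constant from the Bessel endpoint. This requires showing that the local parametrix $P^{(0)}$ inherits continuity and boundedness in $\lambda$ from the well-posedness of RH problem \ref{rhp:Psi}, and that as $\lambda\to 0^+$ the solution $\Psi(z;\lambda)$ degenerates smoothly to the Bessel model in a way compatible with the steepest descent transformations. A careful bookkeeping of the matching error on the local circle around the origin, together with a uniform estimate of the small-norm RH problem in the relevant parameter range, will be the technical heart of the argument.
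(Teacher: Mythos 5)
Your broad strategy — relate the Fredholm determinant to an integrable-kernel Riemann--Hilbert problem, carry out a Deift--Zhou analysis in $s$, and pin the constant via the Bessel degeneration — is the same as the paper's. However, there are two genuine gaps in the way you propose to carry it out.

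\textbf{First gap: the differential identity in $\lambda$.} You assert that $\partial_\lambda F(s;\lambda)=-\mathrm{tr}\bigl[(I-\mathcal{K}_{\mathrm{PIII}})^{-1}\partial_\lambda\mathcal{K}_{\mathrm{PIII}}\bigr]$ becomes, after contour deformation, ``expressible solely in terms of large-$z$ expansion coefficients of $Y$ and of $\Psi$.'' This is precisely the step the paper shows does \emph{not} go through. Differentiating the kernel $\widetilde{K}_{\mathrm{PIII}}$ in $\lambda$ (using the Lax pair for $\Psi$, where the $z$-coefficient matrix has a pole of order $k$ at the origin) produces a $1/(uv)$ prefactor — the paper displays the explicit $k=1$ formula — so the trace involves weighted integrals of the resolvent at the origin, not just the residue of $Y$ at infinity. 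To get a clean rank-one identity, the paper introduces the rescaled function $\Phi(z;\lambda)=(\cdot)\,\Psi(\lambda^2 z;\lambda^2)$, for which $\partial_\lambda\Phi=\begin{pmatrix}0&1\\z-u(\lambda)&0\end{pmatrix}\Phi$ is polynomial in $z$, and correspondingly studies $\widehat{X}$ and the total derivative $\frac{d}{d\lambda}F(\lambda^2 s;\lambda^2)$ rather than the partial $\partial_\lambda F(s;\lambda)$. Without this reformulation your plan stalls at the very first identity.

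\textbf{Second gap: skipping $\partial_s F$.} You propose to recover the leading $s$-dependence of $F$ entirely from the Bessel endpoint by writing $F(s;\lambda)=F(s;\lambda_1)+\int_{\lambda_1}^{\lambda}\partial_t F(s;t)\,dt$ and letting $\lambda_1\to 0^+$. You correctly observe that this requires interchanging $\lambda_1\to 0^+$ with $s\to\infty$, and that the $O(s^{-1/2})$ remainder is only controlled uniformly on compact subsets of $(0,\infty)$ in $t$; but you do not resolve this, and it is not merely a technicality — $F(s;\lambda_1)$ diverges as $s\to\infty$, so the constant one is after is hidden under divergent terms. The paper avoids this by computing the large-$s$ asymptotics of $\partial_s F(s;\lambda)$ directly from the RH problem $X$ (Section~3 of the paper), which produces the $-\tfrac14 s+\alpha s^{1/2}-\tfrac{\alpha^2}{4}\ln s$ terms unconditionally, leaving only a single $s$- and $\lambda$-independent constant to be identified via the Bessel limit. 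Indeed, the paper even needs $\partial_s F$ to extract $F_\lambda$ from the total derivative $\frac{d}{d\lambda}F(\lambda^2 s;\lambda^2)=2\bigl(\lambda s\,F_s+\lambda F_\lambda\bigr)$, so the $s$-derivative analysis cannot be bypassed.

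A smaller point: in the $s\to\infty$ analysis the paper does \emph{not} build a local parametrix at the origin out of $\Psi$. After rescaling $z\mapsto sz$ the essential singularity $e^{(-1)^{k+1}(\lambda/(sz))^k\sigma_3}$ degenerates, and the paper absorbs it into a modified $g$-function (see \eqref{def:g}) so that only the global parametrix $N$ is needed near the origin; the sole local parametrix sits at $z=-1$ and is the order-zero Bessel model. Your proposed $\Psi$-based parametrix at the origin would not match the required jumps of the transformed problem and is unnecessary once the $g$-function is chosen correctly.
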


In view of the local behavior of $r(\lambda)$ near the origin given in \eqref{eq:rsmall}, the integral in \eqref{eq:sasylndet} is well-defined. Moreover, as $\lambda\to 0^+$, we recover the large $s$ asymptotics of
$\ln \det(I-\mathcal{K}_{\textrm{Bes}})$ shown in \eqref{eq:asyBesDet}. This is compatible with the fact  \eqref{eq:PIIItoBesl}, and also explains why the two constant terms in \eqref{eq:asyBesDet} and \eqref{eq:sasylndet} are the same.

\begin{remark}
In the literature, the asymptotics of Fredholm determinants of other Painlev\'{e} kernels have been
investigated in \cite{CIK10} for the kernels built in terms of the Painlev\'{e} I hierarchy, in \cite{Bot:Its2014} for the  Painlev\'{e} II kernel associated with the Hastings-McLeod solution, and quite recently in \cite{XD2017} for the
Painlev\'e XXXIV kernel. All these Painlev\'{e} kernels describe certain critical behaviors encountered in random matrix theory.
\end{remark}

\begin{remark}
We would like to point out that our asymptotic formula \eqref{eq:sasylndet} provides another concrete example that supports the so-called Forrester-Chen-Eriksen-Tracy conjecture \cite{CET95,Forrester93}. This conjecture asserts that if the mean density $\rho(x)$ of particles around a point $x^\ast$ behaves like $(x-x^\ast)^\beta$, then, for the probability $E(s)$ of emptiness of the interval $(x^\ast-s,x^\ast+s)$, we have
\begin{equation}\label{eq:conjecture}
E(s) \sim \exp\left(-Cs^{2\beta+2}\right).
\end{equation}

When $E(s)$ admits a representation in terms of a Fredholm determinant associated with certain kernel, this conjecture is supported by all the classical kernels encountered in random matrix theory, which include the sine kernel \cite{Dyson} ($\beta=0$), the Airy kernel \cite{TW94a} ($\beta=1/2$) and the Bessel kernel \eqref{eq:asyBesDet} ($\beta=-1/2$). Other evidences beyond these classical cases include higher Painlev\'{e} I kernels \cite{CIK10} ($\beta=2l+1/2$, $l\in\mathbb{N}$), the Painlev\'{e} II kernel \cite{Bot:Its2014} ($\beta=2$), the Meijer G-kernels ($\beta=-\frac{l}{l+1}$, $l\in\mathbb{N}$) and Wright's generalized Bessel kernels ($\beta=-\frac{1}{1+\theta}$, $\theta>0$) investigated in \cite{CGS}.

For the model \eqref{eq:probmeasure} considered in the present work, the limiting mean distribution of the eigenvalues as $n\to\infty$ and $t\to 0^+$ is the same as the unperturbed case, which could be described as the minimizer of an equilibrium problem with external field $V$. Due to our assumption on the regularity of the potential $V$, the density function behaves like $x^{-1/2}$ near the origin. Hence, the asymptotic formula \eqref{eq:sasylndet} indeed supports \eqref{eq:conjecture} with $\beta=-1/2$.

Finally, we highlight that the $s$-independent term in \eqref{eq:sasylndet} involves an integral with respect to certain Painlev\'{e} transcendents. This phenomena was first observed in \cite{Bot:Its2014} in the studies of Painlev\'{e} II kernel associated with the Hastings-McLeod solution, and is also confirmed in recent work \cite{XD2017} concerning the
Painlev\'e XXXIV kernel. We expect this feature is true for the large $s$ asymptotics associated with any other Painlev\'{e} kernels.
\end{remark}

\subsection{A coupled Painlev\'{e} III system}
To express a Painlev\'{e} type formula for the gap probability, we need a function $a(\lambda; s)$. This function, together with the other $k+1$ functions $b_1(\lambda;s), \ldots, b_{k+1}(\lambda;s)$, satisfies the following system of coupled ODEs involving $k+2$ equations:
\begin{equation} \label{coupled PIII}
\begin{cases}
 (\lambda -2b_1) b_{1}'' + \left(b_{1}' - \frac{1}{2}\right)^2 - (2a' - s) (\lambda -2b_1)^2 = 0, \\
  \ds\sum_{m=j-k-1}^{k+1} \left( -b_{j-m} b_{m}'' + \frac{b_{j-m}' b_{m}'}{2}  + 4 a' b_{j-m} b_{m} + 2 b_{j-m} b_{m+1} \right) = \Lambda_j, & k+2\leq j \leq 2k+2,
\end{cases}
\end{equation}
where $'=\frac{d}{d\lambda}$, $b_{k+2}(\lambda;s) := 0$ and the constants $\Lambda_j$ are given by
\begin{equation}
  \Lambda_j = \begin{cases}
    2k^2, & j = 2k+2, \\
    (-1)^k 2 \alpha k, & j= k+2, \\
    0, & k+2<j<2k+2.
  \end{cases}
\end{equation}
In addition, we also have
\begin{equation}\label{eq:a-b}
 \frac{\partial}{\partial s}a(\lambda;s)=\frac{\lambda}{2} - b_1(\lambda;s).
\end{equation}

We call the system \eqref{coupled PIII} a coupled Painlev\'{e} III system for the following reason. When $k=1$, the ODEs in \eqref{coupled PIII} reduce to
  \begin{equation} \label{coupled PIII:k=1}
    \begin{cases}
     (\lambda -2b_1) b_{1}'' + \left(b_{1}' - \frac{1}{2}\right)^2 - (2a' - s) (\lambda -2b_1)^2 = 0, \\
     -b_{2} b_{2}'' + \frac{1}{2} (b_{2}')^2 + 4 a' b_{2}^2 = 2,  \\
     b_1 b_{2}'' - b_{1}'' b_2 - \frac{b_1 (b_{2}')^2}{b_2} + b_{1}' b_2' + 2b_2^2 + 2\alpha + \frac{4b_1}{b_2}  =0.
    \end{cases}
  \end{equation}
Set $b_1 = \lambda /2$, the first equation in \eqref{coupled PIII:k=1} is satisfied automatically, while the third equation for $b_2$ now reads
  \begin{equation}
    b_2'' - \frac{ (b_2')^2}{b_2} + \frac{1}{\lambda} \left ( b_2' + 4 b_2^2 + 4\alpha \right) + \frac{4}{b_2}  = 0,
  \end{equation}
which is a Painlev\'e III equation.

For general $k\in\mathbb{N}$, if $s=0$, there exist solutions $a(\lambda)$ and $b_i(\lambda)$, $i=1,\dots,k+1$, to the system \eqref{coupled PIII} with $b_1 = \lambda /2$ (see Remark \ref{remark-b1} below). It is then straightforward to check that the ODEs in \eqref{coupled PIII} are related to the Painlev\'{e} III hierarchy \eqref{eq:PIIIequ} through the following relations
  \begin{equation}
  \begin{cases}
     a'(\lambda) = - \varrho(\lambda )/2, \\
    b_{j+1}(\lambda) = \ell_j(\lambda)/4^j, \qquad  j=0, \cdots, k.
    \end{cases}
    \end{equation}

Our second result concerns the existence of a class of special solutions to the above
coupled Painlev\'{e} III system.

\begin{thm} \label{thm:cp-solutions}
For $\alpha>-1$, $s>0$, there exist solutions  $a(\lambda;s)$ and $b_1(\lambda;s),\ldots,b_{k+1}(\lambda;s)$ to the coupled  Painlev\'{e} III system  \eqref{coupled PIII}, which are analytic for $\lambda\in(0,+\infty)$ and the function $a(\lambda;s)$ satisfies the following asymptotic behaviors as $\lambda\to 0^{+}$:
\begin{align}
a(\lambda;s)&=\frac{1-4\alpha^2}{8\lambda}+O\left(\lambda^{1+2\alpha}\right). \label{eqr: thm-asy}
\end{align}
\end{thm}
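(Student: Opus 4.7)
The plan is to realize the functions $a$ and $b_1,\ldots,b_{k+1}$ as specific entries in the asymptotic expansion at infinity of the solution to an auxiliary Riemann--Hilbert problem that encodes both $\Psi$ and the truncation of $\mathcal{K}_{\textrm{PIII}}$ to $(0,s)$, to obtain the coupled system \eqref{coupled PIII} from the compatibility of the corresponding Lax pair in $z$ and $\lambda$, and to derive the boundary behavior \eqref{eqr: thm-asy} from the degeneration of this RH problem to the Bessel case as $\lambda\to 0^+$.

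First I would apply the Its--Izergin--Korepin--Slavnov integrable-operator formalism to $K_{\textrm{PIII}}(u,v;\lambda)\chi_{[0,s]}$. Since $K_{\textrm{PIII}}$ is constructed from the first column of $\Psi$ evaluated at $-u$, the resulting $2\times 2$ RH problem, say $Y(z)$, has an additional jump on $(-s,0)$; the combination $X(z;\lambda,s):=Y(z)\Psi(z;\lambda)^{-1}$ retains the essential-singularity structure $\Psi_0(\lambda)(I+O(z))e^{-(-\lambda/z)^k\sigma_3}z^{(\alpha/2)\sigma_3}$ near the origin inherited from \eqref{eq:Psi-origin}, together with an asymptotic expansion at infinity of the same form as \eqref{eq:Psi-infty},
\begin{equation*}
X(z;\lambda,s)=\Bigl(I+\sum_{n\geq 1}\frac{X_n(\lambda,s)}{z^n}\Bigr)z^{-\sigma_3/4}\frac{I+i\sigma_1}{\sqrt 2}e^{z^{1/2}\sigma_3}.
\end{equation*}
Existence and uniqueness of $X$ for all $\lambda,s>0$ follow from the standard vanishing lemma, the zero-index condition being guaranteed by invertibility of $\Psi$ and the integrable structure of the jump on $(-s,0)$.

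Next I would define $a(\lambda;s)$ and $b_j(\lambda;s)$ as specific entries of $X_1,\ldots,X_{k+1}$, with the identification pinned down so that at $s=0$ one recovers $a'(\lambda)=-\varrho(\lambda)/2$ and $b_{j+1}(\lambda)=\ell_j(\lambda)/4^j$, consistently with Proposition~\ref{prop:PIII}. Both $\partial_z X\,X^{-1}$ and $\partial_\lambda X\,X^{-1}$ are rational in $z$ with poles only at $0$ and $\infty$, of orders dictated respectively by the $z^{1/2}$ exponent at infinity and by $(-\lambda/z)^k$ at the origin. Matching coefficients of the two Lax equations order by order in $z$ produces a finite closed system for the entries of $X_1,\ldots,X_{k+1}$; tracing through the explicit values of the constants corresponding to $\tau_0$ and $\tau_k$ in Proposition~\ref{prop:PIII}, and converting them into the constants $\Lambda_j$ of \eqref{coupled PIII}, produces exactly the coupled system \eqref{coupled PIII}. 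Differentiating the jump of $X$ on $(-s,0)$ with respect to $s$ yields the compatibility relation \eqref{eq:a-b}.

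Finally, the asymptotic \eqref{eqr: thm-asy} follows from a uniform small-$\lambda$ analysis. By Remark~\ref{remark-Bessel relation} and \eqref{eq:PIIItoBesl}, as $\lambda\to 0^+$ the RH problem for $\Psi$ reduces uniformly on compact subsets of $\mathbb{C}\setminus\{0\}$ to the Bessel model of \cite{KMVV2004}, so that $X(z;\lambda,s)$ converges to the IIKS parametrix for the Bessel Fredholm determinant. The leading singular term in $a(\lambda;s)$ at $\lambda=0^+$ is produced by the residue of this limit together with the contribution encoded in \eqref{eq:rsmall}: the constant $(1-4\alpha^2)/8$ appearing there is precisely the coefficient seen in \eqref{eqr: thm-asy}, while the $\lambda^{1+2\alpha}$ error reflects the next-order term in the local expansion \eqref{eq:Psi-origin} dictated by the $z^{(\alpha/2)\sigma_3}$ factor. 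The main obstacle is the combinatorial bookkeeping involved in identifying the correct entries of $X_1,\ldots,X_{k+1}$ so that the Lax-pair compatibility produces the exact constants $\Lambda_j$, together with the uniform asymptotic analysis of $X$ down to $\lambda=0^+$ which must isolate the precise subleading exponent $\lambda^{1+2\alpha}$ rather than merely $O(\lambda)$ or $O(\lambda^k)$.
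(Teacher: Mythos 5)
Your overall strategy — build a Riemann–Hilbert problem with constant jumps from $\Psi$ and the truncated kernel, read off a Lax pair, derive the coupled system from compatibility, then extract small-$\lambda$ asymptotics from the Bessel degeneration — is the same broad route the paper takes. But several steps in your plan are either misidentified or would fail as stated.

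First, the definitions of $a$ and $b_1,\ldots,b_{k+1}$ are not what you propose. In the paper, $a(\lambda;s)=(M_\infty)_{12}$ is a single entry of the first coefficient at infinity of a \emph{rescaled} RH solution $M$, not of $X$, and the $b_j$ are \emph{not} entries of $X_1,\ldots,X_{k+1}$ at infinity. They arise as $b_j=(A_j)_{12}$, where $A_j$ are the residues at $z=0$ and $z=-s$ of the $z$-Lax operator $A=\partial_z M\,M^{-1}$. Your plan never introduces the rescaling $M(z;\lambda,s)=(\,\cdot\,)\,\lambda^{\sigma_3/2}e^{\pi i\sigma_3/4}X(\lambda^2 z;\lambda^2,\lambda^2 s)$ that turns the $(\lambda/z)^k$ essential singularity of $X$ at the origin into a $\lambda$-independent one, which is what makes the $\lambda$- and $s$-Lax operators polynomial and the residue structure at $z=-s$ and $z=0$ clean. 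Without this step the Lax pair you would obtain for $X$ is not the one whose compatibility yields \eqref{coupled PIII}. Also note $X=Y\Psi$, not $Y\Psi^{-1}$.

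Second, "matching coefficients of the two Lax equations" is not enough, and "converting $\tau_0,\tau_k$ into $\Lambda_j$" misreads the structure. The first equation of \eqref{coupled PIII} does not come from coefficient matching at all; it comes from the identity $\det A_0=0$, which in turn follows from the rank-one residue $A_0=\tfrac{1}{2\pi i}M_0^{(s)}\sigma_+\bigl(M_0^{(s)}\bigr)^{-1}$ created by the logarithmic singularity of $M$ at $z=-s$. The remaining $k+1$ equations come from computing the Laurent expansion of $\det A(z;\lambda,s)$ at $z=0$ using the essential-singularity structure of $M$ there. Your proposal uses no determinant identity, so it cannot produce the constants $\Lambda_j$. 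Moreover, the constants $\tau_p$ of Proposition \ref{prop:PIII} parametrize the $s=0$ degeneration (where $A_0\equiv 0$ and $b_1=\lambda/2$); for $s>0$ the system \eqref{coupled PIII} is genuinely coupled and not a reparametrization of the hierarchy, so you cannot import $\Lambda_j$ from the $\tau_p$.

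Third, the leading constant in \eqref{eqr: thm-asy} does not come from \eqref{eq:rsmall}. That formula gives the small-$\lambda$ behavior of $r(\lambda)$ (a datum of $\Psi$, with error $O(\lambda^k)$), not of $a(\lambda;s)$, and the error rates do not match. In the paper, $a(\lambda;s)=i\lambda^{-1}(U_\infty)_{12}$, and the $(1-4\alpha^2)/(8\lambda)$ term is read directly from the $(1,2)$ entry of the refined large-$z$ expansion \eqref{eq:Phi-Bessel-infty} of the Bessel outer parametrix $\Upsilon$. The error exponent $\lambda^{1+2\alpha}$ comes from the bound $Q(z)=I+O\bigl(\lambda^{2(1+\alpha)}\bigr)$ obtained from the local parametrix $P_0$ at the origin, whose jump estimate is controlled by the Cauchy integral $f(z;\lambda)-h(z)$; the $\lambda^{2(1+\alpha)}$ comes from integrating $|x|^\alpha$ over an interval of length $\lambda^2 s$, and dividing by $\lambda$ gives $1+2\alpha$. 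You gesture at the $z^{(\alpha/2)\sigma_3}$ factor, which is related but does not on its own produce this exponent; the integral representation $f(z;\lambda)$ and the resulting $O(\lambda^{2(1+\alpha)})$ bound on $\partial D_0$ are essential and missing from your account.
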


In principle, we could also derive the asymptotics of other functions $b_{i}(\lambda;s)$, $i=1,\ldots,k+1$, for the special solutions in Theorem \ref{thm:cp-solutions}. For instance, we have
\begin{align}\label{eq:asyb1}
b_1(\lambda;s)&= \frac{\lambda}{2} -
\frac{s^{\alpha} \lambda^{2\alpha+1}}{2^{2\alpha+1}  \, \Gamma(\alpha+1)^2}e^{-\frac{2}{s^k}}
\left(1+O\left(\lambda^{2(1+\alpha)}\right)\right), \quad \lambda\to 0^{+}.
\end{align}
Since it is the function $a(\lambda;s)$ that will play a role in our Painlev\'{e} type formula below, we will not discuss their asymptotics in this paper.

\subsection{A Painlev\'{e} type formula of the gap probability}
Our final result shows the existence of a Painlev\'{e} type formula of the gap probability, after proper scaling.

\begin{thm}\label{thm:TW}
With the logarithm of the gap probability $F(s;\lambda)$ defined in \eqref{def:Fnotation}, we have
\begin{equation}\label{eq:TW}
F\left(\lambda^2s;\lambda^2\right) = -\int_{0}^{\lambda}[a(\tau;s)-a(\tau;0)]d\tau,
\end{equation}
where the function $a(\lambda;s)$ is among the class of special solutions to the coupled Painlev\'{e} III system \eqref{coupled PIII} as stated in Theorem \ref{thm:cp-solutions}.

%
\end{thm}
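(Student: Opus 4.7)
The plan is to establish \eqref{eq:TW} by computing the $\lambda$-derivative of both sides, matching them via a differential identity obtained from a Riemann--Hilbert analysis of the Fredholm determinant, and fixing the constant of integration at $\lambda=0$. The first step is to realize $\det(I-\mathcal{K}_{\textrm{PIII}})$ through the Its--Izergin--Korepin--Slavnov construction as a $2\times 2$ RH problem $Y=Y(z;s,\lambda)$ on an interval, with jump matrix built from the functions $\psi_1(-z;\lambda)$ and $\psi_2(-z;\lambda)$ of \eqref{eq:PIIIkernel}. Rescaling $z\mapsto \lambda^2 z$ normalizes the endpoint and produces an RH problem $\widetilde Y$ whose jumps and endpoint are governed by the single parameter $s$ but whose $\Psi$-ingredients depend on $\lambda$. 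Standard manipulations on the Fredholm side then give two differential identities, one for $\partial_s F(\lambda^2 s;\lambda^2)$ and one for $\partial_\lambda F(\lambda^2 s;\lambda^2)$, each expressible in terms of the coefficients of the large-$z$ expansion of $\widetilde Y$.

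The second step is to undress $\widetilde Y$ by a sequence of transformations $\widetilde Y \to T \to S \to R$ (in the Deift--Zhou steepest descent style, and parallel to the analysis leading to RH problem \ref{rhp:Psi}), producing an RH problem normalized at infinity whose expansion coefficients I identify with $a(\lambda;s)$ and $b_1(\lambda;s),\ldots,b_{k+1}(\lambda;s)$. Using the Lax-pair structure coming from the $\lambda$-dependence of the $\Psi$-function together with the local parametrix construction, one obtains the coupled Painlev\'{e} III system \eqref{coupled PIII} for these coefficients --- which is precisely the content of Theorem \ref{thm:cp-solutions}, taken as input here --- and, crucially, the differential identity
\begin{equation*}
\frac{d}{d\lambda} F(\lambda^2 s;\lambda^2) = -\bigl[a(\lambda;s)-a(\lambda;0)\bigr].
\end{equation*}
The appearance of the subtraction $a(\lambda;0)$ is the key point: the $s$-dependent part $-a(\lambda;s)$ arises from differentiating through the jumps of $\widetilde Y$, while the counterterm $+a(\lambda;0)$ reflects the contribution of the $\lambda$-deformation of the pure $\Psi$-function (equivalently the limiting Bessel contribution at $s=0$), and must be subtracted to isolate the gap-probability piece. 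The compatibility between $\partial_s F$ and the identity \eqref{eq:a-b}, which states $\partial_s a(\lambda;s) = \lambda/2 - b_1(\lambda;s)$, provides an internal consistency check that the coefficient we are extracting is indeed the same $a(\lambda;s)$ that appears in the coupled system.

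The third step is to integrate in $\lambda$ from $0$ to $\lambda$. The boundary value $F(0;0)=0$ is immediate since the interval collapses, so the only remaining analytic point is the integrability of the integrand at the lower endpoint: the asymptotics \eqref{eqr: thm-asy} shows that $a(\tau;s)$ and $a(\tau;0)$ both have a $\frac{1-4\alpha^2}{8\tau}$ singularity with identical coefficient, so their difference is bounded (in fact $O(\tau^{1+2\alpha})$) as $\tau\to 0^+$. This yields exactly \eqref{eq:TW}.

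The main obstacle is expected to be the careful bookkeeping in the second step: identifying, among the many entries of the asymptotic expansion of the transformed RH solution, the precise scalar combination that reproduces $a(\lambda;s)-a(\lambda;0)$, and tracking the cancellations between the $\Psi$-function's own $\lambda$-dependence and the $\lambda$-dependence introduced by the rescaling $s\mapsto \lambda^2 s$. Once this identification is made and the coupled Painlev\'{e} III structure is in place, integration and the boundary analysis proceed without difficulty.
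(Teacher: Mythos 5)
Your approach is genuinely different from the paper's. The paper proves \eqref{eq:TW} by working with the $s$-derivative: starting from \eqref{eq:derivative-s-M}, it uses the Lax pair equation \eqref{Lax pair-M-lambda} together with the local expansion of $M$ at $z=-s$ to obtain $\frac{d}{d\lambda}(M_1^{(s)})_{21}=(M_0^{(s)})_{11}^2=2\pi i(\lambda/2-b_1(\lambda;s))$, integrates over $\lambda$ to get $F_s(\lambda^2 s;\lambda^2)=-\lambda^{-2}\int_0^\lambda(\tau/2-b_1(\tau;s))\,d\tau$, converts this to $-\lambda^{-2}\int_0^\lambda a_s(\tau;s)\,d\tau$ via \eqref{eq:a-b}, and finally integrates over $s$. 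You instead work with the $\lambda$-derivative, and the route you sketch is viable: the paper's Section 4 formula \eqref{eq:lamdaderivativef2} gives $\partial_\lambda F(\lambda^2 s;\lambda^2)=v(\lambda)-(\widehat{X}_\infty)_{12}$, and this equals $-[a(\lambda;s)-a(\lambda;0)]$ once you observe that (i) $(\widehat{X}_\infty)_{12}=(M_\infty)_{12}=a(\lambda;s)$, because both $M(z;\lambda,s)$ and $\widehat{X}(z;\lambda,s)$ factor as a lower-triangular constant times $\lambda^{\sigma_3/2}e^{\pi i\sigma_3/4}X(\lambda^2 z;\lambda^2,\lambda^2 s)$, and the lower-triangular prefactors cancel against the leading constant when the residue at infinity is extracted; and (ii) $v(\lambda)=a(\lambda;0)$, because at $s=0$ the operator $Y$ is the identity, so $M(\cdot;\lambda,0)$ and $\Phi(\cdot;\lambda)$ have the same residue at infinity. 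These two identifications are not spelled out in the paper and must be justified; once they are, integrating in $\lambda$ from $0$ with the boundary value $F(0;0)=0$ and using \eqref{eqr: thm-asy} for integrability gives \eqref{eq:TW}. One caveat about your exposition: you attribute the differential identity to ``the Lax-pair structure together with the local parametrix construction,'' but it actually comes from the IIKS-type derivative formula (i.e., the paper's \eqref{eq:lamdaderivativef2}) combined with the RH-problem comparison just described; the Lax pair is what yields the coupled Painlev\'e III system and the relation \eqref{eq:a-b}, not this identity. Your route is arguably shorter than the paper's once the Section-4 machinery for $\partial_\lambda F$ is in place, whereas the paper's route is self-contained within the Lax-pair analysis of $M$ and does not invoke the residue formula for $\partial_\lambda F$ at all.
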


By \eqref{eqr: thm-asy}, it is readily seen that the integral in \eqref{eq:TW} is convergent. We also emphasize that the formula \eqref{eq:TW} is consistent with the Tracy-Widom formula for Bessel kernel in the sense that it boils down to the logarithm of \eqref{eq:TWbessel} as $\lambda \to 0^+$. For convenience of the reader, we will discuss this aspect in the Appendix.

\begin{remark}
It is worthwhile to note that the relations between Fredholm determinants and other coupled Painlev\'e systems have also been established in recent studies, which generalize the classical results of Tracy and Widom \cite{TW94,TW94a}. More precisely, the coupled Painlev\'e II systems have been related to the generating function for the Airy point process in \cite{Claeys:Doe},
to the Fredholm determinants of the Painlev\'e II and the Painlev\'e XXXIV kernel in \cite{XD2017}.
A coupled Painlev\'e V system has been related to the generating function for
the Bessel point process in \cite{Char:Doera2017}.
\end{remark}

\subsection{Organization of the rest of the paper}
The rest of this paper is devoted to the proofs of our results. Due to the integrable structure of the kernel $K_{\textrm{PIII}}(u,v)$ in the sense of Its-Izergin-Korepin-Slavnov \cite{IIKS90}, the proof of Theorem \ref{thm:asy} relies on the representations of partial derivatives of Fredholm determinant in terms of the solutions of certain RH problems, established under a general framework in \cite{Bor:Dei2002,DIZ97}. In Section \ref{sec:sderivative}, we formulate a RH problem $X$ with constant jumps related to $\frac{\partial}{\partial s} F(s;\lambda)$ and then perform a Deift-Zhou steepest descent analysis \cite{DZ93} on this RH problem, which ultimately leads to the large $s$ asymptotics of $\frac{\partial}{\partial s} F(s;\lambda)$. Similarly, we study a RH problem $\widehat{X}$ associated with $\frac{\partial}{\partial \lambda} F\left(\lambda^2s;\lambda^2\right)$ and its large $s$ asymptotics in Section \ref{sec:l-derivative}. The differences between the RH problems for $X$ and $\widehat X$ lie in the boundary conditions. This asymptotics, together with the asymptotics of $\frac{\partial}{\partial s} F(s;\lambda)$, will lead to large $s$ asymptotics of $\frac{\partial}{\partial \lambda} F(s;\lambda)$. A combination of these two asymptotics of partial derivatives of $F(s;\lambda)$ gives us the proof of Theorem \ref{thm:asy}, as shown in Section \ref{sec:proofthmasy}. The derivation of the coupled Painlev\'{e} III system is given in Section \ref{sec:lax-pair}, where we formulate a model RH problem $M$ and the coupled Painlev\'{e} III system follows from the compatibility conditions of the Lax pair equations associated with this RH problem. The RH problem for $M$ is actually defined as a `scaled' version of that for $X$, and is also related to $\frac{\partial}{\partial s} F\left(\lambda^2s;\lambda^2\right)$. After performing an asymptotic analysis of this RH problem in Section \ref{sec:lambda-small}, we present the proofs of Theorems \ref{thm:cp-solutions} and \ref{thm:TW} in Section \ref{sec:proofoflastthm}. Finally, the consistency of \eqref{eq:TW} with the logarithm of \eqref{eq:TWbessel} is discussed in the Appendix.


\section{Large $s$ asymptotics of $\frac{\partial}{\partial s} F(s;\lambda)$ }
\label{sec:sderivative}

In this section, we will derive large $s$ asymptotics of $\frac{\partial}{\partial s} F(s;\lambda)$ by relating
$\frac{\partial}{\partial s} F(s;\lambda)$ to a RH problem. The strategy follows from a general framework established in \cite{Bor:Dei2002,DIZ97}. We first recall some basic facts in the context of our model.

\subsection{A Riemann-Hilbert setting}

To facilitate our computations, we shall make use of the fact that
\begin{equation}\label{eq:FtildeK}
\det(I-\mathcal{K}_{\textrm{PIII}})=\det\left(I-\widetilde{\mathcal{K}}_{\textrm{PIII}}\right),
\end{equation}
where $\widetilde{\mathcal{K}}_{\textrm{PIII}}$ is the integral operator with kernel
\begin{equation}\label{eq:tildePIII}
\chi_{[-s,0]}(u) \widetilde{K}_{\textrm{PIII}}(u,v)\chi_{[-s,0]}(v)=e^{\alpha \pi i }\chi_{[-s,0]}(u) \frac{\psi_1(v;\lambda)\psi_2(u;\lambda)  -\psi_1(u;\lambda)\psi_2(v;\lambda)}{2\pi i(u-v)}\chi_{[-s,0]}(v)
\end{equation}
acting on the function space $L^2((-\infty,0))$.

By \eqref{eq:tildePIII}, it is readily seen that
\begin{equation}
\widetilde{K}_{\textrm{PIII}}(u,v)= \frac{\vec{f}^t(u)\vec{h}(v)}{u-v},
\end{equation}
where
\begin{equation}\label{def:fh}
\vec{f}=-e^{\frac{\alpha \pi i}{2}}
\begin{pmatrix}
\psi_1 \\
\psi_2
\end{pmatrix}, \qquad
\vec{h}=\frac{e^{\frac{\alpha \pi i}{2}}}{2 \pi i}
\begin{pmatrix}
\psi_2 \\
-\psi_1
\end{pmatrix}.
\end{equation}
In addition, we have
\begin{equation}\label{eq:derivative1}
\frac{d}{ds} \ln \det\left(I-\widetilde{\mathcal{K}}_{\textrm{PIII}}\right)=-\textrm{tr}
\left(\left(I-\widetilde{\mathcal{K}}_{\textrm{PIII}}\right)^{-1}
\frac{d}{ds}\widetilde{\mathcal{K}}_{\textrm{PIII}}\right)=-R(-s,-s),
\end{equation}
where $R(u,v)$ stands for
the kernel of the resolvent operator
$$R=\left(I-\widetilde {\mathcal{K}}_{\textrm{PIII}}\right)^{-1}-I=\widetilde {\mathcal{K}}_{\textrm{PIII}}\left(I-\widetilde {\mathcal{K}}_{\textrm{PIII}}\right)^{-1}=\left(I-\widetilde {\mathcal{K}}_{\textrm{PIII}}\right)^{-1}\widetilde {\mathcal{K}}_{\textrm{PIII}}.$$
Since the operator $\widetilde {\mathcal{K}}_{\textrm{PIII}}$ is integrable, its resolvent kernel is integrable as well; cf. \cite{DIZ97,IIKS90}. Indeed, by setting
\begin{equation}\label{def:FH}
\vec{F}:=\left(I-\widetilde{\mathcal{K}}_{\textrm{PIII}}\right)^{-1}\vec{f}, \qquad \vec{H}:=\left(I-\widetilde{\mathcal{K}}_{\textrm{PIII}}\right)^{-1}\vec{h},
\end{equation}
we have
\begin{equation}\label{eq:resolventexpli}
R(u,v)=\frac{\vec{F}^t(u)\vec{H}(v)}{u-v}.
\end{equation}

It turns out that this resolvent kernel is related to the following RH problem.
\begin{rhp} \label{rhp:Y}
Let $s>0$ and $\lambda>0$. We look for a $2\times 2$ matrix-valued function
$Y(z)$ satisfying
\begin{enumerate}
\item[\rm (1)] $Y(z)$ is defined and analytic in $\mathbb{C}\setminus [-s,0]$, where the orientation is taken from the left to the right.

\item[\rm (2)] For $x\in(-s,0)$, we have
\begin{equation}\label{eq:Y-jump}
 Y_+(x)=Y_-(x)\left(I-2\pi i \vec{f}(x)\vec{h}^t(x)\right),
 \end{equation}
where $\vec{f}$ and $\vec{h}$ also depend on the parameter $\lambda$ by \eqref{def:fh}.

\item[\rm (3)] As $z \to \infty$,
\begin{equation}\label{eq:Y-infty}
 Y(z)=I+O(1/z).
 \end{equation}

\item[\rm (4)] We have
\begin{equation}\label{eq:Ylocal}
 Y(z)=\left\{
        \begin{array}{ll}
          O(1), & \hbox{as $z\to 0$,} \\
          O(\ln(z+s)), & \hbox{as $z \to -s$.}
        \end{array}
      \right.
 \end{equation}
 \end{enumerate}
\end{rhp}
By \cite{DIZ97}, it follows that
\begin{equation} \label{Y-def}
  Y(z)=I-\int_{-s}^0\frac{\vec{F}(w)\vec{h}^t(w)}{w-z}dw
\end{equation}
and
\begin{equation}\label{def:FH2}
\vec{F}(z)=Y(z)\vec{f}(z), \qquad \vec{H}(z)=\left(Y^t(z)\right)^{-1}\vec{h}(z).
\end{equation}

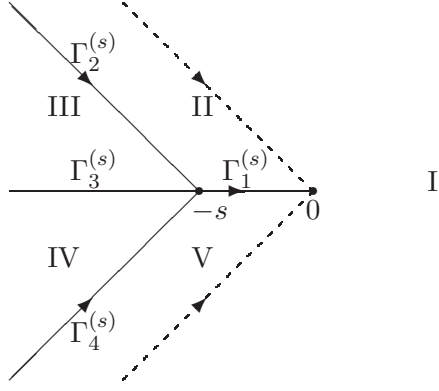
\begin{figure}[t]
\begin{center}
   \setlength{\unitlength}{1truemm}
   \begin{picture}(100,65)(-15,2)
       \dashline{0.8}(15,15)(40,40)
       \dashline{0.8}(15,65)(40,40)

       \put(40,40){\line(-1,0){40}}

       \put(25,40){\line(-1,-1){25}}
       \put(25,40){\line(-1,1){25}}
       \put(10,55){\thicklines\vector(1,-1){1}}
       \put(10,25){\thicklines\vector(1,1){1}}

        \put(25,25){\thicklines\vector(1,1){1}}

       \put(25,55){\thicklines\vector(1,-1){1}}

       \put(30,40){\thicklines\vector(1,0){1}}

       \put(39,36.3){$0$}
       \put(24,36.3){$-s$}

       \put(28,41.5){$\Gamma_1^{(s)}$}
        \put(8,57){$\Gamma_2^{(s)}$}
         \put(8,20){$\Gamma_4^{(s)}$}

       \put(8,41.5){$\Gamma_3^{(s)}$}

       \put(24,50){$\textrm{II}$}
       \put(5,50){$\textrm{III}$}
        \put(5,30){$\textrm{IV}$}

       \put(24,30){$\textrm{V}$}
       \put(55,40){$\textrm{I}$}

       \put(40,40){\thicklines\circle*{1}}
       \put(25,40){\thicklines\circle*{1}}

   \end{picture}
   \caption{Regions $\textrm{I}-\textrm{V}$ and contours $\Gamma_j^{(s)}$, $j=1,2,3,4$.}
   \label{fig:YtoX}
\end{center}
\end{figure}

Recalling the RH problem \ref{rhp:Psi} for $\Psi$, we make the following undressing transformation to arrive at a RH problem with constant jumps. Define
\begin{equation}\label{eq:YtoX}
X(z;\lambda,s)= X(z)=\left\{
       \begin{array}{ll}
         Y(z)\Psi(z), & \hbox{for $z$ in region $\textrm{I}\cup \textrm{III}\cup \textrm{IV}$,} \\
         Y(z)\Psi(z)
\begin{pmatrix}
1 & 0 \\
e^{\alpha \pi i} & 1
\end{pmatrix}, & \hbox{for $z$ in region $\textrm{II}$,} \\
         Y(z)\Psi(z)
\begin{pmatrix}
1 & 0 \\
-e^{-\alpha \pi i} & 1
\end{pmatrix}, & \hbox{for $z$ in region $\textrm{V}$,}
       \end{array}
     \right.
\end{equation}
where the regions $\textrm{I}-\textrm{V}$ are shown in Figure \ref{fig:YtoX}. Then, $X$ satisfies the following RH problem.

\begin{prop}\label{prop:RHPforX}
The function $X$ defined in \eqref{eq:YtoX} has the following properties:
\begin{enumerate}
\item[\rm (1)] $X(z)$ is defined and analytic in $\mathbb{C}\setminus \{\cup^4_{j=1}\Gamma_j^{(s)}\cup\{-s\}\}$, where
\begin{equation}\label{def:gammais}
\Gamma_1^{(s)}=(-s,0),\quad \Gamma_2^{(s)}=-s+e^{-\pi i/3}\mathbb{R}^-,\quad \Gamma_3^{(s)}=(-\infty,-s),\quad \Gamma_4^{(s)}=-s+e^{\pi i/3}\mathbb{R}^-,
\end{equation}
with $\mathbb{R}^-:=(-\infty, 0)$ and all orientations from the left to the right; see the solid lines in Figure \ref{fig:YtoX}.

\item[\rm (2)] $X$ satisfies the following jump conditions:
\begin{equation}\label{eq:X-jump}
 X_+(z)=X_-(z)
 \left\{
 \begin{array}{ll}

                               \begin{pmatrix}
e^{\alpha \pi i} & 0 \\
0 & e^{-\alpha \pi i}
\end{pmatrix}, &  z \in \Gamma_1^{(s)}, \\[.4cm]
    \left(
                               \begin{array}{cc}
                                1 & 0\\
                               e^{\alpha \pi i} & 1 \\
                                 \end{array}
                             \right),  &  z \in \Gamma_2^{(s)}, \\[.4cm]
    \left(
                               \begin{array}{cc}
                                0 & 1\\
                               -1 & 0 \\
                                 \end{array}
                             \right),  &  z \in \Gamma_3^{(s)}, \\[.4cm]
    \left(
                               \begin{array}{cc}
                                 1 & 0 \\
                                 e^{-\alpha\pi i} & 1 \\
                                 \end{array}
                             \right), &   z \in \Gamma_4^{(s)}.
 \end{array}  \right .  \end{equation}

\item[\rm (3)] As $z\to\infty$,
\begin{equation}\label{eq:X-infty}
 X(z)=
 \left( I + \frac{X_\infty}{z} + O\left( z^{-2}\right)
 \right) z^{-\frac{1}{4} \sigma_3} \frac{I + i \sigma_1}{\sqrt{2}} e^{z^{\frac{1}{2}} \sigma_3}.
   \end{equation}

\item[\rm (4)]We have
\begin{equation}\label{eq:X-origin}
 X(z)=\left\{
        \begin{array}{ll}
          O(1)(I+O(z)) e^{(-1)^{k+1}\left(\frac{\lambda}{z}\right)^k\sigma_3}z^{\frac \alpha2\sigma_3}, & \hbox{as $z\to 0$,} \\
          O(\ln(z+s)), & \hbox{as $z \to -s$.}
        \end{array}
      \right.
 \end{equation}
\end{enumerate}
\end{prop}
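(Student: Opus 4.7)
The plan is to verify each of the four items by direct computation starting from the piecewise definition~\eqref{eq:YtoX} and using the RH problems for $Y$ and $\Psi$ together with the identity~\eqref{eq:psinegx}. Write $\tilde E_+$ and $\tilde E_-$ for the constant multipliers used in the definition of $X$ on regions II and V respectively. The key observation for item~(1) is that $\tilde E_+ = J_{\Sigma_1}$ and $\tilde E_- = J_{\Sigma_3}^{-1}$, so these factors precisely cancel the $\Psi$-jumps on the dashed rays $\Sigma_1, \Sigma_3$. Explicitly, across $\Sigma_1$ (which separates II from I), the identity $\Psi|_{\Omega_1} = \Psi|_{\Omega_2} J_{\Sigma_1}$ yields $X|_{II} = Y\,\Psi|_{\Omega_2}\tilde E_+ = Y\,\Psi|_{\Omega_1} = X|_{I}$, and the argument across $\Sigma_3$ is analogous. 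Hence $X$ extends analytically across $\Sigma_1$ and $\Sigma_3$, giving item~(1).

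For item~(2), on the contours $\Gamma_2^{(s)}, \Gamma_3^{(s)}, \Gamma_4^{(s)}$ only one of $Y$ or $\Psi$ is discontinuous, and the stated jump matrices follow immediately --- respectively from the factor $\tilde E_+$ between II and III, the inherited $J_{\Sigma_2}$ on $\Gamma_3^{(s)}\subset\Sigma_2$, and the factor $\tilde E_-$ between IV and V. The only nontrivial case is $\Gamma_1^{(s)} = (-s,0)$, where $Y_+ = Y_-(I - 2\pi i f h^t)$ and $\Psi_+ = \Psi_- J_{\Sigma_2}$ both contribute. Since II lies above the segment and V below, the jump of $X$ reads
\begin{equation*}
J_X = \tilde E_-^{-1}\bigl[\Psi_-^{-1}(I - 2\pi i f h^t)\Psi_-\bigr] J_{\Sigma_2}\tilde E_+.
\end{equation*}
The essential step is to simplify the bracket via the rank-one structure of $fh^t$. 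Using~\eqref{def:fh} one rewrites $-2\pi i f h^t = e^{\alpha\pi i}\begin{pmatrix}\psi_1 \\ \psi_2\end{pmatrix}(\psi_2, -\psi_1)$, and~\eqref{eq:psinegx} combined with $\det\Psi_- = 1$ yields
\begin{equation*}
\Psi_-^{-1}\begin{pmatrix}\psi_1 \\ \psi_2\end{pmatrix} = \begin{pmatrix}1 \\ -e^{-\alpha\pi i}\end{pmatrix}, \qquad (\psi_2, -\psi_1)\Psi_- = (-e^{-\alpha\pi i},\, -1),
\end{equation*}
so the conjugated rank-one correction reduces to a constant $2\times 2$ matrix, and a short multiplication with $J_{\Sigma_2}$, $\tilde E_+$ and $\tilde E_-^{-1}$ produces $J_X = \diag(e^{\alpha\pi i}, e^{-\alpha\pi i})$ as claimed.

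Item~(3) follows by multiplying the large-$z$ expansion $Y(z) = I + Y_1/z + O(z^{-2})$, which is immediate from~\eqref{Y-def}, by the asymptotic~\eqref{eq:Psi-infty} of $\Psi$; region I extends to infinity so this product is precisely $X$ there. For item~(4), the logarithmic bound near $-s$ is inherited directly from $Y(z) = O(\log(z+s))$ because $\Psi$ and the constant multipliers are bounded there. The behavior at the origin follows from the analyticity of $X$ across $\Sigma_1$ and $\Sigma_3$ already shown in item~(1), which allows writing $X = Y\,\Psi|_{\Omega_1}$ on a punctured neighborhood of $0$ cut only along the segment to $-s$; substituting~\eqref{eq:Psi-origin} for $z\in\Omega_1$ and using $-(-\lambda/z)^k = (-1)^{k+1}(\lambda/z)^k$ produces the stated form, with the $O(1)$ factor absorbing $Y(z)\Psi_0(\lambda)$. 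The sole technical hurdle in the whole proposition is the rank-one conjugation on $\Gamma_1^{(s)}$; the remaining steps are bookkeeping of contour orientations and jump matrices.
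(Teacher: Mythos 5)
Your proof is correct and follows essentially the same route as the paper: express the jump of $X$ on $\Gamma_1^{(s)}$ as $J_{\Sigma_3}\,\Psi_-^{-1}(I-2\pi i\,\vec{f}\,\vec{h}^t)\Psi_-\,J_{\Sigma_2}J_{\Sigma_1}$, collapse the conjugated rank-one piece to the constant matrix $\begin{pmatrix}0 & -e^{\alpha\pi i}\\ e^{-\alpha\pi i}&2\end{pmatrix}$ via \eqref{eq:psinegx} (together with $\det\Psi_-=1$), and multiply out. The only difference is that the paper summarily dismisses items (1), (3), (4) with ``the other claims follow directly from the definition,'' while you verify them explicitly; your cancellation argument across $\Sigma_1,\Sigma_3$ and the bookkeeping at $0$ and $-s$ are all sound.
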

\begin{proof}
It suffices to check the jump condition on $\Gamma_1^{(s)}$, while the other claims follow directly from the definition \eqref{eq:YtoX}.

By \eqref{eq:psinegx} and \eqref{def:fh}, we have, for $x<0$,
\begin{equation}\label{eq:fhinPsik=1}
\vec{f}(x)=\Psi_-(x)
\begin{pmatrix}
-e^{\frac{\alpha \pi i}{2}} \\
e^{-\frac{\alpha \pi i}{2}}
\end{pmatrix}, \qquad
\vec{h}(x)=-\frac{1}{2 \pi i}
\Psi_-^{-t}(x)
\begin{pmatrix}
e^{-\frac{\alpha \pi i}{2}} \\
e^{\frac{\alpha \pi i}{2}}
\end{pmatrix}.
\end{equation}
Thus,
\begin{equation}
I-2\pi i \vec{f}(x)\vec{h}^t(x)=\Psi_-(x)
\begin{pmatrix}
0 & -e^{\alpha \pi i} \\
e^{-\alpha \pi i} & 2
\end{pmatrix}\Psi_-^{-1}(x).
\end{equation}
This, together with \eqref{eq:Y-jump} and \eqref{eq:YtoX}, implies that if $-s<x<0$,
\begin{equation}
\begin{aligned}
X_+(x)& = Y_+(x)\Psi_+(x)
\begin{pmatrix}
1 & 0 \\
e^{\alpha \pi i} & 1
\end{pmatrix} \\
&= Y_-(x)\Psi_-(x)
\begin{pmatrix}
0 & -e^{\alpha \pi i} \\
e^{-\alpha \pi i} & 2
\end{pmatrix}\Psi_-^{-1}(x)\Psi_+(x)
\begin{pmatrix}
1 & 0 \\
e^{\alpha \pi i} & 1
\end{pmatrix}
\\
&=X_-(x)\begin{pmatrix}
1 & 0 \\
e^{-\alpha \pi i} & 1
\end{pmatrix}\begin{pmatrix}
0 & -e^{\alpha \pi i} \\
e^{-\alpha \pi i} & 2
\end{pmatrix}\begin{pmatrix}
0 & 1 \\
-1 & 0
\end{pmatrix}
\begin{pmatrix}
1 & 0 \\
e^{\alpha \pi i} & 1
\end{pmatrix}
\\
&=X_-(x)\begin{pmatrix}
e^{\alpha \pi i} & 0 \\
0 & e^{-\alpha \pi i}
\end{pmatrix},
\end{aligned}
\end{equation}
as desired.
\end{proof}

\begin{remark}
Solvability of the RH problem for $X$ follows from \eqref{eq:MandX} and Theorem \ref{Solvability} below.
\end{remark}


The connection between $X$ and $\frac{\partial}{\partial s} F(s;\lambda)$ is stated in the following proposition.
\begin{prop}\label{prop:Xandgap}
We have
\begin{equation}\label{eq:derivativeinX}
\frac{\partial}{\partial s} F(s;\lambda)=\frac{d}{ds} \ln \det\left(I-\widetilde{\mathcal{K}}_{\textrm{PIII}}\right)=-
\frac{e^{\alpha \pi i}}{2 \pi i} \lim_{z \to -s} \left(X^{-1}(z)X'(z)\right)_{21},
\end{equation}
where the limit is taken as $z$ tends to $-s$ in region $\textrm{V}$.
\end{prop}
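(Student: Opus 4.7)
Start from the identity \eqref{eq:derivative1}, which gives $\frac{d}{ds}\ln\det(I-\widetilde{\mathcal{K}}_{\textrm{PIII}}) = -R(-s,-s)$ with the resolvent kernel $R(u,v) = \vec{F}^{\,t}(u)\vec{H}(v)/(u-v)$ from \eqref{eq:resolventexpli}. Observe first that $\vec{f}^{\,t}(u)\vec{h}(u)=0$ by the explicit form \eqref{def:fh}, so by \eqref{def:FH2} also $\vec{F}^{\,t}(u)\vec{H}(u) = \vec{f}^{\,t}(u)\vec{h}(u)=0$. This lets me take the diagonal limit via L'H\^opital and obtain
\begin{equation*}
R(-s,-s) \;=\; -\,\vec{F}^{\,t}(-s)\,\vec{H}'(-s),
\end{equation*}
where prime denotes differentiation in the spatial variable. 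The goal is then to rewrite this product via the function $X$, approaching $-s$ from region V.

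Next I would extend $\vec{f}$ and $\vec{h}$ analytically into region V $\subset \Omega_3$. Using \eqref{eq:psinegx} together with $\det\Psi\equiv 1$ and a direct identity for $2\times 2$ matrices, I obtain the representations
\begin{equation*}
\vec{f}(z) = \Psi(z)\,\vec{a}, \qquad \vec{h}(z) = \Psi^{-t}(z)\,\vec{b},
\end{equation*}
valid in region V, with the constant vectors
$$\vec{a} = \begin{pmatrix}-e^{\alpha\pi i/2}\\ e^{-\alpha\pi i/2}\end{pmatrix}, \qquad
\vec{b} = -\frac{1}{2\pi i}\begin{pmatrix}e^{-\alpha\pi i/2}\\ e^{\alpha\pi i/2}\end{pmatrix}.$$
Combining with $\vec{F}=Y\vec{f}$ and $\vec{H}=Y^{-t}\vec{h}$ from \eqref{def:FH2} and with the undressing \eqref{eq:YtoX} in region V (where $Y(z)\Psi(z) = X(z)\,C$ with $C=\bigl(\begin{smallmatrix}1 & 0\\ e^{-\alpha\pi i}&1\end{smallmatrix}\bigr)$), I find
\begin{equation*}
\vec{F}(z) = X(z)\,C\vec{a}, \qquad \vec{H}(z) = X^{-t}(z)\,C^{-t}\vec{b}.
\end{equation*}
The crucial simplification then comes from the identities
$$C\vec{a} \;=\; -e^{\alpha\pi i/2}\,\vec{e}_1, \qquad C^{-t}\vec{b} \;=\; -\frac{e^{\alpha\pi i/2}}{2\pi i}\,\vec{e}_2,$$
which collapse the constant vectors onto the standard basis vectors and automatically confirm $\vec{F}^{\,t}\vec{H} = 0$.

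It remains to compute $\vec{F}^{\,t}(z)\vec{H}'(z)$. Using $X^t(X^{-t})' = -(X')^t X^{-t} = -(X^{-1}X')^t$, which follows by differentiating $X^t X^{-t} = I$, I obtain
\begin{equation*}
\vec{F}^{\,t}(z)\vec{H}'(z) \;=\; \frac{e^{\alpha\pi i}}{2\pi i}\,\vec{e}_1^{\,t}\,X^t(z)\,\frac{d}{dz}\bigl[X^{-t}(z)\bigr]\vec{e}_2 \;=\; -\frac{e^{\alpha\pi i}}{2\pi i}\,\bigl[X^{-1}(z)X'(z)\bigr]_{21}.
\end{equation*}
Feeding this back into $\frac{d}{ds}\ln\det(I-\widetilde{\mathcal{K}}_{\textrm{PIII}}) = \vec{F}^{\,t}(-s)\vec{H}'(-s)$, and recalling \eqref{eq:FtildeK}, yields \eqref{eq:derivativeinX}. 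The only non-routine step is the correct identification of the stabilizer $C$ and the matching of the analytic continuation of $(\psi_1,\psi_2)$ into $\Omega_3$ via \eqref{eq:psinegx}; the rest is careful sign bookkeeping. Since $\vec{f},\vec{h}$ admit analytic extensions across $(-s,0)$ only via $\Omega_3$ (because $\Sigma_1$ is the only true cut of $\psi_i$), the limit is naturally forced to be taken from region V, which is why this particular one-sided limit appears in the statement.
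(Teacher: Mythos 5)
Your proof is correct and follows essentially the same route as the paper: the paper's proof simply states the representations \eqref{eq:FHexplicit} in region V and invokes L'H\^opital, and you have filled in exactly those computations (the $\Omega_3$ extension via \eqref{eq:psinegx}, the undressing \eqref{eq:YtoX}, the collapse $C\vec{a}=-e^{\alpha\pi i/2}\vec{e}_1$, $C^{-t}\vec{b}=-\tfrac{e^{\alpha\pi i/2}}{2\pi i}\vec{e}_2$, and the logarithmic-derivative identity). The only small imprecision is the concluding remark that the extension exists ``only via $\Omega_3$''---it exists across all of $(-s,0)$; the reason region V is singled out is that it is there that the stabilizer $C$ reduces $\vec{F},\vec{H}$ to multiples of $X\vec{e}_1$ and $X^{-t}\vec{e}_2$---but this does not affect the validity of the argument.
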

\begin{proof}
For $z$ belonging to region V, we see from \eqref{def:FH2}, \eqref{eq:YtoX} and \eqref{eq:fhinPsik=1} that
\begin{equation}\label{eq:FHexplicit}
\vec{F}(z)=X(z)
\begin{pmatrix}
-e^{\alpha \pi i/2} \\
0
\end{pmatrix},\qquad \vec{H}(z)=-\frac{X^{-t}(z)}{2 \pi i}
\begin{pmatrix}
0 \\
e^{\alpha \pi i/2}
\end{pmatrix}.
\end{equation}
A combination of L'H\^{o}pital's rule, \eqref{eq:FtildeK}, \eqref{eq:derivative1}, \eqref{eq:resolventexpli} and \eqref{eq:FHexplicit} then gives us \eqref{eq:derivativeinX}.
\end{proof}

\subsection{Asymptotic analysis of the Riemann-Hilbert problem for $X$}
\label{sec:sasyanalysis}
In this section, we shall perform the Deift-Zhou steepest descent analysis to the RH problem for $X$ as $s \to +\infty$. It consists of a series of explicit and invertible transformations which leads to a RH problem tending to the identity matrix as $s \to +\infty$.

\subsubsection{$X\to T$: Rescaling}
Define
\begin{equation}\label{eq:XtoT}
T(z)=X(sz).
\end{equation}
It is then straightforward to check that $T$ satisfies a RH problem as follows:
\begin{rhp}
The function $T$ defined in \eqref{eq:XtoT} has the following properties:
\begin{enumerate}
\item[\rm (1)] $T(z)$ is defined and analytic in $\mathbb{C} \setminus
\{\cup^4_{j=1}\Gamma_j^{(1)}\cup\{-1\}\}$, where the contours $\Gamma_{j}^{(1)}$ are defined in \eqref{def:gammais} with $s=1$, as shown in
Figure \ref{fig:contour-T}.

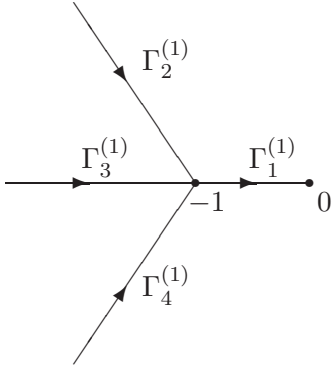
\begin{figure}[h]
\begin{center}
   \setlength{\unitlength}{1truemm}
   \begin{picture}(100,65)(-25,2)
       \put(40,40){\line(-1,0){40}}

       \put(25,40){\line(-2,-3){16}}
       \put(25,40){\line(-2,3){16}}

       \put(15,55){\thicklines\vector(2,-3){1}}
       \put(15,25){\thicklines\vector(2,3){1}}

       \put(32,40){\thicklines\vector(1,0){1}}
       \put(10,40){\thicklines\vector(1,0){1}}

       \put(41,36.3){$0$}
       \put(24,36.3){$-1$}
       \put(32,42){$\Gamma_1^{(1)}$}
       \put(18,55){$\Gamma_2^{(1)}$}
       \put(10,42){$\Gamma_3^{(1)}$}
       \put(18,25){$\Gamma_4^{(1)}$}


       \put(40,40){\thicklines\circle*{1}}
       \put(25,40){\thicklines\circle*{1}}

   \end{picture}
   \vspace{-16mm}
   \caption{Contours $\Gamma_j^{(1)}$, $j=1,2,3,4$, for the RH problem for $T$.}
   \label{fig:contour-T}
\end{center}
\end{figure}

\item[\rm (2)] $T$ satisfies the following jump conditions:
\begin{equation}\label{eq:T-jump}
 T_+(z)=T_-(z)
 \left\{
 \begin{array}{ll}
\begin{pmatrix}
e^{\alpha \pi i} & 0 \\
0 & e^{-\alpha \pi i}
\end{pmatrix}, &  z \in \Gamma_1^{(1)}, \\[.4cm]
    \left(
                               \begin{array}{cc}
                                1 & 0\\
                               e^{\alpha \pi i} & 1 \\
                                 \end{array}
                             \right),  &  z \in \Gamma_2^{(1)}, \\[.4cm]
    \left(
                               \begin{array}{cc}
                                0 & 1\\
                               -1 & 0 \\
                                 \end{array}
                             \right),  &  z \in \Gamma_3^{(1)}, \\[.4cm]
    \left(
                               \begin{array}{cc}
                                 1 & 0 \\
                                 e^{-\alpha\pi i} & 1 \\
                                 \end{array}
                             \right), &   z \in \Gamma_4^{(1)}.
 \end{array}  \right .  \end{equation}

\item[\rm (3)] As $z\to\infty$,
\begin{equation}
 T(z)=
 \left( I + O\left(\frac{1}{z}\right)
 \right) (sz)^{-\frac{1}{4} \sigma_3} \frac{I + i \sigma_1}{\sqrt{2}} e^{(sz)^{\frac{1}{2}} \sigma_3}.
   \end{equation}

\item[\rm (4)]For $z \notin (-\infty, 0]$, we have
\begin{equation}\label{eq:T-origin}
 T(z)=\left\{
        \begin{array}{ll}
          O(1)(I+O(z)) e^{(-1)^{k+1}\left(\frac{\lambda}{sz}\right)^k\sigma_3}z^{\frac \alpha2\sigma_3}, & \hbox{as $z\to 0$,} \\
          O(\ln(z+1)), & \hbox{as $z \to -1$.}
        \end{array}
      \right.
 \end{equation}
\end{enumerate}
\end{rhp}

\subsubsection{$T\to S$: Normalization at $\infty$ and $0$}
Define
\begin{equation}\label{def:g}
g(z)=\sqrt{z+1}\left(1+ (-1)^{k+1}\frac{\lambda^k}{s^{k+1/2}} \sum_{j=1}^k \frac{c_j}{z^j} \right), \qquad z\in\mathbb{C}\setminus(-\infty, -1],
\end{equation}
where
\begin{equation} \label{g-exp-cj}
  c_k=1 \quad \textrm{and} \quad \sum_{j=0}^m \binom{1/2}{m-j} c_{k-j} = 0 \qquad \textrm{for } m =1, \ldots, k-1.
\end{equation}
From the above definition, it is readily seen that
$$g_{+}(x)+g_-(x)=0 \qquad \textrm{for } x<-1,$$
and
\begin{equation}
  g(z)= \left\{
             \begin{array}{ll}
               \sqrt{z} + \mathfrak{g}_1 z^{-1/2} + O(z^{-3/2}), & \hbox{as $z\to \infty$,} \\
               (-1)^{k+1}\frac{\lambda^k}{s^{k+1/2}z^k} +  O(1), & \hbox{as $z \to 0$,}
             \end{array}
           \right.
\end{equation}
with
\begin{equation} \label{contant-g1}
  \mathfrak{g}_1= \frac{1}{2} + (-1)^{k+1}\frac{\lambda^k}{s^{k+1/2}} c_1.
\end{equation}

The second transformation is then defined by
\begin{equation}\label{eq:TtoS}
S(z)=  s^{\frac{1}{4}\sigma_3}  \begin{pmatrix}
   1 & 0 \\ i s \, \mathfrak{g}_1 & 1
 \end{pmatrix}  T(z)e^{-\sqrt{s}g(z)\sigma_3},
\end{equation}
where $\mathfrak{g}_1$ is given in \eqref{contant-g1}.
It is straightforward to check that $S$ satisfies the following RH problem.
\begin{rhp}
The function $S$ defined in \eqref{eq:TtoS} has the following properties:
\begin{enumerate}
\item[\rm (1)] $S(z)$ is defined and analytic in $\mathbb{C} \setminus
\{\cup^4_{j=1}\Gamma_j^{(1)}\cup\{-1\}\}$; see Figure \ref{fig:contour-T}.

\item[\rm (2)] $S$ satisfies the following jump conditions:
\begin{equation}\label{eq:S-jump}
 S_+(z)=S_-(z)
 \left\{
 \begin{array}{ll}
\begin{pmatrix}
e^{\alpha \pi i} & 0 \\
0 & e^{-\alpha \pi i}
\end{pmatrix}, &  z \in \Gamma_1^{(1)}, \\[.4cm]
    \left(
                               \begin{array}{cc}
                                1 & 0\\
                               e^{\alpha \pi i-2\sqrt{s}g(z)} & 1 \\
                                 \end{array}
                             \right),  &  z \in \Gamma_2^{(1)}, \\[.4cm]
    \left(
                               \begin{array}{cc}
                                0 & 1\\
                               -1 & 0 \\
                                 \end{array}
                             \right),  &  z \in \Gamma_3^{(1)}, \\[.4cm]
    \left(
                               \begin{array}{cc}
                                 1 & 0 \\
                                 e^{-\alpha\pi i-2\sqrt{s}g(z)} & 1 \\
                                 \end{array}
                             \right), &   z \in \Gamma_4^{(1)}.
 \end{array}  \right .  \end{equation}

\item[\rm (3)] As $z\to\infty$,
\begin{equation}\label{eq:S-infty}
 S(z)=
 \left( I + O\left(\frac{1}{z}\right)
 \right)z^{-\frac{1}{4} \sigma_3} \frac{I + i \sigma_1}{\sqrt{2}}.
   \end{equation}

\item[\rm (4)]We have
\begin{equation}\label{eq:S-origin}
 S(z)=\left\{
        \begin{array}{ll}
          O(1)z^{\frac \alpha2\sigma_3}, & \hbox{as $z\to 0$,} \\
          O(\ln(z+1)), & \hbox{as $z \to -1$.}
        \end{array}
      \right.
 \end{equation}
\end{enumerate}
\end{rhp}

By \eqref{def:g}, one verifies that for sufficiently large positive $s$,
$$\Re g(z)>0, \qquad z\in\Gamma_2^{(1)}\cup\Gamma_4^{(1)}.$$
Hence, the jump matrix of $S$ on $\Gamma_2^{(1)}\cup\Gamma_4^{(1)}$ tends to the identity matrix exponentially fast as $s\to+ \infty$. This uniform convergence is not valid anymore for $z$ near $-1$.

\subsubsection{Outer parametrix}
By ignoring the exponentially small terms in the jumps \eqref{eq:S-jump} and a neighborhood of $-1$, we are led to the following RH problem for the outer parametrix $N$.

\begin{rhp} \label{rhp:N}
We look for a $2\times 2$ matrix-valued function
$N(z)$ satisfying
\begin{enumerate}
\item[\rm (1)] $N(z)$ is defined and analytic in $\mathbb{C}\setminus (-\infty,0]$.
\item[\rm (2)] For $x<0$, we have
\begin{equation}\label{eq:N-jump}
N_+(x)=N_-(x)
 \left\{
 \begin{array}{ll}
    \left(
                               \begin{array}{cc}
                                 e^{\alpha \pi i} & 0 \\
                               0 &  e^{-\alpha \pi i} \\
                                 \end{array}
                             \right), &  x \in (-1,0), \\[.4cm]
    \left(
                               \begin{array}{cc}
                                0 & 1\\
                               -1 & 0 \\
                                 \end{array}
                             \right),  &  x \in (-\infty,-1).
 \end{array}  \right .  \end{equation}

\item[\rm (3)] As $z\to\infty$,
\begin{equation}
N(z)=
\left( I + O\left(\frac{1}{z}\right)\right)z^{-\frac{1}{4} \sigma_3} \frac{I + i \sigma_1}{\sqrt{2}}.
\end{equation}

\item[\rm (4)]As $z\to 0$,
\begin{equation}\label{eq:N-origin}
N(z)=O(1)z^{\frac \alpha2\sigma_3}.
  \end{equation}
\end{enumerate}
\end{rhp}
The above RH problem can be solved explicitly and the solution is given by (cf. \cite{IKO2009})
\begin{equation}\label{Nsolution}
N(z)=\left(
                                                   \begin{array}{cc}
                                                     1 & 0 \\
                                                      i \alpha & 1 \\
                                                   \end{array}
                                                 \right)
(z+1)^{-\frac 14\sigma_3}\frac 1{\sqrt{2}}(I+i\sigma_1)\left(\frac{\sqrt{z+1}+1}{\sqrt{z+1}-1}\right)^{-\frac \alpha 2 \sigma_3}, \ z \in \mathbb{C} \setminus (-\infty, 0],
\end{equation}
where the branches are chosen as $\arg(z+1)\in (-\pi, \pi)$,  $\arg z \in (-\pi, \pi)$, and such that the last factor in \eqref{Nsolution} behaves
\begin{equation}
  \left(\frac{\sqrt{z+1}+1}{\sqrt{z+1}-1}\right)^{-\frac \alpha 2 \sigma_3}=\left (1+\frac {\alpha^2}{2z}\right ) I -\frac \alpha{\sqrt {z}} \sigma_3 +O\left (z^{-3/2}\right )~~\textrm{as $z \to \infty$}.
\end{equation}

\subsubsection{Local parametrix near $-1$}\label{sec:localpara}
Let $D_{-1}$ be a small, open neighborhood near $-1$ with fixed radius. Due to the fact that the convergence of the jump matrices to the identity matrices on  $\Gamma_2^{(1)} \cup \Gamma_4^{(1)}$ is not uniform near $-1$, we intend to find a function $P_{-1}$ satisfying a RH problem as follows.

\begin{rhp} \label{rhp:P1}
We look for a $2\times 2$ matrix-valued function
$P_{-1}(z)$ satisfying
\begin{enumerate}
\item[\rm (1)] $P_{-1}(z)$ is defined and analytic in  $\overline{D_{-1}} \setminus
\{\cup^4_{j=1}\Gamma_j^{(1)}\cup\{-1\}\}$.
\item[\rm (2)] $P_{-1}(z)$ satisfies the same jump conditions \eqref{eq:S-jump} as $S$ for $z \in D_{-1}\cap \{\cup^4_{j=1}\Gamma_j^{(1)}\}.$

\item[\rm (3)] As $s \to +\infty$, $P_{-1}$ matches $N(z)$ on the boundary $\partial D_{-1}$ of $D_{-1}$, i.e.,
\begin{equation}\label{eq:mathcingp1}
P_{-1}(z)=
N(z)(I+o(1)), \qquad z\in \partial D_{-1}.
\end{equation}

\end{enumerate}
\end{rhp}

To solve the above RH problem, we note that, as $z\to -1$,
\begin{equation}\label{eq:glocal1}
g(z)=\sqrt{z+1}   \left[1-  \frac{\lambda^k}{s^{k+1/2}} \left( \eta_0 + \eta_1 (z+1) + O(z+1)^2  \right) \right] ,
\end{equation}
with
\begin{equation} \label{cont-eta0}
  \eta_0 = \sum_{j=1}^k (-1)^{k+j} c_j, \qquad \eta_1 = \sum_{j=1}^k (-1)^{k+j} j c_j,
\end{equation}
where $c_j$ are determined in \eqref{g-exp-cj}. This local behavior suggests us to construct $P_{-1}$ with the help of the Bessel parametrix $\Phi^{(\alpha)}_{\textrm{Bes}}$, which solves the following model RH problem.
\begin{rhp}\label{rhp:Bessel}
The $2\times 2$ matrix-valued function $\Phi^{(\alpha)}_{\textrm{Bes}}(\zeta)$ has the following properties:
\begin{enumerate}
\item[\rm (1)] $\Phi^{(\alpha)}_{\textrm{Bes}}(\zeta)$ is defined and analytic in $\mathbb{C}\setminus \{\cup^3_{j=1}\Sigma_j\cup\{0\}\}$, where the contours $\Sigma_j$, $j=1,2,3$,  are illustrated in Figure \ref{fig:jumpsPsi}.

\item[\rm (2)] $\Phi^{(\alpha)}_{\textrm{Bes}}(\zeta)$ satisfies the following jump conditions:
\begin{equation}\label{eq:Bes-jump}
 \Phi^{(\alpha)}_{\textrm{Bes},+}(\zeta)=\Phi^{(\alpha)}_{\textrm{Bes},-}(\zeta)
 \left\{
 \begin{array}{ll}
    \left(
                               \begin{array}{cc}
                                1 & 0\\
                               e^{\alpha \pi i} & 1 \\
                                 \end{array}
                             \right),  &  \zeta \in \Sigma_1, \\[.4cm]
    \left(
                               \begin{array}{cc}
                                0 & 1\\
                               -1 & 0 \\
                                 \end{array}
                             \right),  &  \zeta \in \Sigma_2, \\[.4cm]
    \left(
                               \begin{array}{cc}
                                 1 & 0 \\
                                 e^{-\alpha\pi i} & 1 \\
                                 \end{array}
                             \right), &   \zeta \in \Sigma_3.
 \end{array}  \right .  \end{equation}

\item[\rm (3)] As $\zeta \to \infty$,
\begin{equation}\label{eq:Besl-infty}
 \Phi^{(\alpha)}_{\textrm{Bes}}(\zeta)=
 (4 \pi^2 \zeta )^{-\frac{1}{4} \sigma_3} \frac{I + i \sigma_1}{\sqrt{2}} \left( I + \frac{1}{16\sqrt{\zeta}}  \begin{pmatrix}
   -1-4\alpha^2 & -2i \\
   -2i & 1+4\alpha^2
 \end{pmatrix} +  O\left(\frac{1}{\zeta}\right)
 \right)e^{2\sqrt{\zeta}\sigma_3}.
   \end{equation}

\item[\rm (4)]The matrix function $\Phi^{(\alpha)}_{\textrm{Bes}}(\zeta)$ has the following local behavior near the origin.
    If $\alpha<0$,
    \begin{equation}
\Phi^{(\alpha)}_{\textrm{Bes}}(\zeta)=
O \begin{pmatrix}
|\zeta|^{\alpha/2} & |\zeta|^{\alpha/2}
\\
|\zeta|^{\alpha/2} & |\zeta|^{\alpha/2}
\end{pmatrix}, \qquad \textrm{as $\zeta \to 0$}.
\end{equation}
If $\alpha=0$,
    \begin{equation}
\Phi^{(\alpha)}_{\textrm{Bes}}(\zeta)=
O \begin{pmatrix}
\ln|\zeta| & \ln|\zeta|
\\
\ln|\zeta| & \ln|\zeta|
\end{pmatrix}, \qquad \textrm{as $\zeta \to 0$}.
\end{equation}
If $\alpha>0$,
     \begin{equation}
\Phi^{(\alpha)}_{\textrm{Bes}}(\zeta)= \begin{cases}
  O\begin{pmatrix}
|\zeta|^{\alpha/2} & |\zeta|^{-\alpha/2}
\\
|\zeta|^{\alpha/2} & |\zeta|^{-\alpha/2}
\end{pmatrix}, & \textrm{as $\zeta \to 0$ and $|\arg \zeta|<2 \pi/3$, } \\[.4cm]
  O\begin{pmatrix}
|\zeta|^{-\alpha/2} & |\zeta|^{-\alpha/2}
\\
|\zeta|^{-\alpha/2} & |\zeta|^{-\alpha/2}
\end{pmatrix}, & \textrm{as $\zeta \to 0$ and $2 \pi /3 <|\arg \zeta|< \pi $. }
  \end{cases}
\end{equation}
\end{enumerate}
\end{rhp}

By \cite{KMVV2004}, the solution to RH problem \ref{rhp:Bessel} is given by
\begin{equation}\label{eq:phiBes}
\Phi^{(\alpha)}_{\textrm{Bes}}(\zeta)=\left\{
                             \begin{array}{ll}
                               \begin{pmatrix}
I_\al(2\zeta^{1/2}) & \frac{i}{\pi}K_\al(2\zeta^{1/2}) \\
2\pi i\zeta^{1/2}I'_\al(2\zeta^{1/2}) &
-2\zeta^{1/2}K_\al'(2\zeta^{1/2})
\end{pmatrix}, & \hbox{for $|\arg \zeta|<2\pi/3$,} \\
                              \begin{pmatrix}
I_\al(2\zeta^{1/2}) & \frac{i}{\pi}K_\al(2\zeta^{1/2}) \\
2\pi i\zeta^{1/2}I'_\al(2\zeta^{1/2}) &
-2\zeta^{1/2}K_\al'(2\zeta^{1/2})
\end{pmatrix}\left(
                               \begin{array}{cc}
                                1 & 0\\
                               -e^{\alpha \pi i} & 1 \\
                                 \end{array}
                             \right), & \hbox{for $2\pi/3<\arg \zeta<\pi$,} \\
                                \begin{pmatrix}
I_\al(2\zeta^{1/2}) & \frac{i}{\pi}K_\al(2\zeta^{1/2}) \\
2\pi i\zeta^{1/2}I'_\al(2\zeta^{1/2}) &
-2\zeta^{1/2}K_\al'(2\zeta^{1/2})
\end{pmatrix}\left(
                               \begin{array}{cc}
                                 1 & 0 \\
                                 e^{-\alpha\pi i} & 1 \\
                                 \end{array}
                             \right), & \hbox{for $-\pi<\arg \zeta<-2\pi/3$.}
                             \end{array}
                           \right.
\end{equation}
where $I_\alpha$ and $K_\alpha$ denote the usual modified Bessel functions \cite{AS} with a cut along the negative real axis.

Recalling the local behaviour of $S(z)$ near $-1$ in \eqref{eq:S-origin}, the local parametrix $P_{-1}(z)$ is constructed in terms of the Bessel parametrix of order 0 as follows:
\begin{equation} \label{def:P-1}
P_{-1}(z)=E_{-1}(z)\Phi^{(0)}_{\textrm{Bes}}\left(\frac{1}{4}s g^2(z)\right)\left\{
                 \begin{array}{ll}
                   e^{(-\sqrt{s}g(z)+\frac{\alpha \pi i}{2})\sigma_3}, & \hbox{for $D_{-1}\cap \{z~|~\Im z > 0 \}$,} \\
                   e^{(-\sqrt{s}g(z)-\frac{\alpha \pi i}{2})\sigma_3}, & \hbox{for $D_{-1}\cap \{z~|~\Im z < 0 \}$,}
                 \end{array}
               \right.
\end{equation}
where
\begin{equation}\label{def:E}
E_{-1}(z):=N(z)\left\{\begin{array}{ll}
                   e^{-\frac{\alpha \pi i}{2}\sigma_3}\frac{I - i \sigma_1}{\sqrt{2}}\left(\pi^2 s g^2(z)\right)^{\frac{1}{4}\sigma_3}, & \hbox{for $\Im z > 0 $,} \\
                   e^{\frac{\alpha \pi i}{2}\sigma_3}\frac{I - i \sigma_1}{\sqrt{2}}\left(\pi^2 s g^2(z)\right)^{\frac{1}{4}\sigma_3}, & \hbox{for $ \Im z < 0 $,}
                 \end{array}
 \right.
\end{equation}
and $N(z)$ is defined in \eqref{Nsolution}.

To show $P_{-1}(z)$ defined in \eqref{def:P-1} indeed satisfies RH problem \ref{rhp:P1}, we first note from \eqref{eq:glocal1} that $g^2(z)$ is a conformal mapping of $D_{-1}$ that maps $-1$ to $0$, preserves the real line, and maps $\Gamma_2^{(1)}\cup\Gamma_4^{(1)}$ onto $\Sigma_1\cup\Sigma_3$. It is then straightforward to check from \eqref{def:P-1} and \eqref{eq:Bes-jump} that $P_{-1}$ satisfies the jump conditions stated in item $(2)$ of RH problem \ref{rhp:P1}, provided we can show the prefactor $E_{-1}(z)$ is analytic in $D_{-1}$. To see this, it suffices to check that the prefactor has no jump on $D_{-1}\cap \{\Gamma_1^{(1)}\cup\Gamma_3^{(1)}\}$. Note that $g^{1/2}(z)$ is analytic in $\mathbb{C}\setminus(-\infty,-1]$, if $x\in D_{-1}\cap \Gamma_1^{(1)}$, we see from \eqref{def:E} and \eqref{eq:N-jump} that
\begin{align}
E_{-1,+}(x)&=N_+(x)e^{-\frac{\alpha \pi i}{2}\sigma_3}\frac{I - i \sigma_1}{\sqrt{2}}\left(\pi^2 s g^2(x)\right)^{\frac{1}{4}\sigma_3}
\nonumber \\
&=N_-(x)e^{+\frac{\alpha \pi i}{2}\sigma_3}\frac{I - i \sigma_1}{\sqrt{2}}\left(\pi^2 s g^2(x)\right)^{\frac{1}{4}\sigma_3}=E_{-1,-}(x).
\end{align}
If $x\in D_{-1}\cap \Gamma_3^{(1)}$, again by \eqref{def:E} and \eqref{eq:N-jump}, it follows that
\begin{align}
&E_{-1,+}(x)E^{-1}_{-1,-}(x)
\\
&=N_+(x)e^{-\frac{\alpha \pi i}{2}\sigma_3}\frac{I - i \sigma_1}{\sqrt{2}}\left(\pi^2 s g^2(x)\right)^{\frac{1}{4}\sigma_3}\left(\pi^2 s g^2(x)\right)^{-\frac{1}{4}\sigma_3}\frac{I + i \sigma_1}{\sqrt{2}}
e^{-\frac{\alpha \pi i}{2}\sigma_3}N_-^{-1}(x)
\nonumber \\
&=N_+(x)e^{-\frac{\alpha \pi i}{2}\sigma_3}\frac{I - i \sigma_1}{\sqrt{2}}
\begin{pmatrix}
i & 0
\\
0 & -i
\end{pmatrix}\frac{I + i \sigma_1}{\sqrt{2}}
e^{-\frac{\alpha \pi i}{2}\sigma_3}
\begin{pmatrix}
0 & 1 \\
-1 & 0
\end{pmatrix}
N_+^{-1}(x)
\nonumber
\\
&=N_+(x)e^{-\frac{\alpha \pi i}{2}\sigma_3}
\begin{pmatrix}
0 & -1
\\
1 & 0
\end{pmatrix}
e^{-\frac{\alpha \pi i}{2}\sigma_3}
\begin{pmatrix}
0 & 1 \\
-1 & 0
\end{pmatrix}
N_+^{-1}(x)=I.
\end{align}

Finally, for the matching condition \eqref{eq:mathcingp1}, we observe that
\begin{equation}
  \left(\frac{\sqrt{z+1}+1}{\sqrt{z+1}-1}\right)^{-\frac \alpha 2} = \rho(z) \begin{cases}
    e^{\frac{\alpha \pi i}{2} }, & \Im z >0 \\ e^{-\frac{\alpha \pi i}{2} }, & \Im z < 0,
  \end{cases}
\end{equation}
where
\begin{equation}\label{def:rho}
\rho(z):=\left(\frac {1-\sqrt{z+1}}{1+\sqrt{z+1}}\right)^{\alpha/2}, \quad z\in\mathbb{C}\setminus((-\infty,-1],[0,+\infty)).
\end{equation}
This, together with \eqref{Nsolution}, \eqref{eq:Besl-infty} and \eqref{def:P-1} and a straightforward calculation, implies that, for $z\in\partial D_{-1}$ and large positive $s$,
\begin{align}
&P_{-1}(z) N(z)^{-1}
\nonumber
\\
&=N(z) e^{\mp\frac{\alpha \pi i}{2} \sigma_3} \frac{I-i\sigma_1}{\sqrt{2}} \left(\pi^2 s g^2(z)\right)^{\frac{1}{4} \sigma_3} \Phi^{(0)}_{\textrm{Bes}}\left(\frac{1}{4}s g^2(z)\right) e^{-\sqrt{s}g(z)\sigma_3}  e^{\pm \frac{\alpha \pi i}{2} \sigma_3} N(z)^{-1}  \nonumber \\
  &=  I+\frac{1}{s^{1/2}} \begin{pmatrix}
    1 & 0 \\ i \alpha & 1
  \end{pmatrix}  J_1(z)
  \begin{pmatrix}
    1 & 0 \\ -i \alpha & 1
  \end{pmatrix} + O(s^{-1}), \label{R-jump-estimate}
\end{align}
where
\begin{align} \label{Jump-J1}
  J_{1}(z)=\left(
  \begin{array}{cc}
    \frac {\rho(z)^{-2}-\rho(z)^2}{8 g(z)} & \frac{-i((\rho(z)+\rho(z)^{-1})^2-3)}{8 \sqrt{z+1} \, g(z)}  \\
    \frac{-i((\rho(z)+\rho(z)^{-1})^2-1) \sqrt{z+1}}{8 g(z) } &\frac {\rho(z)^2-\rho(z)^{-2}}{8 g(z)} \\
  \end{array}
\right),
\end{align}
as required.

We conclude this section by evaluating $E_{-1}(-1)$ for later use. By \eqref{Nsolution} and \eqref{def:E}, it is readily seen that
\begin{equation} \label{relation:e-etilde}
  E_{-1}(z) = \begin{pmatrix}
    1 & 0 \\ i \alpha & 1
  \end{pmatrix} \widetilde{E}_{-1}(z),
\end{equation}
where
\begin{align}\label{eq:tildeE2}
  \widetilde{E}_{-1}(z) & = (z+1)^{-\frac 14\sigma_3} \frac{I+i\sigma_1}{\sqrt{2}} \rho(z)^{\sigma_3} \frac{I-i\sigma_1}{\sqrt{2}} \left(\pi^2 s g^2(z)\right)^{\frac{1}{4}\sigma_3} \nonumber \\
  & = (z+1)^{-\frac 14\sigma_3}  \left( \frac{\rho(z) + \rho^{-1}(z)}{2} I + \frac{\rho(z) - \rho^{-1}(z)}{2} \sigma_2 \right) \left(\pi^2 s g^2(z)\right)^{\frac{1}{4}\sigma_3}
\end{align}
with $\sigma_2 =
\begin{pmatrix}
  0 & -i \\ i & 0
\end{pmatrix}$.
Since
\begin{equation}\label{h-estimate}
\rho(z)=1-\alpha (z+1)^{\frac 12}+ \frac{\alpha^2}{2} (z+1) +O\left(\left(z+1\right)^{3/2}\right), \quad z\to-1,
\end{equation}
it then follows from \eqref{eq:tildeE2} and the above formula that
\begin{equation} \label{etilde1}
\widetilde{E}_{-1}(-1) = \begin{pmatrix}
    \sqrt{\pi} \left(\frac{s^{k+1/2} - \lambda^k \eta_0}{s^k}\right)^{1/2} & \frac{i \alpha}{\sqrt{\pi}} \left(\frac{s^k}{s^{k+1/2} - \lambda^k \eta_0}\right)^{1/2} \\
    0 & \frac{1}{\sqrt{\pi}} \left(\frac{s^k}{s^{k+1/2} - \lambda^k \eta_0}\right)^{1/2}
  \end{pmatrix},
\end{equation}
and
\begin{equation} \label{etilde1prime}
 \widetilde{E}_{-1}'(-1) = \begin{pmatrix}
    \ast & \ast \\ -i \alpha \sqrt{\pi} \left(\frac{s^{k+1/2} - \lambda^k \eta_0}{s^k}\right)^{1/2} & \ast
  \end{pmatrix},
\end{equation}
where $\eta_0$ is given in \eqref{cont-eta0} and $\ast$ denotes certain unimportant entries.

\subsubsection{Final transformation}
Our final transformation is defined by
\begin{equation}\label{def:R}
R(z)=\left\{
       \begin{array}{ll}
         S(z)N(z)^{-1}, & \hbox{for $z \in \mathbb{C}\setminus D_{-1} $,} \\
         S(z)P_{-1}(z)^{-1}, & \hbox{for $z \in D_{-1}$.}
       \end{array}
     \right.
\end{equation}
It is then easily seen that $R$ satisfies the following RH problem.

\begin{rhp}\label{rhp:R}
The $2\times 2$ matrix-valued function $R(z)$ defined in \eqref{def:R} has the following properties:
\begin{enumerate}
\item[\rm (1)]  $R(z)$ is analytic in $\mathbb{C} \setminus \Sigma_{R}$,
where the contour $\Sigma_R$ is shown in Figure \ref{fig:ContourR}.

\item[\rm (2)]  $R(z)$ satisfies the jump condition  $$ R_+(z)=R_-(z)J_R(z), \qquad z\in \Sigma_R, $$ where
  \begin{equation}
                     J_{R}(z)=\left\{
                                      \begin{array}{ll}
                                        P_{-1}(z) N(z)^{-1}, & \hbox{for $z \in \partial D_{-1}$,} \\
                                        N(z) J_S(z) N(z)^{-1}, & \hbox{for $ z \in \Sigma_{R} \setminus \partial D_{-1}$.}
                                      \end{array}
\right.
  \end{equation}
\item[\rm (3)] As $z \to \infty$,
$$R(z)=I+O(1/z).$$
\end{enumerate}
\end{rhp}
%
%

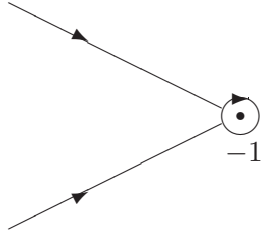
\begin{figure}[t]
\begin{center}
   \setlength{\unitlength}{1truemm}
   \begin{picture}(100,70)(-5,2)
       \put(37.5,41){\line(-2,1){28}}
       \put(37.5,39){\line(-2,-1){28}}
       \put(40,40){\thicklines\circle*{1}}
       \put(40,40){\circle{5}}

       \put(41,42.3){\thicklines\vector(1,0){.0001}}
       \put(38,34){$-1$}


       \put(20,50){\thicklines\vector(2,-1){.0001}}
       \put(20,30){\thicklines\vector(2,1){.0001}}


  \end{picture}
   \vspace{-16mm}
   \caption{Contour  $\Sigma_R$ for the RH problem \ref{rhp:R} for $R$.}
   \label{fig:ContourR}
\end{center}
\end{figure}
For $z\in \Sigma_{R} \setminus \overline{D_{-1}} $, there exits some constant $c>0$ such that
\begin{equation}
J_R(z)=I+O\left(e^{-c\sqrt{s}}\right), \qquad \textrm{as $s\to+\infty$,}
\end{equation}
uniformly valid for $\lambda$ in any compact subset of $(0,+\infty)$. This, together with \eqref{R-jump-estimate} and standard analysis (cf. \cite{Deift99book,DZ93}), implies that $R(z)$ admits an asymptotic expansion of the following form
\begin{equation}\label{R-expand}
R(z)=I + \frac 1 {s^{1/2}} \left(
\begin{array}{cc}
1 & 0 \\
 \alpha i  & 1 \\
 \end{array}
 \right) R_1(z)\left(
  \begin{array}{cc}
  1 & 0 \\
  -\alpha i  & 1 \\
  \end{array}
  \right)+O\left(s^{-1}\right),
\end{equation}
uniformly for $z \in \mathbb{C} \setminus \Sigma_R$ on compact subsets.

Next, a combination of \eqref{R-expand} and RH problem \ref{rhp:R} shows that $R_1(z)$ is a solution of the following RH problem.
\begin{rhp}
The $2\times 2$ matrix-valued function $R_1(z)$ appearing in \eqref{R-expand} has the following properties:
\begin{enumerate}
\item[\rm (1)]  $R_1(z)$ is analytic in $\mathbb{C}\setminus \partial D_{-1}$.

\item[\rm (2)]   For $z \in \partial D_{-1}$, we have
$$R_{1,+}(z)-R_{1,-}(z)=J_1(z),$$
where $J_1(z)$ is given in \eqref{Jump-J1}.
\item[\rm (3)] As $z \to \infty$, $R_1(z)=O(1/z)$.
\end{enumerate}
\end{rhp}
By Cauchy theorem and the residue theorem, we obtain that
\begin{align}\label{eq:R1}
  R_1(z) = \frac{1}{2\pi i} \oint_{\partial D_{-1}} \frac{J_1(\zeta)}{\zeta-z} d\zeta
=\left\{
    \begin{array}{ll}
      \left(
                                                                       \begin{array}{cc}
                                                                         0 & \frac {1}{8i(z+1)  \left(1 - \frac {\lambda^k \eta_0}{s^{k+1/2}} \right)  } \\
                                                                         0 & 0 \\
                                                                       \end{array}
                                                                     \right) - J_1(z), & \hbox{for $z\in D_{-1}$,} \\
      \left(
                                                                       \begin{array}{cc}
                                                                         0 & \frac {1}{8i(z+1) \left(1 - \frac {\lambda^k \eta_0}{s^{k+1/2}} \right) } \\
                                                                         0 & 0 \\
                                                                       \end{array}
                                                                     \right), & \hbox{for $z\in\mathbb{C}\setminus D_{-1}.$}
    \end{array}
  \right.
\end{align}
Furthermore, we note that
\begin{align}\label{R-1 at zero}
  (R_1(z))_{21} &= \frac{i((\rho(z)+\rho(z)^{-1})^2-1) \sqrt{z+1}}{8 g(z) } \nonumber \\
   &=  \frac{3is^{k+1/2}}{8(s^{k+1/2} - \lambda^k \eta_0 )} + i \left[ \frac{\alpha^2}{2} \frac{s^{k+1/2}}{s^{k+1/2} - \lambda^k \eta_0} + \frac{3\eta_1}{8} \frac{ \lambda^k s^{k+1/2}}{(s^{k+1/2} - \lambda^k \eta_0)^2}  \right] (z+1)
   \nonumber
   \\  &~~~ + O(z+1)^2,
\end{align}
as $z \to -1$, where the constants $\eta_0$ and $\eta_1$ are given in \eqref{cont-eta0}.

\subsection{Large $s$ asymptotics of $\frac{\partial}{\partial s} F(s;\lambda)$}

Tracing back the transformations $S \mapsto T \mapsto X$, we have from \eqref{eq:derivativeinX}, \eqref{eq:XtoT} and \eqref{eq:TtoS} that
\begin{align}
  \frac{\partial}{\partial s} F(s;\lambda) & =-
\frac{e^{\alpha \pi i}}{2 \pi i} \lim_{z \to -s} \left(X^{-1}(z)X'(z)\right)_{21} = -
\frac{e^{\alpha \pi i}}{2 \pi is } \lim_{z \to -1} \left(T^{-1}(z)T'(z)\right)_{21} \nonumber \\
& = - \frac{e^{\alpha \pi i}}{2 \pi is } \lim_{z \to -1} \left( e^{-\sqrt{s} g(z) \sigma_3} S^{-1}(z)S'(z)  e^{\sqrt{s} g(z) \sigma_3} \right)_{21}, \label{eq:derivativeinS}
\end{align}
where the derivative is taken with respect to $z$ and the limit as $z \to -s$ is taken as $z \in \textrm{region} \, V$ in Figure \ref{fig:YtoX}.

When $z$ is close to $-1$, it follows from \eqref{def:P-1} and \eqref{def:R} that
\begin{equation}
  S(z) = R(z) E_{-1}(z)\Phi^{(0)}_{\textrm{Bes}}\left(\frac{1}{4}s g^2(z)\right)\left\{
                 \begin{array}{ll}
                   e^{(-\sqrt{s}g(z)+\frac{\alpha \pi i}{2})\sigma_3}, & \hbox{for $\Im z > 0 $,} \\
                   e^{(-\sqrt{s}g(z)-\frac{\alpha \pi i}{2})\sigma_3}, & \hbox{for $\Im z < 0 $.}
                 \end{array}
               \right.
\end{equation}
Thus, if $\Im z < 0$, we obtain
\begin{align*}
  & e^{-\sqrt{s} g(z) \sigma_3} S^{-1}(z)S'(z)  e^{\sqrt{s} g(z) \sigma_3} \\
  & = e^{\frac{\alpha \pi i}{2}\sigma_3} \biggl[ {\Phi^{(0)}_{\textrm{Bes}}\left(\frac{s g^2(z)}{4}\right)}^{-1} E^{-1}_{-1}(z) R^{-1}(z) R'(z) E_{-1}(z) \Phi^{(0)}_{\textrm{Bes}}\left(\frac{s g^2(z)}{4}\right)
\\
&\qquad \qquad \qquad + {\Phi^{(0)}_{\textrm{Bes}}\left(\frac{s g^2(z)}{4}\right)}^{-1} E_{-1}^{-1}(z) E_{-1}'(z) \Phi^{(0)}_{\textrm{Bes}}\left(\frac{s g^2(z)}{4}\right)
\\
&\qquad \qquad \qquad
+ {\Phi^{(0)}_{\textrm{Bes}}\left(\frac{s g^2(z)}{4}\right)}^{-1} \left[\Phi^{(0)}_{\textrm{Bes}} \left(\frac{sg^2(z)}{4}\right)\right]' -\sqrt{s} g'(z) \sigma_3 \biggr] e^{-\frac{\alpha \pi i}{2}\sigma_3}.
\end{align*}
Plugging the above expression into \eqref{eq:derivativeinS} implies that
\begin{align}\label{eq:sderivativefinal}
  &\frac{\partial}{\partial s} F(s;\lambda)
\nonumber
\\
& = \frac{-1}{2 \pi is } \lim_{z \to -1}
  \biggl[ \Phi^{(0)}_{\textrm{Bes}}\left(\frac{s g^2(z)}{4}\right)^{-1} E^{-1}_{-1}(z) R^{-1}(z) R'(z) E_{-1}(z) \Phi^{(0)}_{\textrm{Bes}}\left(\frac{s g^2(z)}{4}\right)
\nonumber
\\
&\qquad \qquad \qquad + \Phi^{(0)}_{\textrm{Bes}}\left(\frac{s g^2(z)}{4}\right)^{-1} E_{-1}^{-1}(z) E_{-1}'(z) \Phi^{(0)}_{\textrm{Bes}}\left(\frac{s g^2(z)}{4}\right)
\nonumber
\\
&\qquad \qquad \qquad
+ \Phi^{(0)}_{\textrm{Bes}}\left(\frac{s g^2(z)}{4}\right)^{-1} \left[\Phi^{(0)}_{\textrm{Bes}} \left(\frac{sg^2(z)}{4}\right)\right]' \; \biggr]_{21}.
\end{align}

We now evaluate the limits of the three terms on the right hand side of \eqref{eq:sderivativefinal} one by one.
We start with the last term. By \eqref{eq:phiBes}, it is readily seen that
\begin{align}\label{eq:term31}
& \biggl[ \Phi^{(0)}_{\textrm{Bes}}\left(\frac{s g^2(z)}{4} \right)^{-1} \left[\Phi^{(0)}_{\textrm{Bes}} \left(\frac{sg^2(z)}{4}\right)\right]' \; \biggr]_{21}
\nonumber \\
& =\left[-2\pi i\left(I_0'\left(2\zeta^{1/2}\right)\right)^2+2 \pi i I_0\left(2\zeta^{1/2}\right)\left(\frac{\zeta^{-1/2}I_0'\left(2\zeta^{1/2}\right)}{2}+I_0''\left(2\zeta^{1/2}\right)\right)\right]\biggr|_{\zeta=s g^2(z)/4}
\nonumber
\\
&~~~\times \left(\frac{sg^2(z)}{4}\right)'.
\end{align}
Recalling the following Taylor expansion as $\xi \to 0$ of the modified Bessel function $I_0$ (cf. \cite{AS}):
$$I_0(\xi)=1+\frac {\xi^2}{4}+\cdots.$$
A direct calculation with the aid of \eqref{def:g}, \eqref{eq:term31} and the above formula shows that
\begin{equation}\label{eq:valueterm3}
  \lim_{z \to -1}
  \biggl[ \Phi^{(0)}_{\textrm{Bes}}\left(\frac{s g^2(z)}{4} \right)^{-1} \left[\Phi^{(0)}_{\textrm{Bes}} \left(\frac{sg^2(z)}{4}\right)\right]' \; \biggr]_{21} = \frac{i \pi }{2s^{2k}} \left( s^{k+1/2} - \lambda^k \eta_0 \right )^2.
\end{equation}
Next, we observe that the $(2,1)$ entry of a matrix $P$ is invariant under the conjugation \newline
$\Phi^{(0)}_{\textrm{Bes}}\left(\frac{s g^2(z)}{4} \right)^{-1} P \Phi^{(0)}_{\textrm{Bes}}\left(\frac{s g^2(z)}{4} \right) $ as $z\to -1$. It is then immediate from \eqref{relation:e-etilde}, \eqref{etilde1} and \eqref{etilde1prime} that
\begin{multline}\label{eq:valueterm2}
  \lim_{z \to -1}
  \biggl[  \Phi^{(0)}_{\textrm{Bes}}\left(\frac{s g^2(z)}{4}\right)^{-1} E_{-1}^{-1}(z) E_{-1}'(z) \Phi^{(0)}_{\textrm{Bes}}\left(\frac{s g^2(z)}{4}\right) \biggr]_{21}
\\ = \left(\widetilde{E}^{-1} (-1) \widetilde{E}'(-1)\right)_{21} = - i \pi \alpha \frac{s^{k+1/2} - \lambda^k  \eta_0 }{s^k}.
\end{multline}
Finally, for the first term, we see from \eqref{relation:e-etilde} and \eqref{R-expand} that
\begin{multline}\label{eq:3.75}
  E_{-1}^{-1} (z)R^{-1}(z) R'(z) E_{-1}(z)
 \\ = \widetilde{E}_{-1}^{-1}(z) \left( I - \frac{R_1(z)}{s^{1/2}}  + O(1/s) \right) \left(\frac{R_1'(z)}{s^{1/2}} + O(1/s) \right) \widetilde{E}_{-1}(z),
\end{multline}
for large positive $s$.
Hence,
\begin{align}\label{eq:valueterm1}
  & \lim_{z \to -1}
  \biggl[ \Phi^{(0)}_{\textrm{Bes}}\left(\frac{s g^2(z)}{4}\right)^{-1} E^{-1}_{-1}(z) R^{-1}(z) R'(z) E_{-1}(z) \Phi^{(0)}_{\textrm{Bes}}\left(\frac{s g^2(z)}{4}\right) \biggr]_{21}
 \nonumber
\\
&=\lim_{z \to -1}
  \left[  E^{-1}_{-1}(z) R^{-1}(z) R'(z) E_{-1}(z)  \right]_{21}=\frac{1}{s^{1/2}}\lim_{z \to -1}
  \left[  \widetilde E^{-1}_{-1}(z) R_1'(z) \widetilde E_{-1}(z)  \right]_{21}+O(1/s)
 \nonumber
\\
& = \pi \frac{s^{k + 1/2} - \lambda^k \eta_0}{s^{k+1/2}} \left(R_1'(-1)\right)_{21} + O(1/s)
=  i \pi \left( \frac{\alpha^2}{2} + \frac{3 \eta_1}{8} \frac{\lambda^k}{s^{k+1/2} - \lambda^k \eta_0} \right)  + O(1/s),
\end{align}
where the first equality follows again from the fact that the $(2,1)$ entry of a matrix $P$ is invariant under the conjugation $\Phi^{(0)}_{\textrm{Bes}}\left(\frac{s g^2(z)}{4} \right)^{-1} P \Phi^{(0)}_{\textrm{Bes}}\left(\frac{s g^2(z)}{4} \right) $ as $z\to -1$, the second equality follows from \eqref{eq:3.75}, and the last two equalities follow directly from \eqref{etilde1} and \eqref{R-1 at zero}. A combination of \eqref{eq:sderivativefinal}, \eqref{eq:valueterm3}, \eqref{eq:valueterm2} and \eqref{eq:valueterm1} then gives us the following large $s$ asymptotics of $\frac{\partial}{\partial s} F(s;\lambda)$:
\begin{align}\label{eq:asysderivative}
 &\frac{\partial}{\partial s} F(s;\lambda)
\nonumber
\\
&=    -\frac{1}{4s^{2k+1}} \left( s^{k+1/2} - \lambda^k \eta_0\right)^2 +  \alpha \frac{s^{k+1/2} - \lambda^k \eta_0}{2s^{k+1}} - \frac{1}{ 2s } \left( \frac{\alpha^2}{2} + \frac{3 \eta_1}{8} \frac{\lambda^k}{s^{k+1/2} - \lambda^k \eta_0} \right)  + O\left(1/s^{2}\right)
\nonumber
\\
&=-\frac{1}{4}+\frac{\alpha}{2s^{1/2}}-\frac{\alpha^2}{4s}+O\left(1/s^{\min(k+1/2,2)}\right).
\end{align}

\section{Large $s$ asymptotics of $\frac{\partial}{\partial \lambda} F(s;\lambda)$ } \label{sec:l-derivative}

To prove Theorem \ref{thm:asy}, it is also necessary to study the large $s$ asymptotics of $\frac{\partial}{\partial \lambda} F(s;\lambda)$, which, together with the large $s$ asymptotics of $\frac{\partial}{\partial s} F(s;\lambda)$, will gives us large $s$ asymptotics of $F(s;\lambda)$ up to a constant term.  It comes out that $\frac{\partial}{\partial \lambda} F(s;\lambda)$ can not be related to the RH problem $X$ directly. Indeed, a straightforward calculation shows that, if
$k=1$,
\begin{align}
& \frac{d}{d\lambda}\widetilde{K}_{\textrm{PIII}}
\nonumber
\\
&=\frac{d}{d\lambda}
\left(\frac{\vec{f}^t(u)\vec{h}(v)}{u-v}\right)
\nonumber
\\
&=-\frac{e^{\alpha \pi i}}{2\pi i uv}\left[q'(\lambda)(\psi_1(u)\psi_2(v)+\psi_2(u)\psi_1(v))-ir'(\lambda)\psi_2(u)\psi_2(v)-
ip'(\lambda)\psi_1(u)\psi_1(v)\right],
\end{align}
where the functions $p$, $q$ and $r$ are the same as those given in \eqref{eq:Psi-infty}. The prefactor $1/(uv)$ in the above formula will cause the difficulty.

To overcome this difficulty, we will consider a `scaled' version of the Fredholm determinant. To this end, we first follow \cite{ACM} and define
\begin{equation}\label{def:phi}
\Phi(z;\lambda)=
\begin{pmatrix}
1 & 0 \\
\lambda^{-1}r\left(\lambda^2\right) & 1
\end{pmatrix}
\lambda^{\frac{\sigma_3}{2}}e^{\frac{\pi i}{4}\sigma_3}\Psi\left(\lambda^2z; \lambda^2\right),
\end{equation}
where $r$ is defined by \eqref{eq:Psi-infty}. Next, for $x<0$, we set
\begin{equation}\label{eq:hatfh}
\widehat{\vec{f}}(x)=\Phi_-(x)
\begin{pmatrix}
-e^{\frac{\alpha \pi i}{2}} \\
e^{-\frac{\alpha \pi i}{2}}
\end{pmatrix}=\begin{pmatrix}
\widehat{f}_1(x) \\
\widehat{f}_2(x)
\end{pmatrix}, \qquad
\widehat{\vec{h}}(x)=-\frac{1}{2 \pi i}
\Phi_-^{-t}(x)
\begin{pmatrix}
e^{-\frac{\alpha \pi i}{2}} \\
e^{\frac{\alpha \pi i}{2}}
\end{pmatrix}=\begin{pmatrix}
\widehat{h}_1(x) \\
\widehat{h}_2(x)
\end{pmatrix}.
\end{equation}
Finally, let us denote by $\widehat{\mathcal{K}}_{\textrm{PIII}}$ the integral operator with kernel
\begin{equation}\label{eq:hatPIII}
\chi_{[-s,0]}(u) \widehat{K}_{\textrm{PIII}}(u,v)\chi_{[-s,0]}(v)=\chi_{[-s,0]}(u) \frac{\widehat{\vec{f}}^t(u)\widehat{\vec{h}}(v)}{u-v}\chi_{[-s,0]}(v)
\end{equation}
acting on the function space $L^2((-\infty,0))$. By \eqref{def:phi} and \eqref{eq:hatfh}, it is readily seen that
\begin{equation}
\chi_{[-s,0]}(u) \widehat{K}_{\textrm{PIII}}(u,v)\chi_{[-s,0]}(v)=\chi_{[-s,0]}(u) \frac{\vec{f}^t(\lambda^2 u;\lambda^2)\vec{h}(\lambda^2v;\lambda^2)}{u-v}\chi_{[-s,0]}(v).
\end{equation}
Thus, a change of variable shows that
$$\ln \det\left(I-\widehat{\mathcal{K}}_{\textrm{PIII}}\right)=F\left(\lambda^2s; \lambda^2\right). $$

Note that
\begin{equation}\label{eq:lambdaderivative}
\frac{d}{d\lambda} \ln \det\left(I-\widehat{\mathcal{K}}_{\textrm{PIII}}\right)=\frac{\partial}{\partial \lambda}F\left(\lambda^2s;\lambda^2\right)
=2\left(F_s\left(\lambda^2s;\lambda^2\right)\lambda s+F_{\lambda}\left(\lambda^2s;\lambda^2\right)\lambda\right).
\end{equation}
Our strategy is the following. For large $s$, on one hand, as we will show later, the asymptotics of  $\frac{d}{d\lambda}\ln \det\left(I-\widehat{\mathcal{K}}_{\textrm{PIII}}\right)$ can be obtained by performing a steepest descent analysis on a RH problem similar to $X$. On the other hand, the asymptotics of $F_s\left(\lambda^2s;\lambda^2\right)$ is available in view of \eqref{eq:asysderivative}. The large $s$ asymptotics of $\frac{\partial}{\partial \lambda}F(s;\lambda)$ can then be derived from \eqref{eq:lambdaderivative} via change of variables $\lambda^2s\rightarrow s$ and $\lambda^2 \rightarrow \lambda$.

In what follows, we first relate $\frac{d}{d\lambda} \ln \det\left(I-\widehat{\mathcal{K}}_{\textrm{PIII}}\right)$ to a RH problem following the same idea laid out in Section \ref{sec:sderivative}.

\subsection{A Riemann-Hilbert setting for $\frac{\partial}{\partial \lambda}F\left(\lambda^2s;\lambda^2\right)$}
To proceed, we note from \eqref{def:phi} and the RH problem for $\Psi$ that $\Phi$ satisfies the following RH problem (see also \cite{ACM}).
\begin{rhp} \label{rhp:Phi}
The function $\Phi(z)=\Phi(z;\lambda)$ defined in \eqref{def:phi} has the following properties:
\begin{enumerate}
\item[\rm (1)] $\Phi(z)$ is defined and analytic in $\mathbb{C}\setminus \{\cup^3_{j=1}\Sigma_j\cup\{0\}\}$, where the contours $\Sigma_j$   are illustrated in Figure \ref{fig:jumpsPsi}.
\item[\rm (2)] $\Phi$ satisfies the following jump conditions:
\begin{equation}\label{eq:Phi-jump}
 \Phi_+(z)=\Phi_-(z)
 \left\{
 \begin{array}{ll}
    \left(
                               \begin{array}{cc}
                                 1 & 0 \\
                               e^{\alpha \pi i}& 1 \\
                                 \end{array}
                             \right), &  z \in \Sigma_1, \\[.4cm]
    \left(
                               \begin{array}{cc}
                                0 & 1\\
                               -1 & 0 \\
                                 \end{array}
                             \right),  &  z \in \Sigma_2, \\[.4cm]
    \left(
                               \begin{array}{cc}
                                 1 & 0 \\
                                 e^{-\alpha \pi i} & 1 \\
                                 \end{array}
                             \right), &   z \in \Sigma_3.
 \end{array}  \right .  \end{equation}

\item[\rm (3)] As $z\to\infty$,
\begin{multline}\label{eq:Phi-infty}
 \Phi(z;\lambda)=
 \begin{pmatrix}
 1 & 0
 \\
 v(\lambda) & 1
 \end{pmatrix}
 \left( I + \frac{1}{z}
 \begin{pmatrix}
 w(\lambda) & v(\lambda) \\
 l(\lambda) & -w(\lambda)
 \end{pmatrix} +
O\left(z^{-2}\right) \right)\\
\times  e^{\frac{\pi i}{4}\sigma_3}z^{-\frac{1}{4} \sigma_3} \frac{I + i \sigma_1}{\sqrt{2}} e^{\lambda z^{\frac{1}{2}} \sigma_3},
   \end{multline}
where
\begin{equation}\label{def:r}
w(\lambda)=q(\lambda^2)/\lambda^2,\quad v(\lambda)=r(\lambda^2)/\lambda, \quad l(\lambda)=p(\lambda^2)/\lambda^3.
\end{equation}
\item[\rm (4)]As $z\to 0$, there exists a matrix $\Phi_0(\lambda)$, independent of $z$, such that
\begin{equation}\label{eq:Phi-origin}
\Phi(z;\lambda)=\Phi_0(\lambda)(I+O(z)) e^{-\left(-\frac{1}{z}\right)^k\sigma_3}z^{\frac \alpha2\sigma_3}\left\{
  \begin{array}{ll}
    I, &  z \in \Omega_1, \\
   \left(
                               \begin{array}{cc}
                                 1 & 0 \\
                               -e^{\alpha\pi i}& 1 \\
                                 \end{array}
                             \right),  &  z \in \Omega_2, \\
    \left(
                               \begin{array}{cc}
                                 1 &0 \\
                                  e^{-\alpha \pi i} &1 \\
                                 \end{array}
                             \right), &   z \in \Omega_3,
 \end{array}  \right .
  \end{equation}
where the regions $\Omega_i$, $i=1,2,3$, are depicted in Figure \ref{fig:jumpsPsi}.
\end{enumerate}
\end{rhp}
Furthermore, by \cite[Equations (2.11) and (2.13)]{ACM}, one has
\begin{equation}
\partial_{\lambda}\Phi
=
\begin{pmatrix}
0 & 1
\\
z-u(\lambda) & 0
\end{pmatrix}\Phi,
\end{equation}
where
\begin{equation}
u(\lambda)=2w(\lambda)-v'(\lambda)-v(\lambda)^2.
\end{equation}

This, together with \eqref{eq:hatfh}, implies that
\begin{equation}
\frac{\partial \widehat{\vec{f}}}{\partial \lambda}(x)
=
\begin{pmatrix}
0 & 1 \\
x-u(\lambda) & 0
\end{pmatrix}
\widehat{\vec{f}}(x),
\qquad
\frac{\partial \widehat{\vec{h}}}{\partial \lambda}(x) = -\begin{pmatrix}
0 &  x-u(\lambda)\\
1 & 0
\end{pmatrix}
\widehat{\vec{h}}(x),
\end{equation}
and by \eqref{eq:hatPIII},
\begin{equation}
\frac{d}{d\lambda}\widehat{K}_{\textrm{PIII}}=\widehat{\vec{f}}^t(u)
\begin{pmatrix}
0 & 1 \\
0 & 0
\end{pmatrix}\widehat{\vec{h}}(v)=\widehat{f}_1(u)\widehat{h}_2(v).
\end{equation}
Thus, we obtain
\begin{equation}\label{eq:lamdaderivative}
\frac{d}{d\lambda} \ln \det\left(I-\widehat{\mathcal{K}}_{\textrm{PIII}}\right)=-\textrm{tr}\left(\left(I-\widehat{\mathcal{K}}_{\textrm{PIII}}\right)^{-1}
\frac{d}{d\lambda}\widehat{\mathcal{K}}_{\textrm{PIII}}\right)
=-\int_{-s}^0 \widehat{F}_1(v)\widehat{h}_2(v)dv,
\end{equation}
where, as in \eqref{def:FH}, $\widehat{\vec{F}}$ and $\widehat{\vec{H}}$ are defined as
\begin{equation}\label{def:FHhat}
\widehat{\vec{F}}=
\begin{pmatrix}
\widehat{F}_1 \\
\widehat{F}_2
\end{pmatrix}:=\left(I-\widehat{\mathcal{K}}_{\textrm{PIII}}\right)^{-1}\widehat{\vec{f}}, \qquad
\widehat{\vec{H}}=
\begin{pmatrix}
\widehat{H}_1 \\
\widehat{H}_2
\end{pmatrix}:=\left(I-\widehat{\mathcal{K}}_{\textrm{PIII}}\right)^{-1}\widehat{\vec{h}}.
\end{equation}

To this end, we also set
$$\widehat{Y}(z)=I-\int_{-s}^0\frac{\widehat{\vec{F}}(w)\widehat{\vec{h}}^t(w)}{w-z}dw.$$
It is then easily seen that
\begin{equation}\label{eq:lamdaderivativeformula1}
\frac{d}{d\lambda} \ln \det\left(I-\widehat{\mathcal{K}}_{\textrm{PIII}}\right)=-\left(\widehat{Y}_\infty \right)_{12},
\end{equation}
where $\widehat{Y}_\infty$ is the residue of $\widehat{Y}$ at infinity, i.e.,
\begin{equation}
\widehat{Y}(z)=I+\frac{\widehat{Y}_\infty}{z}+O\left(z^{-2}\right), \qquad z\to\infty.
\end{equation}

Note that $\widehat{Y}$ also satisfies a RH problem similar to RH problem \ref{rhp:Y}, and the only difference is the jump condition is replaced by
\begin{equation}\label{eq:hatY-jump}
\widehat{Y}_+(x)=\widehat{Y}_-(x)\left(I-2\pi i \widehat{\vec{f}}(x)\widehat{\vec{h}}^t(x)\right),\quad x\in(-s,0).
 \end{equation}
Thus, similar to the definition of $X$ given in \eqref{eq:YtoX}, we
define
\begin{equation}\label{eq:hatYtohatX}
\widehat{X}(z)=\left\{
       \begin{array}{ll}
         \widehat{Y}(z)\Phi(z), & \hbox{for $z$ in region $\textrm{I}\cup \textrm{III}\cup \textrm{IV}$,} \\
         \widehat{Y}(z)\Phi(z)
\begin{pmatrix}
1 & 0 \\
e^{\alpha \pi i} & 1
\end{pmatrix}, & \hbox{for $z$ in region $\textrm{II}$,} \\
         \widehat{Y}(z)\Phi(z)
\begin{pmatrix}
1 & 0 \\
-e^{-\alpha \pi i} & 1
\end{pmatrix}, & \hbox{for $z$ in region $\textrm{V}$,}
       \end{array}
     \right.
\end{equation}
where the regions $\textrm{I}-\textrm{V}$ are shown in Figure \ref{fig:YtoX}. The following RH problem is straightforward to check from the above definition.

\begin{rhp}
The function $\widehat{X}$ defined in \eqref{eq:hatYtohatX} has the following properties:
\begin{enumerate}
\item[\rm (1)] $\widehat{X}(z)$ is defined and analytic in $\mathbb{C}\setminus \{\cup^4_{j=1}\Gamma_j^{(s)}\cup\{-s\}\}$, where $\Gamma_j^{(s)}$ is defined in \eqref{def:gammais}.

\item[\rm (2)] $\widehat{X}$ satisfies the same jump conditions \eqref{eq:X-jump} as $X$.

\item[\rm (3)] As $z\to\infty$,
\begin{equation}\label{eq:hatX-infty}
 \widehat{X}(z)=
 \begin{pmatrix}
 1 & 0 \\
 v(\lambda) & 1
 \end{pmatrix}
 \left( I + \frac{\widehat{X}_\infty}{z}+O\left(\frac{1}{z^2}\right)
 \right) e^{\frac{\pi i}{4}\sigma_3}z^{-\frac{1}{4} \sigma_3} \frac{I + i \sigma_1}{\sqrt{2}} e^{\lambda z^{\frac{1}{2}} \sigma_3},
   \end{equation}
where
\begin{align}\label{def:hatXinfty}
\widehat{X}_\infty & =
\begin{pmatrix}
1 & 0 \\
-v(\lambda) & 1
\end{pmatrix}
\widehat {Y}_\infty
\begin{pmatrix}
1 & 0 \\
v(\lambda) & 1
\end{pmatrix}
+\begin{pmatrix}
 w(\lambda) & v(\lambda) \\
 l(\lambda) & -w(\lambda)
 \end{pmatrix}
\nonumber
\\
&=
\begin{pmatrix}
\ast & \left(\widehat Y_\infty \right)_{12}+v(\lambda) \\
\ast & \ast
\end{pmatrix}.
\end{align}

\item[\rm (4)] We have
\begin{equation}
\widehat{X}(z)=\left\{
                 \begin{array}{ll}
                   O(1)(I+O(z)) e^{\frac{(-1)^{k+1}}{z^k} \sigma_3}z^{\frac \alpha2\sigma_3}, & \hbox{as $z \to 0$,} \\
                   O(\ln(z+s)), & \hbox{as $z \to -s$.}
                 \end{array}
               \right.
\end{equation}

\end{enumerate}
\end{rhp}

The connection between $\widehat{X}$ and $\frac{\partial}{\partial \lambda}F\left(\lambda^2s;\lambda^2\right)$ is stated in the following proposition, which follows directly from
\eqref{eq:lamdaderivativeformula1} and \eqref{def:hatXinfty}.
\begin{prop}
We have
\begin{equation}\label{eq:lamdaderivativef2}
\frac{\partial}{\partial \lambda}F\left(\lambda^2s;\lambda^2\right)=\frac{d}{d\lambda} \ln \det\left(I-\widehat{\mathcal{K}}_{\textrm{PIII}}\right)=v(\lambda)-\left(\widehat{X}_\infty\right)_{12}.
\end{equation}
\end{prop}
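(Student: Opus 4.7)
The plan is short, since both equalities essentially assemble facts already in hand. The first equality follows from the identity $\ln\det(I-\widehat{\mathcal{K}}_{\textrm{PIII}})=F(\lambda^2 s;\lambda^2)$ noted just before \eqref{eq:lambdaderivative}; differentiating in $\lambda$ with $s$ held fixed yields $\frac{d}{d\lambda}\ln\det(I-\widehat{\mathcal{K}}_{\textrm{PIII}})=\frac{\partial}{\partial\lambda}F(\lambda^2 s;\lambda^2)$ at once.

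For the second equality, I would combine \eqref{eq:lamdaderivativeformula1}, namely $\frac{d}{d\lambda}\ln\det(I-\widehat{\mathcal{K}}_{\textrm{PIII}})=-(\widehat{Y}_\infty)_{12}$, with the matrix identity \eqref{def:hatXinfty}. The key algebraic observation is that the similarity $A\mapsto\begin{pmatrix}1 & 0 \\ -v(\lambda) & 1\end{pmatrix}A\begin{pmatrix}1 & 0 \\ v(\lambda) & 1\end{pmatrix}$ by lower-triangular factors preserves the $(1,2)$-entry of $A$, so taking the $(1,2)$-entry in \eqref{def:hatXinfty} immediately gives $(\widehat{X}_\infty)_{12}=(\widehat{Y}_\infty)_{12}+v(\lambda)$. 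Substituting this relation into \eqref{eq:lamdaderivativeformula1} produces $\frac{d}{d\lambda}\ln\det(I-\widehat{\mathcal{K}}_{\textrm{PIII}})=v(\lambda)-(\widehat{X}_\infty)_{12}$, as required.

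There is no genuine obstacle; the only step that deserves a moment of care is the derivation of \eqref{def:hatXinfty} itself, which is where the residue data $w(\lambda),v(\lambda),l(\lambda)$ of $\Phi$ at infinity enter the argument. That derivation is a matching calculation in the region where $\widehat{X}(z)=\widehat{Y}(z)\Phi(z)$: expanding $\widehat{Y}(z)=I+\widehat{Y}_\infty/z+O(z^{-2})$ together with \eqref{eq:Phi-infty}, one compares against the prescribed large-$z$ form \eqref{eq:hatX-infty} and reads off the coefficient of $1/z$ after stripping the common prefactor $\begin{pmatrix}1&0\\v(\lambda)&1\end{pmatrix}$ on the left and the outer factor $e^{\pi i\sigma_3/4}z^{-\sigma_3/4}\tfrac{I+i\sigma_1}{\sqrt 2}e^{\lambda z^{1/2}\sigma_3}$ on the right. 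This is routine bookkeeping rather than a substantive difficulty, and once it is in place the proposition is essentially a one-line consequence of extracting the $(1,2)$-entry and rearranging.
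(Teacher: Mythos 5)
Your proof is correct and follows exactly the paper's own line of reasoning: the first equality is just differentiating the change-of-variables identity $\ln\det(I-\widehat{\mathcal K}_{\textrm{PIII}})=F(\lambda^2 s;\lambda^2)$, and the second combines \eqref{eq:lamdaderivativeformula1} with the $(1,2)$-entry of \eqref{def:hatXinfty}, which the paper itself displays as $(\widehat Y_\infty)_{12}+v(\lambda)$. Your observation that the lower-triangular conjugation preserves the $(1,2)$-entry is precisely the computation underlying the last equality in \eqref{def:hatXinfty}, so there is no divergence from the paper's argument.
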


\subsection{Asymptotic analysis of the Riemann-Hilbert problem for $\widehat X$}
\label{sec:l-asyanalysis}

In this section, we shall perform the Deift-Zhou steepest descent analysis to the RH problem for $\widehat X$ as $s \to +\infty$, which is similar to the analysis carried out in Section \ref{sec:sasyanalysis}.

\subsubsection{Transformations $\widehat X\to \widehat T \to \widehat S$}
Define consecutive transformations
\begin{equation}\label{eq:hatXtohatT}
\widehat T(z) =\widehat X(sz)
\end{equation}
and
\begin{equation}\label{eq:hatTtohatS}
\widehat S(z)=
s^{\frac{1}{4}\sigma_3}e^{-\frac{1}{4}\pi i \sigma_3}
\begin{pmatrix}
1 & 0
\\
 s \, \widehat{\mathfrak{g}}_1 -v(\lambda)  & 1
\end{pmatrix}\widehat T(z) e^{-\sqrt{s}\widehat g(z)\sigma_3},
\end{equation}
where
\begin{equation}\label{def:hatg}
\widehat g(z):=\sqrt{z+1}\left(\lambda+\frac{(-1)^{k+1}}{s^{k+1/2}} \sum_{j=1}^k \frac{c_j}{z^j} \right), \qquad z\in\mathbb{C}\setminus(-\infty, -1],
\end{equation}
with the coefficients $c_j$ given in \eqref{g-exp-cj},
and
\begin{equation} \label{contant-g1-new}
  \widehat{\mathfrak{g}}_1= \frac{\lambda}{2} + \frac{(-1)^{k+1}}{s^{k+1/2}} c_1.
\end{equation}

From the definition of $\widehat g$ in \eqref{def:hatg}, we have
$$\widehat g_{+}(x)+\widehat g_-(x)=0, \qquad x<-1,$$
and
\begin{equation}
  \widehat g(z)= \left\{
             \begin{array}{ll}
               \lambda \sqrt{z} + \widehat{\mathfrak{g}}_1 z^{-1/2} + O(z^{-3/2}), & \hbox{as $z\to \infty$,} \\
               \frac{(-1)^{k+1}}{s^{k+1/2}z^k} +  O(1), & \hbox{as $z \to 0$,}
             \end{array}
           \right.
\end{equation}
where $\widehat{\mathfrak{g}}_1$ is given in \eqref{contant-g1-new}.


It is then straightforward to check that $\widehat S$ satisfies the following RH problem.
\begin{rhp}
The function $\widehat S$ defined in \eqref{eq:hatTtohatS} has the following properties:
\begin{enumerate}
\item[\rm (1)] $\widehat S(z)$ is defined and analytic in $\mathbb{C} \setminus
\{\cup^4_{j=1}\Gamma_j^{(1)}\cup\{-1\}\}$.

\item[\rm (2)] $\widehat S$ satisfies the following jump conditions:
\begin{equation}\label{eq:hatS-jump}
 \widehat S_+(z)=\widehat S_-(z)
 \left\{
 \begin{array}{ll}
\begin{pmatrix}
e^{\alpha \pi i} & 0 \\
0 & e^{-\alpha \pi i}
\end{pmatrix}, &  z \in \Gamma_1^{(1)}, \\[.4cm]
    \left(
                               \begin{array}{cc}
                                1 & 0\\
                               e^{\alpha \pi i-2\sqrt{s}\widehat g(z)} & 1 \\
                                 \end{array}
                             \right),  &  z \in \Gamma_2^{(1)}, \\[.4cm]
    \left(
                               \begin{array}{cc}
                                0 & 1\\
                               -1 & 0 \\
                                 \end{array}
                             \right),  &  z \in \Gamma_3^{(1)}, \\[.4cm]
    \left(
                               \begin{array}{cc}
                                 1 & 0 \\
                                 e^{-\alpha\pi i-2\sqrt{s}\widehat g(z)} & 1 \\
                                 \end{array}
                             \right), &   z \in \Gamma_4^{(1)}.
 \end{array}  \right .  \end{equation}

\item[\rm (3)] As $z\to\infty$,
\begin{equation}\label{eq:hatS-infty}
 \widehat S(z)=
\left( I + \frac{\widehat{S}_\infty  }{z}+O\left(\frac{1}{z^2}\right) \right)
 z^{-\frac{1}{4} \sigma_3} \frac{I + i \sigma_1}{\sqrt{2}},
   \end{equation}
with
\begin{equation}\label{eq:hatsresidue}
\left(\widehat{S}_\infty \right)_{12}
=     -i\frac{\left(\widehat X_\infty \right)_{12}}{\sqrt{s}}+i\sqrt{s} \, \widehat{\mathfrak{g}}_1,
\end{equation}
where $\widehat{\mathfrak{g}}_1$ is given in \eqref{contant-g1-new}.


\item[\rm (4)]We have
\begin{equation}\label{eq:hatS-origin}
 \widehat{S}(z)=\left\{
        \begin{array}{ll}
          O(1)z^{\frac \alpha2\sigma_3}, & \hbox{as $z\to 0$,} \\
          O(\ln(z+1)), & \hbox{as $z \to -1$.}
        \end{array}
      \right.
 \end{equation}
\end{enumerate}
\end{rhp}

By \eqref{def:hatg}, again, one verifies that for sufficiently large positive $s$,
$$\Re \widehat g(z)>0, \qquad z\in\Gamma_2^{(1)}\cup\Gamma_4^{(1)}.$$
Hence, the jump matrix of $\widehat S$ on $\Gamma_2^{(1)}\cup\Gamma_4^{(1)}$ tends to the identity matrix exponentially fast as $s\to+ \infty$. This uniform convergence is not valid anymore for $z$ near $-1$.

\subsubsection{Outer parametrix, local parametrix near $-1$ and final transformation}
By ignoring the exponentially small terms in the jumps \eqref{eq:hatS-jump} and a neighborhood of $-1$, we are led to the  RH problem \ref{rhp:N} for the outer parametrix $N$, whose solution is given by \eqref{Nsolution}.

In a small, open neighborhood $D_{-1}$ near $-1$ with fixed radius, it is natural to consider a function $\widehat P_{-1}$ satisfying a RH problem as follows.
\begin{rhp} \label{rhp:hatP1}
We look for a $2\times 2$ matrix-valued function
$\widehat P_{-1}(z)$ satisfying
\begin{enumerate}
\item[\rm (1)] $\widehat P_{-1}(z)$ is defined and analytic in  $\overline{D_{-1}} \setminus
\{\cup^4_{j=1}\Gamma_j^{(1)}\cup\{-1\}\}$.
\item[\rm (2)] $\widehat P_{-1}(z)$ satisfies the same jump conditions \eqref{eq:hatS-jump} as $\widehat S$ for $z \in D_{-1}\cap \{\cup^4_{j=1}\Gamma_j^{(1)}\}.$

\item[\rm (3)] As $s \to +\infty$, $\widehat P_{-1}$ matches $N(z)$ on the boundary $\partial D_{-1}$ of $D_{-1}$, i.e.,
\begin{equation}\label{eq:mathcinghatp1}
\widehat P_{-1}(z)=
N(z)(I+o(1)), \qquad z\in \partial D_{-1}.
\end{equation}

\end{enumerate}
\end{rhp}
This RH problem is similar to RH problem \ref{rhp:P1} for $P_{-1}$, the only difference is that the function $g$ in the jump condition now is replaced by $\widehat g$. Hence, following the same arguments in Section \ref{sec:localpara}, it is readily seen that $\widehat P_{-1}(z)$ is constructed in terms of the Bessel parametrix of order 0 as follows:
\begin{equation} \label{def:hatP-1}
\widehat P_{-1}(z)=\widehat E_{-1}(z)\Phi^{(0)}_{\textrm{Bes}}\left(\frac{1}{4}s \widehat g^2(z)\right)\left\{
                 \begin{array}{ll}
                   e^{(-\sqrt{s}\widehat g(z)+\frac{\alpha \pi i}{2})\sigma_3}, & \hbox{for $D_{-1}\cap \{z~|~\Im z > 0 \}$,} \\
                   e^{(-\sqrt{s}\widehat g(z)-\frac{\alpha \pi i}{2})\sigma_3}, & \hbox{for $D_{-1}\cap \{z~|~\Im z < 0 \}$,}
                 \end{array}
               \right.
\end{equation}
where $\Phi^{(0)}_{\textrm{Bes}}$ is given in \eqref{eq:phiBes} with $\alpha=0$ and
\begin{equation}\label{def:hatE}
\widehat E_{-1}(z):=N(z)\left\{\begin{array}{ll}
                   e^{-\frac{\alpha \pi i}{2}\sigma_3}\frac{I - i \sigma_1}{\sqrt{2}}\left(\pi^2 s \widehat g^2(z)\right)^{\frac{1}{4}\sigma_3}, & \hbox{for $\Im z > 0 $,} \\
                   e^{\frac{\alpha \pi i}{2}\sigma_3}\frac{I - i \sigma_1}{\sqrt{2}}\left(\pi^2 s \widehat g^2(z)\right)^{\frac{1}{4}\sigma_3}, & \hbox{for $ \Im z < 0 $.}
                 \end{array}
 \right.
\end{equation}
is analytic in $D_{-1}$.

The final transformation is defined by
\begin{equation}\label{def:hatR}
\widehat R(z)=\left\{
       \begin{array}{ll}
         \widehat S(z)N(z)^{-1}, & \hbox{for $z \in \mathbb{C}\setminus D_{-1} $,} \\
         \widehat S(z) \widehat P_{-1}(z)^{-1}, & \hbox{for $z \in D_{-1}$.}
       \end{array}
     \right.
\end{equation}

We then have the following RH problem for $\widehat R$.
\begin{rhp}\label{rhp:hatR}
The $2\times 2$ matrix-valued function $\widehat R(z)$ defined in \eqref{def:hatR} has the following properties:
\begin{enumerate}
\item[\rm (1)]  $\widehat R(z)$ is analytic in $\mathbb{C} \setminus \Sigma_{R}$,
where the contour $\Sigma_R$ is shown in Figure \ref{fig:ContourR}.

\item[\rm (2)]  $\widehat R(z)$ satisfies the jump condition
$$ \widehat R_+(z)= \widehat R_-(z)J_{\widehat R}(z), \qquad z\in\Sigma_R,$$
where
  \begin{equation}
                     J_{{\widehat R}}(z)=\left\{
                                      \begin{array}{ll}
                                        \widehat P_{-1}(z) N(z)^{-1}, & \hbox{for $z \in \partial D_{-1}$,} \\
                                        N(z) J_{\widehat S}(z) N(z)^{-1}, & \hbox{for $ z \in \Sigma_{R} \setminus \partial D_{-1}$.}
                                      \end{array}
\right.
  \end{equation}
\item[\rm (3)] As $z \to \infty$,
$$\widehat R(z)=I+O(1/z).$$
\end{enumerate}
\end{rhp}
The RH problem \ref{rhp:hatR} is equivalent to the following singular integral equation:
\begin{equation}\label{eq:integraloperator}
\widehat R(z)=I+\frac{1}{2\pi i}\int_{\Sigma_R}\widehat R_-(w)\left(J_{\widehat R}(w)-I\right)\frac{dw}{w-z}, \qquad z\in\mathbb{C}\setminus \Sigma_R.
\end{equation}
Furthermore, for large positive $s$, there exits some constant $c>0$ such that
\begin{equation}\label{eq:hatRexpo}
J_{\widehat {R}}(z)=I+O\left(e^{-c\sqrt{s}}\right), \qquad z\in \Sigma_{R} \setminus \overline{D_{-1}};
\end{equation}
while for $z\in\partial D_{-1}$, as in \eqref{R-jump-estimate}, we have
\begin{align}
J_{\widehat R}(z)
=  I+\frac{1}{s^{1/2}} \begin{pmatrix}
    1 & 0 \\ i \alpha & 1
  \end{pmatrix}  \widehat J_1(z)
  \begin{pmatrix}
    1 & 0 \\ -i \alpha & 1
  \end{pmatrix} + O\left(s^{-1}\right), \label{hatR-jump-estimate}
\end{align}
where
\begin{align}
  \widehat J_{1}(z)=\left(
  \begin{array}{cc}
    \frac {\rho(z)^{-2}-\rho(z)^2}{8 \widehat g(z)} & \frac{-i((\rho(z)+\rho(z)^{-1})^2-3)}{8 \sqrt{z+1} \, \widehat g(z)}  \\
    \frac{-i((\rho(z)+\rho(z)^{-1})^2-1) \sqrt{z+1}}{8 \widehat g(z)} &\frac {\rho(z)^2-\rho(z)^{-2}}{8 \widehat g(z)} \\
  \end{array}
\right)
\end{align}
with $\rho(z)$ defined in \eqref{def:rho}. Hence,
as in \eqref{R-expand}, $\widehat R(z)$ has the following asymptotic expansion
\begin{equation}\label{hatR-expand}
\widehat R(z)=I + \frac 1 {s^{1/2}} \left(
\begin{array}{cc}
1 & 0 \\
 \alpha i  & 1 \\
 \end{array}
 \right) \widehat R_1(z)\left(
  \begin{array}{cc}
  1 & 0 \\
  -\alpha i  & 1 \\
  \end{array}
  \right)+O\left(s^{-1}\right),
\end{equation}
where
\begin{align}\label{eq:hatR1}
  \widehat R_1(z) = \frac{1}{2\pi i} \oint_{\partial D_{-1}} \frac{\widehat J_1(\zeta)}{\zeta-z} d\zeta
=\left\{
    \begin{array}{ll}
      \left(
                                                                       \begin{array}{cc}
                                                                         0 & \frac {1}{8i(z+1)\left(\lambda -  \frac{\eta_0}{s^{k+1/2}  } \right)} \\
                                                                         0 & 0 \\
                                                                       \end{array}
                                                                     \right) - \widehat J_1(z), & \hbox{for $z\in D_{-1}$,} \\
      \left(
                                                                       \begin{array}{cc}
                                                                         0 & \frac {1}{8i(z+1)\left(\lambda -  \frac{\eta_0}{s^{k+1/2}  } \right)} \\
                                                                         0 & 0 \\
                                                                       \end{array}
                                                                     \right), & \hbox{for $z\in\mathbb{C}\setminus D_{-1}$,}
    \end{array}
  \right.
\end{align}
and where $\eta_0$ is given in \eqref{cont-eta0}.

\subsection{Large $s$ asymptotics of $\frac{\partial}{\partial \lambda} F(s;\lambda)$}
By \eqref{eq:lamdaderivativef2}, \eqref{contant-g1-new} and \eqref{eq:hatsresidue}, it follows that
\begin{equation}\label{eq:lamdaderivativef3}
\frac{\partial}{\partial \lambda} F\left(\lambda^2s;\lambda^2\right)=\frac{d}{d\lambda} \ln \det\left(I-\widehat{\mathcal{K}}_{\textrm{PIII}}\right)= v(\lambda)-i\sqrt{s}\left(\widehat S_\infty\right)_{12}- \frac{\lambda}{2}s+O\left(s^{-1/2}\right).
\end{equation}
To find the large $s$ asymptotics of $\left(\widehat S_\infty \right )_{12}$, we first observe from \eqref{def:hatR} that
\begin{equation}\label{eq:hatSinhatR}
\widehat S(z)=\widehat R(z)N(z), \qquad z\in\mathbb{C}\setminus D_{-1}.
\end{equation}
Next, let $z\to \infty$, on one hand, it is readily seen from \eqref{Nsolution} that
\begin{equation}\label{eq:Nasy}
N(z)=\left(I+\frac{N_\infty}{z}+O\left(z^{-2}\right)\right)z^{-\frac{1}{4}\sigma_3}\frac{I+i\sigma_1}{\sqrt{2}},
\end{equation}
where
$$N_{\infty}
=
\begin{pmatrix}
\ast & i\alpha
\\
\ast & \ast
\end{pmatrix}.$$
On the other hand, we see from \eqref{eq:integraloperator} that
\begin{equation}\label{eq:hatRasy}
\widehat R(z)=I+\frac{\widehat R_\infty}{z}+O\left(z^{-2}\right),
\end{equation}
where
$$
\widehat R_\infty=\frac{i}{2 \pi}\int_{\Sigma_R}\widehat R_-(w)\left(J_{\widehat R}(w)-I\right)dw.
$$
If we further let $s\to +\infty$, this, together with \eqref{eq:hatRexpo}, \eqref{hatR-jump-estimate} and \eqref{hatR-expand}, implies that
\begin{align}
\widehat R_\infty&=\frac{i}{2\pi s^{1/2}}\oint_{\partial D_{-1}}\begin{pmatrix}
    1 & 0 \\ i \alpha & 1
  \end{pmatrix}  \widehat J_1(w)
  \begin{pmatrix}
    1 & 0 \\ -i \alpha & 1
  \end{pmatrix}dw + O\left(s^{-1}\right)
\nonumber
\\
&=\frac{1}{ s^{1/2}}
\begin{pmatrix}
\ast & -\frac{i}{8}\frac{1}{\lambda- \frac{\eta_0}{s^{k+1/2}} }
\\
\ast & \ast
\end{pmatrix}+ O\left(s^{-1}\right). \label{eq:hatRasy2}
\end{align}
Finally, inserting \eqref{eq:Nasy}, \eqref{eq:hatRasy} and \eqref{eq:hatRasy2} into \eqref{eq:hatSinhatR}, it is easily seen from \eqref{eq:hatS-infty} that
$$
\left(\widehat S_\infty \right)_{12}=i\alpha-\frac{i}{8s^{1/2}}\frac{1}{\lambda-  \frac{\eta_0}{s^{k+1/2}} }+O(1/s).
$$
Substituting the above asymptotics into \eqref{eq:lamdaderivativef3}, we arrive at
\begin{equation}\label{eq:lamdaderivativef4}
\frac{\partial}{\partial \lambda} F\left(\lambda^2s;\lambda^2\right)= -\frac{\lambda s}{2} +
\alpha s^{1/2} + v(\lambda)-\frac{1}{8\lambda}+O\left(s^{-1/2}\right),\quad s\to+\infty.
\end{equation}

Recall now the identity \eqref{eq:lambdaderivative}. On account of \eqref{eq:asysderivative}, it is readily seen that
\begin{equation}\label{eq:Fsasy}
F_s\left(\lambda^2s;\lambda^2\right)\lambda s=-\frac{\lambda s}{4}+\frac{\alpha}{2}s^{1/2}-\frac{\alpha^2}{4\lambda}+O\left(s^{-1/2}\right).
\end{equation}
A combination of \eqref{eq:lambdaderivative}, \eqref{eq:lamdaderivativef4} and \eqref{eq:Fsasy} gives us
\begin{equation} \label{eq:asylambdaderivative-1}
F_\lambda\left(\lambda^2s;\lambda^2\right) = \frac{v(\lambda)}{2\lambda}+\frac{\alpha^2}{4\lambda^2}-\frac{1}{16\lambda^2}+O\left(s^{-1/2}\right),\quad s\to+\infty,
\end{equation}
or equivalently, by \eqref{def:r},
\begin{equation}\label{eq:asylambdaderivative}
F_\lambda(s;\lambda) = \frac{1}{2\lambda} \left(r(\lambda)+\frac{\alpha^2}{2}-\frac{1}{8}\right)+O\left(s^{-1/2}\right),\quad s\to+\infty.
\end{equation}

We are now ready to prove Theorem \ref{thm:asy}.

\section{Proof of Theorem \ref{thm:asy}}
\label{sec:proofthmasy}

From \eqref{eq:asysderivative} and \eqref{eq:asylambdaderivative}, it is immediate that
\begin{multline}\label{eq:asyPIIIconst}
\ln \det\left(I-\widetilde{\mathcal{K}}_{\textrm{PIII}}\right) = -\frac{1}{4}s+\alpha s^{1/2}-\frac{\alpha^2}{4}\ln s \\
+\int_0^\lambda \frac{1}{2t} \left(r(t)+\frac{\alpha^2}{2}-\frac{1}{8}\right) dt+ \widehat \tau_{\alpha}+ O\left(s^{-1/2}\right),\quad s\to+\infty,
\end{multline}
where $\widehat \tau_\alpha$ is a constant independent of $\lambda$ and $s$. From the asymptotics of $r(\lambda)$ given in \eqref{eq:rsmall}, we have that the integral in the above formula is well-defined, and tends to 0 as $\lambda \to 0^+$.

By \eqref{eq:PIIItoBesl}, we have
\begin{equation}\label{eq:dettransition}
\lim_{\lambda \to 0^+}\ln \det\left(I-\widetilde{\mathcal{K}}_{\textrm{PIII}}\right)=\ln \det\left(I-\mathcal{K}_{\textrm{Bes}}\right),
\end{equation}
where $\mathcal{K}_{\textrm{Bes}}$ is the integral operator with kernel $\chi_{[0,s]}(u) K_{\textrm{Bes}}(u,v) \chi_{[0,s]}(v)$ acting on the function space $L^2((0,\infty))$. We mention that \eqref{eq:dettransition} can also be seen from the consistency of \eqref{eq:TW} with the logarithm of \eqref{eq:TWbessel}, as shown in the Appendix below. Letting $s\to +\infty$ on both side of the above formula, it then follows from the large $s$ asymptotics of $\det(I-\mathcal{K}_{\textrm{Bes}})$ in \eqref{eq:asyBesDet} and \eqref{eq:asyPIIIconst} that
$$\widehat\tau_\alpha= \tau_\alpha = \ln \left(\frac{G(1+\alpha)}{(2\pi )^{\alpha/2}} \right).$$

This completes the proof of Theorem \ref{thm:asy}.
\qed

\section{Derivation of the coupled Painlev\'{e} III system}
\label{sec:lax-pair}

In this section, we derive the coupled Painlev\'{e} III system \eqref{eq:PIIIequ} which is relevant to the gap probability.

\subsection{Riemann-Hilbert problem for $M$}
We start with a matrix-valued function $M(z;\lambda,s)$, which is defined via $X(z;\lambda,s)$ in \eqref{eq:YtoX} in the following way:
\begin{equation}\label{eq:MandX}
M(z;\lambda,s)=
\begin{pmatrix}
1 & 0 \\
a(\lambda;s) & 1
\end{pmatrix}
\lambda^{\frac{\sigma_3}{2}}e^{\frac{\pi i}{4}\sigma_3}X\left(\lambda^2z; \lambda^2,\lambda^2s \right).
\end{equation}
From \eqref{eq:derivativeinX}, it is readily seen that $M$ is related to the gap probability as follows:
\begin{equation}\label{eq:derivative-s-M}
F_s\left(\lambda^2s;\lambda^2\right)=-
\frac{e^{\alpha \pi i}}{2 \pi i \lambda^2} \lim_{z \to -s} \left(M_-^{-1}(z)M_-'(z)\right)_{21}, \quad \lambda>0,
\end{equation}
where the limit is taken as $z$ tends to $-s$ in region $\textrm{V}$. This will be our starting point in the proof of Theorem \ref{thm:TW}.

In view of the RH problem for $X$ stated in Proposition \ref{prop:RHPforX}, we have

\begin{rhp}\label{rhp:M}
The function $M$ defined in \eqref{eq:MandX} has the following properties:
\begin{enumerate}
\item[\rm (1)] $M(z)$ is analytic in $\mathbb{C}\setminus \{\cup^4_{j=1}\Gamma_j^{(s)}\cup\{-s\}\}$, where $\Gamma_j^{(s)}$ is defined in \eqref{def:gammais}, as illustrated in solid lines in Figure \ref{fig:YtoX}.

%
%
%
%
%
%
%
%
%
%
%

\item[\rm (2)] $M(z)$ satisfies the same jump conditions \eqref{eq:X-jump} as $X$.

\item[\rm (3)] As $z\to\infty$,
\begin{equation}\label{eq:M-infty}
 M(z)=
 \begin{pmatrix}
 1 & 0 \\
 a(\lambda;s) & 1
 \end{pmatrix}
 \left( I + \frac{M_\infty}{z}+O\left(\frac{1}{z^2}\right)
 \right) e^{\frac{\pi i}{4}\sigma_3}z^{-\frac{1}{4} \sigma_3} \frac{I + i \sigma_1}{\sqrt{2}} e^{\lambda z^{\frac{1}{2}} \sigma_3},
   \end{equation}
where
\begin{equation}\label{def:a}
a(\lambda;s) = \left(M_\infty \right)_{12}.
\end{equation}

\item[\rm (4)] As $z\to 0$,

\begin{equation}\label{eq:M-0}
M(z)=M^{(0)}(z)e^{\frac{(-1)^{k+1}}{z^k}\sigma_3}z^{\frac{\alpha}{2}\sigma_3} ,
\end{equation}
where $M^{(0)}(z)$ is analytic at $z=0$.
\item[\rm (5)] As $z\to -s$,
\begin{equation}\label{eq:M-s}
M(z)=M^{(s)}(z)\left(I+\frac{1}{2\pi i}\log(z+s)\sigma_+\right)E^{(s)}(z),
\end{equation}
where  $M^{(s)}(z)$ is analytic at $z=-s$, i.e., there exist matrices $M^{(s)}_j$, $j=0,1,2,\ldots,$ such that \begin{equation}\label{eq:expnear0Ms}
M^{(s)}(z)=M^{(s)}_0\left(I+\displaystyle\sum_{j=1}^{\infty}M^{(s)}_j(z+s)^k \right),
\end{equation}
$\sigma_+
=\begin{pmatrix}
0 & 1
\\
0 & 0
\end{pmatrix}$,
and $E^{(s)}(z)$ is a piecewise constant matrix-valued function defined by
\begin{equation}\label{def:E-i}
E^{(s)}(z) = \left\{
       \begin{array}{ll}
        e^{\frac{\alpha\pi i}{2}\sigma_3}, & \hbox{$z\in$ II,} \\
        \left( \begin{array}{cc}
                                                 e^{\frac{\alpha\pi i}{2}} & 0 \\
                                                -  e^{\frac{\alpha\pi i}{2}} &  e^{-\frac{\alpha\pi i}{2}} \\
                                               \end{array}
                                             \right), & \hbox{$z\in$ III,} \\
          \left(
                                               \begin{array}{cc}
                                                e^{-\frac{\alpha\pi i}{2}} & 0 \\
                                             e^{-\frac{\alpha\pi i}{2}} & e^{\frac{\alpha\pi i}{2}} \\
                                               \end{array}
                                             \right), & \hbox{$z\in$ IV} \\
         e^{-\frac{\alpha\pi i}{2}\sigma_3}, & \hbox{$z\in$ V,}
       \end{array}
     \right.
\end{equation}
with the regions II--V being depicted in Figure \ref{fig:YtoX}.
\end{enumerate}
\end{rhp}

The analyticity of the functions $M^{(0)}(z)$ and $M^{(s)}(z)$ follows directly from the jump condition for $M(z)$, and
the definition of the piecewise constant matrix-valued function $E^{(s)}(z)$ in \eqref{def:E-i} (for $M^{(s)}(z)$).

We now show that the RH problem for $M$ is uniquely solvable through a standard vanishing lemma argument.

\begin{lem}[Vanishing Lemma]\label{Vanishing lemma}
Let $\widehat{M}$ satisfies the \lq homogeneous\rq\ version of the RH problem for $M$, i.e., $\widehat{M}$ satisfies RH problem \ref{rhp:M} but with the asymptotic behavior near infinity replaced by
\begin{equation}\label{eq:hat M-infty}
 \widehat{M}(z)=
 O(1/z) e^{\frac{\pi i}{4}\sigma_3}z^{-\frac{1}{4} \sigma_3} \frac{I + i \sigma_1}{\sqrt{2}} e^{\lambda z^{\frac{1}{2}} \sigma_3}.
   \end{equation}
Then, this RH problem has only the trivial solution for the parameters $\alpha>-1$, $s\geq 0$ and $\lambda>0$,
that is, $$\widehat{M}(z) \equiv 0.$$
\end{lem}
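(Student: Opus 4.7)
The plan is to follow the classical \emph{vanishing lemma} strategy in Riemann-Hilbert analysis by introducing the auxiliary matrix
\[
P(z) := \widehat{M}(z)\,\sigma_1\,\widehat{M}(\bar z)^{\ast},
\]
where ${}^{\ast}$ denotes conjugate transpose. The aim is to show that $P$ is entire and decays like $O(|z|^{-2})$ at infinity, so that Liouville's theorem forces $P\equiv 0$, and then to conclude $\widehat{M}\equiv 0$ from this vanishing.

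The choice of $\sigma_1$ is dictated by the jump matrices. On the real-line pieces $(-s,0)$ and $(-\infty,-s)$, the unitary jumps $J_1$ and $J_3$ both satisfy $J_j\sigma_1=\sigma_1 J_j^{\ast}$, so $P$ has no jump there. On the non-real contours $\Gamma_2^{(s)}$ and $\Gamma_4^{(s)}$, which are mutual complex conjugates, a direct check gives the linking identity $J_2\sigma_1=\sigma_1 J_4^{\ast}$, which combined with Schwarz reflection in the second factor cancels the jump of $P$ across both. For the isolated singularities, I will substitute the local expansions \eqref{eq:M-0} and \eqref{eq:M-s} into $P$. At $-s$ the $\log(z+s)$-terms carry $\sigma_+$ on the left and $\sigma_+^{\ast}$ on the right; the identity $\sigma_+\sigma_1\sigma_+^{\ast}=0$, together with the verification that in each occurring reflected sector pair (namely $\mathrm{II}\leftrightarrow\mathrm{V}$ and $\mathrm{III}\leftrightarrow\mathrm{IV}$) one has $E^{(s)}(z)\sigma_1[E^{(s)}(\bar z)]^{\ast}=\sigma_1$, makes all $\log$ terms drop out and leaves $P(z)=M^{(s)}(z)\sigma_1[M^{(s)}(\bar z)]^{\ast}$, analytic at $-s$. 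At $z=0$ the repeated use of $\sigma_3\sigma_1=-\sigma_1\sigma_3$ collapses $e^{(-1)^{k+1}/z^k\sigma_3}$ and $z^{\alpha/2\sigma_3}$ against their conjugate-transpose partners, giving $P(z)=M^{(0)}(z)\sigma_1[M^{(0)}(\bar z)]^{\ast}$, analytic at $0$.

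For the behaviour at infinity, write $\widehat{M}(z)=A(z)L(z)$ with $A(z)=O(1/z)$ and $L(z)=e^{\pi i\sigma_3/4}z^{-\sigma_3/4}(I+i\sigma_1)/\sqrt{2}\cdot e^{\lambda z^{1/2}\sigma_3}$. A short computation, again based on $\sigma_3\sigma_1=-\sigma_1\sigma_3$ and $\tfrac{I+i\sigma_1}{\sqrt 2}\sigma_1\tfrac{I-i\sigma_1}{\sqrt 2}=\sigma_1$, shows that $L(z)\sigma_1 L(\bar z)^{\ast}=-\sigma_2$ is a constant matrix, so all oscillatory and unbounded factors cancel in $P$. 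Consequently $P(z)=-A(z)\sigma_2 A(\bar z)^{\ast}=O(|z|^{-2})$ uniformly in every direction, and Liouville's theorem gives $P\equiv 0$.

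The main obstacle is the final deduction $P\equiv 0\Rightarrow\widehat{M}\equiv 0$, since the bilinear form carried by $\sigma_1$ has signature $(1,1)$ rather than being definite. I plan to handle this by first observing that all jump matrices and all singular factors for $\widehat{M}$ have determinant one, so $\det\widehat{M}$ is entire; the asymptotic \eqref{eq:hat M-infty} gives $\det\widehat{M}(z)=O(|z|^{-2})$, hence $\det\widehat{M}\equiv 0$ and $\widehat{M}$ has rank at most one. Locally I will factor $\widehat{M}(z)=\vec{u}(z)\vec{v}(z)^{T}$; the jump condition $\widehat{M}_+=\widehat{M}_- J$ then forces $\vec{u}_+=\alpha\,\vec{u}_-$ and $\vec{v}_+=\alpha^{-1}J^{T}\vec{v}_-$ for a scalar multiplier $\alpha$, and by solving an auxiliary scalar Riemann-Hilbert problem for $\alpha$ one can absorb this scaling into $\vec{v}$, making $\vec{u}$ analytic across every jump contour. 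The local analyses at $0$ and $-s$ show that all singular factors are carried entirely by $\vec{v}$, while the asymptotic \eqref{eq:hat M-infty} forces $\vec{u}(z)=O(1/z)$ at infinity. A second application of Liouville's theorem then yields $\vec{u}\equiv 0$, and hence $\widehat{M}\equiv 0$.
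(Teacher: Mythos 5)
Your construction of $P(z) := \widehat M(z)\,\sigma_1\,\widehat M(\bar z)^{\ast}$ is carried out correctly: the identities $J_1\sigma_1=\sigma_1J_1^{\ast}$, $J_3\sigma_1=\sigma_1J_3^{\ast}$, $J_2\sigma_1=\sigma_1J_4^{\ast}$, $E^{(s)}(z)\sigma_1E^{(s)}(\bar z)^{\ast}=\sigma_1$ on the conjugate sector pairs, and $L(z)\sigma_1L(\bar z)^{\ast}=-\sigma_2$ all check out, and these are precisely the conjugation symmetries the paper points at. So $P$ is entire, $P(z)=O(|z|^{-2})$, and Liouville gives $P\equiv 0$.

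The genuine gap is the step you yourself flag: $P\equiv 0$ simply does not imply $\widehat M\equiv 0$, because $\sigma_1$ is indefinite. Indeed the constant matrix $\widehat M=\begin{pmatrix}1&i\\1&i\end{pmatrix}$ has $\det\widehat M=0$ and $\widehat M\sigma_1\widehat M^{\ast}=0$, so knowing both $P\equiv 0$ and $\det\widehat M\equiv 0$ is not enough; the RH data must be re-used, and your fallback rank-one argument does not do so rigorously. Concretely: (i) a global analytic factorization $\widehat M=\vec u\,\vec v^{T}$ need not exist — it is only defined up to a pointwise scalar, can break down or acquire monodromy where $\widehat M$ has common zeros of its entries, and you give no argument that these obstructions are absent; (ii) the "auxiliary scalar Riemann--Hilbert problem" to absorb the multiplier is left unspecified, and in particular its behaviour at $0$, $-s$ and $\infty$ — which controls whether a second Liouville argument can be run — is never controlled; (iii) the claim that all singular factors land in $\vec v$ and that $\vec u=O(1/z)$ at infinity is asserted, not derived, and is delicate because \eqref{eq:hat M-infty} contains both the growing and the decaying factors $e^{\pm\lambda z^{1/2}}$; (iv) logically, your final argument never uses $P\equiv 0$, so the entire $P$-computation becomes detached from the conclusion. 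The paper defers to XDZ, Lemma 1, and the key conjugation relations it records are the ones you exploit; but the standard route there is a Zhou/Fokas--Zhou style argument with a \emph{positive semi-definite} quadratic form $\widehat M(z)\widehat M(\bar z)^{\ast}$, analytic only across the real-axis contours, combined with a Cauchy/positivity argument that uses the exponential decay of the second column of $\widehat M$ in the relevant sectors. Your insertion of $\sigma_1$ buys analyticity across $\Gamma_2\cup\Gamma_4$ but at the cost of definiteness, and that trade is exactly what keeps the argument from closing.
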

\begin{proof}
The case for $s=0$ and  $k=1$ is proved in \cite[Lemma 1]{XDZ}. For general $s>0$ and $k\in \mathbb{N}$, the proof is similar. We omit the details but point out that one important observation in the proof is the complex conjugation relation satisfied by jump matrices \eqref{eq:X-jump}:
$$\overline{\left(J_1^{(s)}\right)_{11}}=\left(J_1^{(s)}\right)_{22},\quad \quad \overline{ \left(J_2^{(s)}\right)_{12}}=\left(J_4^{(s)}\right)_{12},$$
where $J_{i}^{(s)}$, $i=1,2,3,4$, stands for the jump matrix of $M$ restricted on the contour $\Gamma_{i}^{(s)}$.
\end{proof}

By means of the vanishing lemma \ref{Vanishing lemma} \cite{DIZ97,FZ92,Zhou89}, the following theorem is immediate.

\begin{thm}\label{Solvability}
Assume that $\alpha>-1$, $s\geq 0$ and $\lambda>0$. Then, the RH problem \ref{rhp:M} for $M(z)$ is uniquely solvable.
\end{thm}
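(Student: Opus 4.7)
The plan is to deduce Theorem \ref{Solvability} from the vanishing lemma \ref{Vanishing lemma} via the standard Fredholm alternative argument for Riemann--Hilbert problems, following the framework of \cite{DIZ97,FZ92,Zhou89}. The key idea is that the RH problem for $M$ can be reformulated as a singular integral equation on an appropriate contour, whose solvability is equivalent to the injectivity of the associated singular integral operator; the vanishing lemma supplies precisely this injectivity.

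First, I would normalize the RH problem at infinity. Define a \emph{global parametrix} $M^{(\infty)}(z)$ encoding the prescribed behavior at infinity in \eqref{eq:M-infty}, together with the model local behavior at $z=0$ and $z=-s$ specified in \eqref{eq:M-0}--\eqref{eq:M-s}. Concretely, one can take $M^{(\infty)}(z)$ to be the product of the Bessel-type factor $e^{\frac{\pi i}{4}\sigma_3}z^{-\frac{1}{4}\sigma_3}\tfrac{I+i\sigma_1}{\sqrt{2}}e^{\lambda z^{1/2}\sigma_3}$ with the explicit endpoint models used to construct $P_{-1}$ (built from Bessel parametrices) and with the $e^{(-1)^{k+1}z^{-k}\sigma_3}z^{\alpha\sigma_3/2}$ factor near the origin. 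Then the ratio
\eq
\mathcal{M}(z):=M(z)\,M^{(\infty)}(z)^{-1}
\nq
satisfies a RH problem on an augmented contour $\Sigma$ (consisting of $\cup_{j=1}^4\Gamma_j^{(s)}$ together with small circles around $0$ and $-s$) with a jump matrix $J_\mathcal{M}$ that is bounded, uniformly invertible, tends to $I$ at infinity, and has no singularities at the endpoints $0$ and $-s$.

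Next, I would invoke the standard correspondence between such a normalized RH problem and the singular integral equation
\eq
(I-\mathcal{C}_{J_\mathcal{M}})\mu = I
\nq
on $L^2(\Sigma)$, where $\mathcal{C}_{J_\mathcal{M}}$ is the Cauchy-type operator built from the pointwise factorization $J_\mathcal{M}=(I-w_-)^{-1}(I+w_+)$. Because $J_\mathcal{M}-I$ decays at infinity and is bounded on $\Sigma$, the operator $\mathcal{C}_{J_\mathcal{M}}$ is a compact perturbation of the identity on a suitable $L^2$ space (after the standard removal of the discrete local singularities), hence Fredholm of index zero. By the Fredholm alternative, $I-\mathcal{C}_{J_\mathcal{M}}$ is invertible if and only if it has trivial kernel, that is, if and only if the homogeneous RH problem (for $\mathcal{M}$, and consequently for $M$) admits only the zero solution.

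Finally, I would verify the equivalence between solutions of the homogeneous integral equation and solutions $\widehat{M}$ to the homogeneous RH problem described in Lemma \ref{Vanishing lemma}: any such $\widehat{M}$ inherits the same jump contour \eqref{eq:X-jump}, the same admissible singularities at $0$ and $-s$ from \eqref{eq:M-0}--\eqref{eq:M-s}, and the decaying behavior \eqref{eq:hat M-infty} at infinity. The vanishing lemma forces $\widehat{M}\equiv 0$, hence $\ker(I-\mathcal{C}_{J_\mathcal{M}})=\{0\}$, and the RH problem \ref{rhp:M} is uniquely solvable. The main subtlety — which is the only place real work is needed — is the bookkeeping of the local behaviors at $z=0$ and $z=-s$: one has to check that the $L^2$ framework is compatible with the logarithmic behavior at $-s$ in \eqref{eq:M-s} and the $z^{\alpha\sigma_3/2}$ singularity at $0$, so that the endpoint behavior of the candidate $\widehat{M}$ falls precisely within the class covered by Lemma \ref{Vanishing lemma}. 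Once this is ensured, uniqueness of $M$ follows at once from $\det M\equiv 1$ (which itself follows by Liouville's theorem applied to $\det M$ using the unimodular jumps and the normalization at infinity) combined with the vanishing lemma.
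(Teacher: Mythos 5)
Your argument is correct and is essentially the same one the paper invokes: the paper simply cites \cite{DIZ97,FZ92,Zhou89} and declares the theorem ``immediate'' from Lemma \ref{Vanishing lemma}, whereas you have spelled out the underlying machinery (normalization against a parametrix, the equivalent Cauchy-type singular integral equation, Fredholm alternative with index zero, injectivity from the vanishing lemma). The only small point worth tidying is the last sentence: uniqueness is most cleanly obtained either from injectivity of $I-\mathcal{C}_{J_\mathcal{M}}$ directly or from the Liouville argument applied to $M_1M_2^{-1}$ (which needs $\det M_2\equiv 1$ and a check that the prescribed endpoint singularities cancel in the ratio), rather than ``$\det M\equiv 1$ combined with the vanishing lemma.''
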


\subsection{Lax pair equations and a coupled Painlev\'{e} III system}

We next obtain linear differential equations for $M$ with respect to $z, \lambda$ and $s$. This system
of differential equations has a Lax pair form and the compatibility condition of the Lax pair will
then lead to the coupled Painlev\'{e} III system \eqref{coupled PIII}.

\begin{prop}\label{CPIII:Lax pair}
We have the following differential equations for $M=M(z;\lambda,s)$:
\begin{align}\label{Lax pair-M-z}
 \frac{\partial M}{\partial z}&=\left( \sum_{j=1}^{k+1} \frac{A_j}{z^j}+\frac{A_0}{z+s}+\frac{\lambda}{2}\sigma_{-}
     \right)M,
\\
\label{Lax pair-M-lambda}
\frac{\partial M}{\partial \lambda}&=\left(
      \begin{array}{cc}
        0 & 1 \\
        z+2a' & 0 \\
      \end{array}
    \right)M,
\\
\label{Lax pair-M-s}
\frac{\partial M}{\partial s}
&=
\left(\frac{\partial a}{\partial s}\sigma_{-}+\frac{A_0}{z+s}\right)M,
\end{align}
where $\sigma_-=\left(
                             \begin{array}{cc}
                               0 & 0 \\
                               1 & 0 \\
                             \end{array}
                           \right)$, $'=\frac{d}{d\lambda}$,
\begin{align}\label{def:A-i}
A_j=\left(
      \begin{array}{cc}
        -\frac{b_{j}'}{2} &  b_j \\
         -\frac{b_{j}''}{2}+2a'b_j + b_{j+1} & \frac{b_{j}'}{2} \\
      \end{array}
    \right) \qquad \textrm{for } 1\leq j\leq k+1
\end{align}
and
\begin{align} \label{def:A0}
     A_0=\left(
      \begin{array}{cc}
        - \frac{1}{4} + \frac{b_{1}'}{2} &  \frac{\lambda}{2} - b_1  \\
         \frac{b_{1}''}{2}+\left(2a'-s\right)\left(\frac{\lambda}{2} - b_1\right)  & \frac{1}{4} - \frac{b_{1}'}{2} \\
      \end{array}
    \right).
\end{align}
Moreover, the functions $a(\lambda;s),b_1(\lambda;s),\ldots,b_{k+1}(\lambda;s)$ in \eqref{def:A-i} and \eqref{def:A0}
satisfy the coupled Painlev\'{e} III system \eqref{coupled PIII}.
\end{prop}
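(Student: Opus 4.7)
The key observation is that every jump matrix in RH problem \ref{rhp:M} for $M(z)$ is independent of $z$, $\lambda$, and $s$. Therefore the three logarithmic derivatives
\[
U(z):=\frac{\partial M}{\partial z}M^{-1},\qquad V(z):=\frac{\partial M}{\partial \lambda}M^{-1},\qquad W(z):=\frac{\partial M}{\partial s}M^{-1}
\]
are single-valued meromorphic matrix functions on the complex plane whose only possible singularities are at $z=0$, $z=-s$, and $z=\infty$. My plan is first to read off the explicit rational shape of each from the local expansions in items $(3)$--$(5)$ of RH problem \ref{rhp:M}, and then to extract the coupled Painlev\'{e} III system \eqref{coupled PIII} together with \eqref{eq:a-b} from the compatibility (zero-curvature) equations of the resulting Lax system.

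The pole structure of $U$ is dictated by three ingredients. At $z=0$ the essential factor $e^{(-1)^{k+1}z^{-k}\sigma_3}$ in \eqref{eq:M-0} produces poles of orders up to $k+1$, and these generate the matrices $A_{k+1},\ldots,A_{1}$. At $z=-s$ the $\log$-factor in \eqref{eq:M-s} yields a simple pole whose residue $A_0$ is determined by $M^{(s)}(-s)$ together with the nilpotent $\sigma_+$; the piecewise constant $E^{(s)}$ drops out of the logarithmic derivative. At infinity the factor $e^{\lambda z^{1/2}\sigma_3}$ contributes the constant piece $(\lambda/2)\sigma_-$ after one undoes the $R$-factor in \eqref{eq:M-infty}. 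Matching all three gives the form \eqref{Lax pair-M-z}, and the explicit expressions \eqref{def:A-i}--\eqref{def:A0} follow by defining $b_1,\ldots,b_{k+1}$ to be the relevant entries of the Laurent coefficients of $M$ at $z=0$, together with $a=(M_\infty)_{12}$ from \eqref{eq:M-infty}. An identical argument, using that the local factors at $z=0$ are $\lambda$-independent and that the local factors at both $z=0$ and $z=-s$ are $s$-independent except for the $\log(z+s)$ term, shows that $V$ is entire in $z$ with linear growth at infinity and that $W$ is meromorphic on $\mathbb{C}$ with only a simple pole at $z=-s$ and a constant value at infinity. Computing the leading behaviours at infinity from \eqref{eq:M-infty} then yields \eqref{Lax pair-M-lambda} and \eqref{Lax pair-M-s} in the stated form. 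The relation \eqref{eq:a-b} follows at once by comparing the $1/z$ coefficients at infinity on both sides of \eqref{Lax pair-M-s}: this identifies $(A_0)_{12}$ with $\partial_s a$, and substituting $(A_0)_{12}=\lambda/2-b_1$ from \eqref{def:A0} gives $\partial_s a=\lambda/2-b_1$.

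The coupled Painlev\'{e} III system \eqref{coupled PIII} then emerges from the zero-curvature condition
\[
\frac{\partial U}{\partial \lambda}-\frac{\partial V}{\partial z}+[U,V]=0,
\]
which is the compatibility condition for the Lax pair \eqref{Lax pair-M-z}--\eqref{Lax pair-M-lambda}. Since $\partial_z V$ is a constant matrix (as $V$ is linear in $z$) and $\partial_\lambda U$ inherits the Laurent structure of $U$, matching the coefficient of $1/(z+s)$ at $z=-s$ produces the first equation of \eqref{coupled PIII} (the one involving $b_1$, $a'$ and $s$), while matching the coefficients of the various negative powers of $z$ at the origin produces the remaining $k+1$ coupled ODEs; the constants $\Lambda_{2k+2}=2k^2$ and $\Lambda_{k+2}=(-1)^k 2\alpha k$ arise from the diagonal contributions $(-1)^{k+1}k\,\sigma_3/z^{k+1}$ and $\alpha\sigma_3/(2z)$ that the essential singularity and the factor $z^{\alpha\sigma_3/2}$ carry into $U$ near the origin. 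Compatibility of \eqref{Lax pair-M-z} with \eqref{Lax pair-M-s} only reproduces \eqref{eq:a-b} and delivers no new information. The main technical obstacle lies in the coefficient matching for the commutator $[U,V]$: one has to keep careful track of numerous products $A_j V_\ell/z^{j+\ell}$ and verify that, after simplification using the explicit $(1,2)$- and $(2,1)$-entries $b_j$ and $-\tfrac{1}{2}b_j''+2a'b_j+b_{j+1}$ in \eqref{def:A-i}, the truncated sums $\sum_{m=j-k-1}^{k+1}$ in \eqref{coupled PIII} emerge with the correct index ranges and the correct constants $\Lambda_j$. This is a lengthy but essentially mechanical bookkeeping step.
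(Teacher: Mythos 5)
Your derivation of the three Lax matrices from the RH problem is essentially the paper's argument: the jumps of $M$ are constant, so the logarithmic derivatives are rational with poles dictated by the local expansions at $0$, $-s$, $\infty$, and the forms \eqref{Lax pair-M-z}--\eqref{Lax pair-M-s} follow. Your route to \eqref{eq:a-b} is a valid alternative: from the paper's asymptotics \eqref{eq:M-infty}, the $1/z$ term of $\partial_s M\cdot M^{-1}$ gives $A_0=L(\partial_s M_\infty)L^{-1}$ with $L=\bigl(\begin{smallmatrix}1 & 0\\ a & 1\end{smallmatrix}\bigr)$, whence $(A_0)_{12}=(\partial_s M_\infty)_{12}=\partial_s a$; the paper instead invokes the $\lambda$--$s$ compatibility. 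Either works.

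However, there is a genuine gap in your derivation of the coupled Painlev\'e III system itself. You assert that the first equation of \eqref{coupled PIII} and the remaining $k+1$ equations drop out of the zero-curvature relation $\partial_\lambda A-\partial_z B+[A,B]=0$ by matching the $1/(z+s)$ coefficient and the negative powers of $z$. This is not so. Matching those coefficients only yields the commutator equations $A_j'+[A_j,B_0]+[A_{j+1},\sigma_-]=0$ (and the analogue for $A_0$), which together with $\operatorname{tr}A_j=0$ determine the \emph{form} of each $A_j$ in terms of $b_j=(A_j)_{12}$ plus some third-order ODEs; for instance the $1/(z+s)$ piece gives $\gamma_0'=(4a'-2s)\alpha_0$, which upon inserting \eqref{def:A0} is the $\lambda$-\emph{derivative} of the first equation of \eqref{coupled PIII}, not the equation itself. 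The second-order, quadratic-in-$b$ structure of \eqref{coupled PIII} is a first integral of this system, and the paper extracts it by a different device: since $\det M\equiv 1$ one has $\operatorname{tr}A=0$, and the explicit local structure of $M$ forces $\det A_0=0$ (because $A_0=\frac{1}{2\pi i}M_0^{(s)}\sigma_+\bigl(M_0^{(s)}\bigr)^{-1}$ is conjugate to a nilpotent), which is exactly the first equation of \eqref{coupled PIII}; and the expansion
\begin{equation*}
\det A(z;\lambda,s)=-\frac{k^2}{z^{2k+2}}+\frac{(-1)^{k+1}\alpha k}{z^{k+2}}+O\!\left(z^{-(k+1)}\right),\qquad z\to 0,
\end{equation*}
obtained directly from \eqref{eq:M-0}, matched against the rational expression of $\det A$ in the $A_j$'s, produces the remaining $k+1$ equations. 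That determinant argument is the missing ingredient in your proposal; without it the coefficient bookkeeping you describe will only reproduce derivatives of (and relations between) the entries of $A_j$, not the closed algebraic/second-order system \eqref{coupled PIII} with its fixed constants $\Lambda_j$.
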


\begin{proof}
Since the jumps in the RH problem for $M(z)$ are $z,\lambda,s$-independent matrices, we have that the functions
\begin{equation}\label{def:ABC}
A(z;\lambda,s):=\frac{\partial M}{\partial z}M^{-1}, \qquad B(z;\lambda,s):=\frac{\partial M}{\partial \lambda}M^{-1}, \qquad
C(z;\lambda,s):=\frac{\partial M}{\partial s}M^{-1}
\end{equation}
are meromorphic functions with possible isolated singular points at $z=-s$ and $z=0$. From the asymptotic behaviors of $M$ near $z=\infty$, $z=-s$ and $z=0$ as given in \eqref{eq:M-infty}--\eqref{eq:M-s}, it follows that
\begin{align}
  A(z;\lambda,s) & = \sum_{j=1}^{k+1} \frac{A_j}{z^j}+\frac{A_0}{z+s}+\frac{\lambda}{2}\sigma_{-}, \label{eq:Aexp}
\\
  B(z;\lambda,s) & =
    \left(
      \begin{array}{cc}
        0 & 1 \\
        z+ a' +  a^2  - 2 (M_\infty)_{11} & 0 \\
      \end{array}
    \right), \label{M-lamda-M} \\
  C(z;\lambda,s) & = \frac{\partial a}{\partial s}\sigma_{-}+\frac{C_1}{z+s},
\end{align}
for some $z$-independent matrices $A_j$, $j=0,\ldots,k+1$, and $C_1$. By \eqref{eq:expnear0Ms}, we further obtain
\begin{equation}\label{eq:A-0-M-s}
A_0=C_1= \frac{1}{2 \pi i} M^{(s)}_0\sigma_+\left(M^{(s)}_0\right)^{-1},
\end{equation}
where $M^{(s)}_0$ is the leading term of the expansion in \eqref{eq:expnear0Ms}. The regularity of $B(z;\lambda,s)$ in \eqref{M-lamda-M} also implies the $1/z$ term in the expansion of $B(z;\lambda,s)$ as $z \to \infty$ must vanish. Hence, by the definition of $B(z;\lambda,s)$ given in \eqref{def:ABC} and \eqref{eq:M-infty}, we find that
\begin{equation}
  a' - a^2  +2 (M_\infty)_{11} =0.
\end{equation}
This, together with \eqref{M-lamda-M}, gives us \eqref{Lax pair-M-lambda}.

Next, we show that the matrices $A_j$, $j=0,\ldots,k+1$, are given by \eqref{def:A-i} and \eqref{def:A0}. To proceed,
we note that the compatibility condition
$$\frac{\partial^2 M}{\partial z \partial \lambda}=\frac{\partial^2 M}{\partial \lambda \partial z }$$
for the differential equations \eqref{Lax pair-M-z} and \eqref{Lax pair-M-lambda} is the zero curvature
relation
\begin{equation}
  \frac{\partial A}{\partial \lambda} -\frac{ \partial B}{\partial z} +[A,B] = 0,
\end{equation}
where $[L,K] = LK - KL$ stands for the standard commutator of two matrices. Substituting \eqref{eq:Aexp} and \eqref{M-lamda-M} into the above formula, we obtain
\begin{equation}\label{eq:Ajcompat}
  A_{j}' +[A_{j}, B_0] + [A_{j+1}, \sigma_-] = 0,  \qquad  1 \leq j \leq k+1,
\end{equation}
where $B_0 = \left(
      \begin{array}{cc}
        0 & 1 \\
        2a' & 0 \\
      \end{array}
    \right)$ and $A_{k+2} = \mathbf{0}_{2 \times 2}$. Thus, if we set
$$b_j=(A_j)_{12},$$
and rewrite \eqref{eq:Ajcompat} in an entrywise manner, it follows that
\begin{align}
    (A_j)_{11}' - (A_j)_{21} + 2 a' b_j + b_{j+1} &= 0, \label{Aj-ode1} \\
   \lambda b_{j}' + 2 (A_j)_{11} &= 0, \label{Aj-ode2} \\
    (A_j)_{21}' - 4 a' (A_j)_{11} + b_{j+1}' &= 0. \label{Aj-ode3}
\end{align}
Since $\det M \equiv 1$, we have
$$\textrm{tr}A=0.$$
A combination of \eqref{Aj-ode1}, \eqref{Aj-ode2} and the above formula gives us the expression of $A_j$ in \eqref{def:A-i}.

To see the expression of $A_0$ in \eqref{def:A0}, we note that
$$(A_0 +A_1)_{12} = \lambda /2.$$
This follows from inserting \eqref{eq:M-infty} into the definition of $A(z;\lambda,s)$ and comparing the coefficients of $O(1/z)$ term in both sides as $z\to \infty$. Thus, $(A_0)_{12} = \lambda /2 - b_1$. The expression \eqref{def:A0} as well as the relation \eqref{eq:a-b} then follows from the compatibility condition
$$\frac{\partial^2 M}{ \partial \lambda \partial s}=\frac{\partial^2 M}{\partial s \partial \lambda }$$
for the differential equations \eqref{Lax pair-M-lambda} and \eqref{Lax pair-M-s}.

Finally, we derive the coupled Painlev\'{e} system \eqref{coupled PIII} by evaluating the determinant of $A$. By \eqref{eq:A-0-M-s}, it is easily seen that $$\det A_0=0.$$ This, together with \eqref{def:A0}, gives us
the first equation in \eqref{coupled PIII}. Moreover, from the asymptotic behavior of  $M$ near the origin given in \eqref{eq:M-0}, we have
\begin{equation}
  A(z;\lambda,s) = \frac{\partial M^{(0)}}{\partial z} \left(M^{(0)}\right)^{-1} + \left( \frac{(-1)^k k }{ z^{k+1}} + \frac{\alpha}{2z} \right) M^{(0)} \sigma_3 \left(M^{(0)}\right)^{-1},
\end{equation}
which particularly implies that, as $z \to 0$,
\begin{equation}
  \det A(z;\lambda,s) = -\frac{k^2}{z^{2k+2}} + \frac{(-1)^{k+1}\alpha k}{z^{k+2}} + O\left( \frac{1}{z^{k+1}} \right).
\end{equation}
The next $k+1$ equations in \eqref{coupled PIII} then follow from the above formula, \eqref{def:A-i}, \eqref{def:A0} and \eqref{eq:Aexp}.

This completes the proof of Proposition \ref{CPIII:Lax pair}.
\end{proof}

\begin{remark}
From \eqref{Aj-ode1}--\eqref{Aj-ode3}, it follows that each $b_j$ also satisfies a third order ODE. We will not use this fact in this paper.
\end{remark}

\begin{remark} \label{remark-b1}
  When $s=0$, the two singular points $z =0 $ and $z=-s$ of $M(z;\lambda,s)$ coalesce. Then, the coefficient $A_0 $ in \eqref{Lax pair-M-z} and \eqref{Lax pair-M-s} is a zero matrix, which means $b_1 = \lambda /2$.
\end{remark}

\section{Asymptotic analysis of the Riemann-Hilbert problem for $M$}
\label{sec:lambda-small}

In this section, we will perform the Deift-Zhou steepest descent analysis to the RH problem for $M$ as $\lambda \to 0^+$, which will then lead to the asymptotics of $a(\lambda;s)$ and $b_1(\lambda;s)$ in \eqref{eqr: thm-asy} and \eqref{eq:asyb1}, respectively.

\subsection{$M\to U$: Rescaling and normalization}
Define
\begin{equation}\label{eq:M to U}
U(z)=e^{-\frac{\pi i}{4}\sigma_3}\lambda^{-\frac{1}{2}\sigma_3}\left(
           \begin{array}{cc}
             1 & 0 \\
             -a(\lambda;s) & 1 \\
           \end{array}
         \right)
M\left(z/\lambda^2\right).
\end{equation}
Then, it is straightforward to check that $U$ satisfies the following RH problem.
\begin{rhp}
The function $U$ defined in \eqref{eq:M to U} has the following properties:
\begin{enumerate}
\item[\rm (1)] $U(z)$ is defined and analytic in $\mathbb{C} \setminus
\{\cup^4_{j=1}\Gamma_j^{\left(\lambda^2 s \right)}\cup\{-\lambda^2s\}\}$, where the definition of $\Gamma_j^{(s)}$ is given in \eqref{def:gammais}.

\item[\rm (2)] $U$ satisfies the following jump conditions:
\begin{equation}\label{eq:U-jump}
 U_+(z)=U_-(z)
 \left\{
 \begin{array}{ll}
\begin{pmatrix}
e^{\alpha \pi i} & 0 \\
0 & e^{-\alpha \pi i}
\end{pmatrix}, &  z \in \Gamma_1^{\left(\lambda^2 s \right)}, \\[.4cm]
    \left(
                               \begin{array}{cc}
                                1 & 0\\
                               e^{\alpha \pi i} & 1 \\
                                 \end{array}
                             \right),  &  z \in \Gamma_2^{\left(\lambda^2 s \right)}, \\[.4cm]
    \left(
                               \begin{array}{cc}
                                0 & 1\\
                               -1 & 0 \\
                                 \end{array}
                             \right),  &  z \in \Gamma_3^{\left(\lambda^2 s \right)}, \\[.4cm]
    \left(
                               \begin{array}{cc}
                                 1 & 0 \\
                                 e^{-\alpha\pi i} & 1 \\
                                 \end{array}
                             \right), &   z \in \Gamma_4^{\left(\lambda^2 s \right)}.
 \end{array}  \right .  \end{equation}

\item[\rm (3)] As $z\to\infty$,
\begin{equation}\label{eq:U-infty}
 U(z)=
 \left( I + \frac{U_\infty}{z} + O\left(\frac{1}{z^2}\right)
 \right) z^{-\frac{1}{4} \sigma_3} \frac{I + i \sigma_1}{\sqrt{2}} e^{\sqrt{z} \sigma_3}.
   \end{equation}

\item[\rm (4)] As $z\to0$,
\begin{equation} \label{U-0-behavior}
U(z)=O(1)e^{\frac{(-1)^{k+1} \lambda^{2k}}{z^k}\sigma_3}z^{\frac{\alpha}{2}\sigma_3}.
\end{equation}
\item[\rm (5)] As $z\to -\lambda^2s$,
\begin{equation} \label{U-lambda-s-behavior}
U(z)=O(1)\left(I+\frac{1}{2\pi i}\log\left(z+\lambda^2s\right)\sigma_+\right)E^{(\lambda^2 s)}(z),
\end{equation}
where the function $E^{(\lambda^2 s)}(z)$ is a piecewise constant matrix-valued function defined by
\begin{equation}\label{def:Elambda}
E^{(\lambda^2 s)}(z) =
\left\{
  \begin{array}{ll}
        e^{\frac{\alpha\pi i}{2}\sigma_3}, & \hbox{$\arg \left(z+\lambda^2s \right) \in \left( 0,\frac{2\pi}{3}\right)$,} \\
    \left( \begin{array}{cc}
                                                 e^{\frac{\alpha\pi i}{2}} & 0 \\
                                                -  e^{\frac{\alpha\pi i}{2}} &  e^{-\frac{\alpha\pi i}{2}} \\
                                               \end{array}
                                             \right), & \hbox{$\arg \left(z+\lambda^2 s \right) \in \left(\frac{2\pi}{3},\pi \right)$,} \\
     \left(
                                               \begin{array}{cc}
                                                e^{-\frac{\alpha\pi i}{2}} & 0 \\
                                             e^{-\frac{\alpha\pi i}{2}} & e^{\frac{\alpha\pi i}{2}} \\
                                               \end{array}
                                             \right), & \hbox{$\arg \left(z+\lambda^2 s \right) \in \left(-\pi, -\frac{2\pi}{3} \right)$,} \\
     e^{-\frac{\alpha\pi i}{2}\sigma_3}, & \hbox{$\arg \left(z+\lambda^2 s \right) \in \left(-\frac{2\pi}{3},0  \right)$.}
  \end{array}
\right.
\end{equation}
\end{enumerate}
\end{rhp}

\subsection{Outer parametrix}

As $\lambda\to 0^{+}$, the jump contour $\Gamma_1^{\left(\lambda^2 s \right)}$ disappears, it is then natural to consider the following RH problem.
\begin{rhp}\label{rhp:pinfty}
We look for a $2\times 2$ matrix-valued function $\Upsilon(z) $ satisfying
\begin{enumerate}
\item[\rm (1)] $\Upsilon(z)$ is defined and analytic in
$\mathbb{C}\setminus \{\cup^3_{j=1}\Sigma_j\cup\{0\}\}$, where the contours $\Sigma_j$  are illustrated in Figure \ref{fig:jumpsPsi}.

\item[\rm (2)] $\Upsilon(z)$ satisfies the following jump conditions:
\begin{equation}
 \Upsilon_{+}(z)=\Upsilon_{-}(z)
 \left\{
 \begin{array}{ll}
    \left(
                               \begin{array}{cc}
                                1 & 0\\
                               e^{\pi i\alpha} & 1 \\
                                 \end{array}
                             \right),  &  z \in \Sigma_1, \\[.4cm]
    \left(
                               \begin{array}{cc}
                                0 & 1\\
                               -1 & 0 \\
                                 \end{array}
                             \right),  &  z \in \Sigma_2, \\[.4cm]
    \left(
                               \begin{array}{cc}
                                 1 & 0 \\
                                 e^{-\pi i\alpha} & 1 \\
                                 \end{array}
                             \right), &   z \in \Sigma_3.
 \end{array}  \right .  \end{equation}

\item[\rm (3)] As $z \to \infty$, the behavior of $\Upsilon$ is the same as that of $U$ given in \eqref{eq:U-infty}.
\end{enumerate}
\end{rhp}

As shown in \cite[Sec. 5.1]{ACM} and \cite[Sec. 5.1]{XDZ}, the solution to RH problem \ref{rhp:pinfty} can be constructed explicitly with the aid of the Bessel parametrix $\Phi^{(\alpha)}_{\textrm{Bes}}$ of order $\alpha$ \eqref{eq:phiBes} in the following way:
\begin{equation}\label{def:p-out}
\Upsilon(z)= \left( I+\frac{i}{8}\left(4\alpha^2+3\right)\sigma_- \right)\pi^{\frac{\sigma_3}{2}}\Phi^{(\alpha)}_{\textrm{Bes}}(z/4).
\end{equation}
As a consequence, the behavior of $\Upsilon(z)$ as $z\to \infty$ in \eqref{eq:U-infty} can be refined to be
\begin{multline}\label{eq:Phi-Bessel-infty}
\Upsilon(z)=\left[I+\frac {4\alpha^2-1}{128z}\left(
                                           \begin{array}{cc}
                                             4\alpha^2-9 & 16i \\
                                            \frac {i}{12} (4\alpha^2-9)(4\alpha^2-13) & 9-4\alpha^2 \\
                                           \end{array}
                                         \right) +O\left (z^{-\frac 32}\right )\right ]
\\
\times z^{-\frac{1}{4}\sigma_3}\frac{I+i\sigma_1}{\sqrt{2}}
 e^{\sqrt{z}\sigma_3}.
\end{multline}
Moreover, from \eqref{eq:phiBes}, one can verify directly that, as $z\to 0$,
 \begin{equation}\label{def:p-entire}
\Upsilon(z)=\widehat \Upsilon(z)\left\{
                \begin{array}{ll}
                  z^{\frac{\alpha}{2}\sigma_3}\left(I+\frac{\sigma_+}{2 i\sin(\pi\alpha)} \right)\widehat C(z), & \hbox{for $ \alpha \notin \mathbb{Z}$,} \\
                  z^{\frac{\alpha}{2}\sigma_3}\left(I+ \frac{(-1)^{\alpha }}{\pi i}\log\frac{\sqrt{z}}{2}\, \sigma_+ \right) \widehat C(z), & \hbox{for $ \alpha \in \mathbb{Z}$,}
                \end{array}
              \right.
\end{equation}
where $\widehat \Upsilon(z)$ is an entire function and $\widehat C(z)$ is defined by
\begin{equation}\label{def:C}
\widehat C(z) = \begin{cases}
  I, &  \arg z\in\left(-\frac{2}{3}\pi,\frac{2}{3}\pi \right), \\
             \left(
             \begin{array}{cc}
             1 & 0 \\
             -e^{\alpha\pi i} & 1 \\
             \end{array}
             \right), & \arg z\in \left(\frac{2}{3}\pi, \pi \right), \\
\left(
\begin{array}{cc}
1 & 0 \\
e^{-\alpha\pi i} & 1 \\
\end{array}
\right), & \arg z\in \left( - \pi,-\frac{2}{3}\pi \right).
\end{cases}
\end{equation}

\subsection{Local parametrix near the origin}

Near the origin, the outer parametrix $\Upsilon(z)$ is not a good approximation to $U(z)$ due to the singular behaviors of $U(z)$ at $z=0$ and $z=-\lambda^2s$. Thus, we need to construct a local parametrix $P_0$ in a neighborhood of  the origin $D_0:=\{z: |z|<\epsilon\}$ with $\epsilon$ fixed but sufficiently small. 

\begin{rhp}\label{rhp:P0}
We look for a $2\times 2$ matrix-valued function $P_0(z)$ satisfying
\begin{enumerate}
\item[\rm (1)] $P_0(z)$ is defined and analytic in $\overline{D_0}\setminus \{\cup^4_{j=1}\Gamma_j^{\left(\lambda^2 s \right)}\cup\{-\lambda^2s\}\}$.

\item[\rm (2)] $P_0(z)$ satisfies the same jump conditions as $U$ on $ D_0 \cap \{\cup^4_{j=1}\Gamma_j^{\left(\lambda^2 s \right)}\}$.

\item [\rm (3)] $U(z)P_0(z)^{-1}$ is bounded at $z=0$ and $z=-\lambda^2s$.

\item[\rm (4)] As $\lambda \to 0^+$, we have the matching condition
\begin{equation}\label{eq:matchoing U-P}
 P_0(z)=\left(I+O\left(\lambda^{2(1+\alpha)}\right)\right) \Upsilon(z)
\end{equation}
for $z \in \partial D_0 $.
\end{enumerate}
\end{rhp}
The construction of the local parametrix $P_0$ is similar to the ones in \cite{ACM} and \cite{XDZ}.
It is explicitly given by
\begin{equation}\label{def:P-0}
P_0(z)=\widehat\Upsilon(z)\left(I+\left(f(z;\lambda)-h(z)\right)\sigma_+\right)e^{\frac{(-1)^{k+1}\lambda^{2k}}{z^k}\sigma_3}z^{\frac{\alpha}{2}\sigma_3} \widetilde C(z),
\end{equation}
where $\widehat\Upsilon(z)$ is given in \eqref{def:p-entire} and 
\begin{equation}
\widetilde C(z) = \begin{cases}
  I, &  \arg \left(z+ \lambda^2 s \right)\in\left(-\frac{2}{3}\pi,\frac{2}{3}\pi \right ), \\
             \left(
                                                                                                       \begin{array}{cc}
                                                                                                         1 & 0 \\
                                                                                                         -e^{\alpha\pi i} & 1 \\
                                                                                                       \end{array}
                                                                                                     \right), & \arg \left(z+\lambda^2 s\right)\in \left(\frac{2}{3}\pi, \pi \right), \\
\left(
                                                                                                       \begin{array}{cc}
                                                                                                         1 & 0 \\
                                                                                                         e^{-\alpha\pi i} & 1 \\
                                                                                                       \end{array}
                                                                                                     \right), & \arg \left(z+ \lambda^2 s\right)\in \left(- \pi,-\frac{2}{3}\pi \right).
\end{cases}
\end{equation}
The scalar functions $f(z;\lambda)$ and $h(z)$ in \eqref{def:P-0} are defined as follows:
\begin{equation}\label{def:f}
f(z;\lambda)= \frac{e^{-z}}{2\pi i}\int_{-\infty}^{-\lambda^2s} \frac{|x|^{\alpha} e^x}{x-z}  e^{2 \frac{(-1)^{k+1} \lambda^{2k}}{x^{k}}} dx,
\end{equation}
and
\begin{equation} \label{def:h-fun}
  h(z) = f(z; 0) -\begin{cases}
    \displaystyle \frac{z^\alpha}{2i
    \sin(\pi \alpha)}, & \textrm{for } \alpha \notin \mathbb{Z}, \vspace{0.2cm} \\
    \displaystyle \frac{(-1)^\alpha z^\alpha}{\pi i} \log \frac{\sqrt{z}}{2},& \textrm{for } \alpha \in \mathbb{Z}.
  \end{cases}
\end{equation}
It is straightforward to check that $h(z)$ is an entire function, $f(z;\lambda)$ is analytic for $z\in \mathbb{C}\setminus(-\infty,-\lambda^2s)$ and satisfies the jump condition
\begin{equation}\label{eq:f-jump}
f_+(x;\lambda)-f_-(x;\lambda)=|x|^{\alpha} e^{2 \frac{(-1)^{k+1} \lambda^{2k}}{x^{k}}} , \quad x\in(-\infty,-\lambda^2s),
\end{equation}
with the following endpoint behavior
\begin{equation}\label{eq:f-est-1}
f(z;\lambda)=\frac{(\lambda^2s)^{\alpha} e^{-\frac{2}{ s^k}} }{2\pi i} \log\left(z+\lambda^2s\right)+O(1),
\end{equation}
as $z\to-\lambda^2s$.
Moreover, uniformly for $z\in\partial D_0 $, we have
\begin{align}
  f(z;\lambda)-h(z) &= \frac{e^{-z}}{2\pi i} \left[ \int_{-\infty}^0 \frac{|x|^\alpha e^{x}}{x-z} \left( e^{2 \frac{(-1)^{k+1} \lambda^{2k}}{x^{k}}} - 1  \right) dx - \int_{-\lambda^2 s}^0 \frac{|x|^\alpha e^{x}}{x-z} e^{2 \frac{(-1)^{k+1} \lambda^{2k}}{x^{k}}} dx \right] \nonumber \\
  & ~~~ + \frac{z^\alpha}{2i \sin(\pi \alpha)} \nonumber \\
  & =  O\left(\lambda^{2k}\right) + O\left(\lambda^{2(1+\alpha)}\right) + \frac{z^\alpha}{2i \sin(\pi \alpha)} \nonumber \\
   &=  \frac{z^\alpha}{2i \sin(\pi \alpha)} + O\left(\lambda^{2(1+\alpha)}\right), \qquad \textrm{as } \lambda \to 0^+, \textrm{ if } \alpha \notin \mathbb{Z}.   \label{eq:f-est-2}
\end{align}
Similarly, if $\alpha \in \mathbb{Z}$, we also have
\begin{equation} \label{eq:f-est-3}
  f(z;\lambda)-h(z) = \frac{(-1)^\alpha z^\alpha}{\pi i } \log \frac{\sqrt{z}}{2} + O\left(\lambda^{2(1+\alpha)}\right),
\end{equation}
as $\lambda \to 0^+ $, uniformly for $z\in\partial D_0$.

Now let us verify $P_0(z)$ defined in \eqref{def:P-0} is indeed a solution to RH problem \ref{rhp:P0}. By \eqref{def:P-0} and \eqref{eq:f-jump}, it is easily seen that $P_0$ satisfies the same jump conditions as $U$ on $ D_0 \cap \{\cup^4_{j=1}\Gamma_j^{\left(\lambda^2 s \right)}\}$. Regarding item $(3)$ in RH problem \ref{rhp:P0}, it follows from \eqref{U-0-behavior} and \eqref{def:P-0} that as $z \to 0$,
\begin{equation*}
  U(z) P_0(z)^{-1} = O(1) \left(I-\left(f(z;\lambda)-h(z)\right)\sigma_+\right) \widehat \Upsilon(z)^{-1} = O(1),
\end{equation*}
since the functions $f$, $g$ and $\widehat \Upsilon$ are regular near the origin. Similarly, if $z \to - \lambda^2 s $ and $\arg \left(z+\lambda^2 s\right)\in\left(\frac{2 \pi}{3}, \pi \right)$, we have from \eqref{U-lambda-s-behavior} and \eqref{def:P-0} that
\begin{align*}
  U(z) P_0(z)^{-1} &= O(1)\left(I+\frac{1}{2\pi i}\log\left(z+\lambda^2s\right)\sigma_+ \right) \begin{pmatrix}
    e^{\frac{\alpha\pi i}{2}} & 0 \\ -  e^{\frac{\alpha\pi i}{2}} &  e^{-\frac{\alpha\pi i}{2}}
  \end{pmatrix} \begin{pmatrix}
    1 & 0 \\  e^{\alpha\pi i }  &  1
  \end{pmatrix} \\
  & ~~~\times z^{-\frac{\alpha}{2}\sigma_3} e^{-\frac{(-1)^{k+1}\lambda^{2k}}{z^k}\sigma_3} \left(I-\left(f(z;\lambda)-h(z)\right)\sigma_+\right) \widehat \Upsilon(z)^{-1} \\
  & =  O(1) \left(I+\frac{z^\alpha e^{-\alpha \pi i}}{2\pi i} e^{2\frac{(-1)^{k+1}\lambda^{2k}}{z^k}} \log\left(z+\lambda^2s \right)\sigma_+ \right)
\\
&~~~ \times \left(I-\left(f(z;\lambda)-h(z)\right)\sigma_+\right) \widehat \Upsilon(z)^{-1}.
\end{align*}
Note that $h(z)$ and $\widehat \Upsilon(z)$ are entire functions and  $\sigma_+^2=\mathbf{0}_{2\times 2}$, using \eqref{eq:f-est-1}, we obtain
\begin{equation}
  U(z) P_0(z)^{-1} = O(1),
\end{equation}
as required. If $z$ approaches $ - \lambda^2 s$ from other regions, the above equation also holds by similar arguments.

Finally, to see the matching condition on $\partial D_0$, we observe from \eqref{def:p-entire} and \eqref{def:P-0} that, if $\alpha \notin \mathbb{Z}$ and $\lambda \to 0^+$,
\begin{eqnarray*}
  P_0(z) \Upsilon(z)^{-1} =  \widehat \Upsilon(z)\left(I+\left(f(z;\lambda)-h(z)\right)\sigma_+\right) e^{\frac{(-1)^{k+1}\lambda^{2k}}{z^k}\sigma_3} \left(I- \frac{z^\alpha}{2i \sin(\pi \alpha)}\sigma_+\right) \widehat \Upsilon(z)^{-1}.
\end{eqnarray*}
It then follows from \eqref{eq:f-est-2} that \eqref{eq:matchoing U-P} holds for $\alpha \notin \mathbb{Z}$. Similarly, if $\alpha \in \mathbb{Z}$, the claim follows from \eqref{def:p-entire} and \eqref{eq:f-est-3}.


\subsection{Final transformation}

The final transformation is defined by
\begin{equation}\label{def:Q}
Q(z)=\left\{
       \begin{array}{ll}
         U(z)\Upsilon(z)^{-1}, & \hbox{for $z \in \mathbb{C}\setminus D_0 $,} \\
         U(z)P_0(z)^{-1}, & \hbox{for $z \in D_0$.}
       \end{array}
     \right.
\end{equation}
Since the point $z = -\lambda^2 s$ is located inside $D_0:=\{z: |z|<\epsilon\}$ when $\lambda$ is small enough, we may apply the contour deformation such that the jump contours of $U(z)$ and  $\Upsilon(z)$ are the same outside $D_0$. It is then easily seen that $Q$ satisfies the following RH problem.
\begin{rhp}
The function $Q(z)$ defined in \eqref{def:Q} has the following properties:
\begin{enumerate}
\item[\rm (1)]  $Q(z)$ is analytic in $\mathbb{C} \setminus \partial D_0$.

\item[\rm (2)]  $Q(z)$ satisfies the jump condition  $$ Q_+(z)=Q_-(z)J_Q(z),\qquad z \in \partial D_0,$$
where
  \begin{equation}
                     J_{Q}(z)=P_{0}(z) \Upsilon(z)^{-1}.
  \end{equation}
\item[\rm (3)] As $z \to \infty$,
$$Q(z)=I+O(1/z).$$
\end{enumerate}
\end{rhp}

From the matching condition \eqref{eq:matchoing U-P}, we obtain
\begin{equation}
 J_Q(z)=I+O\left(\lambda^{2(1+\alpha)}\right), \qquad \textrm{as } \lambda \to 0^+.
\end{equation}
By standard nonlinear steepest descent arguments (cf. \cite{Deift99book,DZ93}), it then follows that
\begin{equation}\label{eq: Q-est}
 Q(z)=I+O\left(\lambda^{2(1+\alpha)}\right), \qquad \textrm{as } \lambda \to 0^+,
\end{equation}
uniformly for $z$ in the complex plane off the jump contour.

Now we are ready to prove Theorems \ref{thm:cp-solutions} and \ref{thm:TW}.

\section{Proofs of Theorems \ref{thm:cp-solutions} and \ref{thm:TW}}
\label{sec:proofoflastthm}

\subsection{Proofs of Theorem \ref{thm:cp-solutions} and \eqref{eq:asyb1}}

The existence of solutions to the coupled Painelv\'{e} system \eqref{coupled PIII} and their analyticity for $\lambda>0$ follow directly from Theorem \ref{Solvability} and Proposition \ref{CPIII:Lax pair}.

To see the asymptotics of $a(\lambda;s)$ as $\lambda\to 0^+$, we note from \eqref{def:a} and \eqref{eq:M to U} that
\begin{equation}\label{eq:a-U}
a(\lambda;s)=i\lambda^{-1}(U_\infty)_{12}.
\end{equation}
By \eqref{def:Q}, one has
$$U(z)=Q(z)\Upsilon(z), \qquad z\in \mathbb{C}\setminus D_0.$$
The asymptotics of $a(\lambda;s)$ given in \eqref{eqr: thm-asy} then follows from \eqref{eq:Phi-Bessel-infty} and \eqref{eq: Q-est}.

Finally, we show the asymptotics of $b_1(\lambda;s)$ as $\lambda \to 0^+$. In view of \eqref{Lax pair-M-z} and \eqref{eq:M to U}, it is readily seen that
\begin{equation}\label{eq:b-U}
b_1(\lambda;s)=\frac{\lambda}{2} - \lambda i \lim_{z\to-\lambda^2s}\left(z+\lambda^2s\right)\left(U'_{+}(z)U_{+}(z)^{-1}\right)_{12}.
\end{equation}
Since $\lambda \to 0^+$, $z=-\lambda^2 s$ is close to the origin. For $z \in D_0$, we see from
the definitions of $P_0(z)$ in \eqref{def:P-0} and $Q(z)$ in \eqref{def:Q} that
\begin{equation}\label{eq:U exp}
U(z)=Q(z)P_0(z)=Q(z)\widehat \Upsilon(z)\left(I+\left(f(z;\lambda)-h(z)\right)\sigma_+\right)e^{\frac{(-1)^{k+1}\lambda^{2k}}{z^k}\sigma_3} z^{\frac{\alpha}{2}\sigma_3},
\end{equation}
where $f(z;\lambda)$ and $h(z)$ are defined in \eqref{def:f} and \eqref{def:h-fun}, respectively.
By \eqref{eq:U exp}, we further have
\begin{align*}
  U'(z) U(z)^{-1} &=  Q'(z) Q(z)^{-1} + Q(z) \widehat{\Upsilon}'(z) \widehat \Upsilon(z)^{-1} Q(z)^{-1} \\
   & ~~~ + Q(z) \widehat \Upsilon(z) \left(f'(z;\lambda)-h'(z)\right)\sigma_+ \left(I-\left(f(z;\lambda)-h(z)\right)\sigma_+\right) \widehat \Upsilon(z)^{-1} Q(z)^{-1} \\
   & ~~~ + Q(z) \widehat \Upsilon(z) \left(I+\left(f(z;\lambda)-h(z)\right)\sigma_+\right) \left( \frac{(-1)^k k \lambda^{2k}}{z^{k+1}} + \frac{\alpha}{2 z} \right)\sigma_3 \\
   &  \hspace{2cm} \times \left(I-\left(
f(z;\lambda)-h(z)\right)\sigma_+\right) \widehat \Upsilon(z)^{-1} Q(z)^{-1}.
\end{align*}
All functions on the right hand side of the above formula are analytic at $z = - \lambda^2 s$ except $f(z;\lambda)$; see \eqref{eq:f-est-1}. Therefore, as $z\to-\lambda^2s$, we obtain
\begin{align}
  \lim_{z\to-\lambda^2s}\left(z+\lambda^2s\right)\left(U'_{+}(z)U_{+}(z)^{-1}\right)_{12} & = \lim_{z\to-\lambda^2s}\left(z+\lambda^2s\right)f'(z;\lambda) \left( Q(z) \widehat \Upsilon(z) \sigma_+ \widehat \Upsilon(z)^{-1} Q(z)^{-1} \right)_{12} \nonumber \\
  & =  \frac{(\lambda^2s)^{\alpha} e^{-\frac{2}{ s^k}} }{2\pi i} \left(Q\left(-\lambda^2s\right)\widehat \Upsilon\left(-\lambda^2s\right)\right)_{11}^2,
\end{align}
A combination of the above formula and \eqref{eq:b-U} gives us
\begin{align} \label{b1-Q-Upsilon}
b_1(\lambda;s)= \frac{\lambda}{2} - \frac{s^{\alpha} \lambda^{2\alpha+1}}{2\pi }  e^{-\frac{2}{s^k}} \left(Q\left(-\lambda^2s\right)\widehat \Upsilon\left(-\lambda^2s\right)\right)_{11}^2.
\end{align}
From \eqref{def:p-out} and \eqref{def:p-entire}, the entire function $\widehat \Upsilon(z)$ takes the following form:
\begin{align} \label{Upsilon-entire}
  \widehat \Upsilon(z) = \left( I+\frac{i}{8}\left(4\alpha^2+3\right)\sigma_- \right)\pi^{\frac{\sigma_3}{2}}\Phi^{(\alpha)}_{\textrm{Bes}}(z/4) \widehat C(z)^{-1} \begin{pmatrix}
    1 & \ast \\ 0 & 1
  \end{pmatrix} z^{-\frac{\alpha}{2}\sigma_3}.
\end{align}
Note that, in the above formula, the (1,1) entry of $\pi^{\frac{\sigma_3}{2}}\Phi^{(\alpha)}_{\textrm{Bes}}(z/4) \widehat C(z)^{-1}$ remains unchanged when it is multiplied by a lower triangular matrix on its left or by an upper triangular matrix on its right. It then follows from \eqref{eq:phiBes} and \eqref{Upsilon-entire} that
\begin{align*}
  \left(\widehat \Upsilon(z)\right)_{11} = \left(\pi^{\frac{\sigma_3}{2}}\Phi^{(\alpha)}_{\textrm{Bes}}(z/4) \widehat C(z)^{-1}  z^{-\frac{\alpha}{2}\sigma_3} \right)_{11} = \sqrt{\pi} z^{-\alpha/2} I_\alpha\left(z^{1/2}\right) = \frac{\sqrt{\pi}}{2^\alpha} \sum_{k=0}^\infty \frac{(z/4)^k}{k! \Gamma(k+\alpha+1)};
\end{align*}
cf. \cite{AS}. Combining \eqref{eq: Q-est}, \eqref{b1-Q-Upsilon} and the above formula, we obtain
\begin{align}
b_1(\lambda;s)=\frac{\lambda}{2} - \frac{s^{\alpha} \lambda^{2\alpha+1}}{2^{2\alpha+1}  \, \Gamma(\alpha+1)^2}e^{-\frac{2}{s^k}} \left(1+O\left(\lambda^{2(1+\alpha)}\right)\right), \label{eq:b-asy-0}
\end{align}
which is given in \eqref{eq:asyb1}.

This completes the proofs of Theorem \ref{thm:cp-solutions} and \eqref{eq:asyb1}. \qed

\begin{remark}
To prove Theorem \ref{thm:TW}, we need to know a little more about the properties of $M(z)$ as $z \to -s$ in \eqref{eq:M-s} and \eqref{eq:expnear0Ms}. It is easily seen from \eqref{eq:expnear0Ms} that $M_1^{(s)} = \displaystyle\lim_{z \to -s} M^{(s)}(z)^{-1} M^{(s)}(z)'$. Then, it follows from \eqref{eq:M-s}, \eqref{def:E-i} and \eqref{eq:M to U} that
\begin{equation} \label{M1s-relation to Mz}
  \left( M_1^{(s)} \right)_{21} = e^{\alpha \pi i}\lim_{z \to -s} \left(M_-^{-1}(z)M_-'(z)\right)_{21} = e^{\alpha \pi i} \lambda^2 \lim_{z \to -\lambda^2 s} \left(U_-^{-1}(z)U_-'(z)\right)_{21}.
\end{equation}
Using \eqref{eq:U exp} again and conducting similar calculations in the above proofs, we have the following estimate 
\begin{equation}\label{eq:M-1}
\left(M_1^{(s)}\right)_{21}=O\left(\lambda^{2(1+\alpha)}\right),
\qquad \textrm{as } \lambda \to 0^+.
\end{equation}
\end{remark}


\subsection{Proof of Theorem \ref{thm:TW}}
Recall the connection between $F_s\left(\lambda^2s;\lambda^2\right)$ and $M$ given in \eqref{eq:derivative-s-M}, we then have from \eqref{M1s-relation to Mz} that
\begin{equation}\label{eq:derivative-s-M-1}
F_s\left(\lambda^2s;\lambda^2\right)
=- \frac{1}{2 \pi i \lambda^2}\left( M^{(s)}_1\right)_{21}, \qquad \lambda>0.
\end{equation}
Inserting \eqref{eq:M-s} and \eqref{eq:expnear0Ms} into \eqref{Lax pair-M-lambda}, we obtain, on one hand,
\begin{equation}\label{eq:M-0-1}
\frac{d}{d\lambda}\left(M_1^{(s)}\right)_{21}=\left(M^{(s)}_0\right)_{11}^2.
\end{equation}
On the other hand, by \eqref{def:A-i} and \eqref{eq:A-0-M-s}, it follows that
\begin{equation}\label{eq:M-1-b}
\left(M^{(s)}_0\right)_{11}^2=2\pi i \left( \frac{\lambda}{2} - b_1(\lambda;s) \right).
\end{equation}

On account of the fact that (see \eqref{eq:M-1})
$$\lim_{\lambda\to 0^+}\left(M_1^{(s)}\right)_{21}=0,$$
we obtain from \eqref{eq:derivative-s-M-1}--\eqref{eq:M-1-b} that
\begin{equation}
F_s\left(\lambda^2s;\lambda^2\right)=-\frac{1}{\lambda^2}\int_0^{\lambda}\left( \frac{\tau}{2} - b_1(\tau;s) \right)d\tau,
\end{equation}
where the integral is convergent by the asymptotics of $b_1(\lambda;s)$ as $\lambda\to 0^+$ given in \eqref{eq:b-asy-0}.
Equivalently, by \eqref{eq:a-b}, it follows that
\begin{equation}\label{eq:TW-1}
\frac{\partial}{\partial s}F\left(\lambda^2s;\lambda^2\right)=\lambda^2 F_s\left(\lambda^2s;\lambda^2\right)=-\int_0^{\lambda}a_{s}(\tau;s)d\tau.
\end{equation}
Since $F(0;\lambda^2)=0$, we integrate the above formula and arrive at \eqref{eq:TW}.

This completes the proof of Theorem \ref{thm:TW}. \qed

\appendix

\section{Consistency of \eqref{eq:TW} with the logarithm of \eqref{eq:TWbessel}}
From \eqref{eq:TW}, we have
\begin{equation}\label{eq:F-lambda-1}
\frac{\partial}{\partial\lambda}F\left(\lambda^2s;\lambda^2 \right) =-\left(a(\lambda;s)-a(\lambda;0)\right),
\end{equation}
where the function $a(\lambda;s)$ is among the class of special solutions to the coupled Painlev\'{e} III system \eqref{coupled PIII} as stated in Theorem \ref{thm:cp-solutions}. This, together with the second equality in \eqref{eq:lambdaderivative}
and change of variables $\lambda^2s\rightarrow s$ and $\lambda^2\to \lambda$, implies that
\begin{equation}\label{eq:F-a}
sF_s(s;\lambda) =-\frac {1}{2}\sqrt{\lambda}\left(a\left(\sqrt{\lambda};s/\lambda\right)-a\left(\sqrt{\lambda};0\right)\right)-\lambda F_{\lambda}(s;\lambda).
\end{equation}

We next relate the limit $F_s(s;\lambda)$ as $\lambda \to 0^+$ to the RH problem for $X$ in Proposition \ref{prop:RHPforX}.
From \eqref{def:r} and \eqref{eq:asylambdaderivative-1}, we get
\begin{equation}
   F_\lambda\left(\lambda^2s;\lambda^2\right) = \frac{r(\lambda^2)}{2\lambda^2}+\frac{\alpha^2}{4\lambda^2}-\frac{1}{16\lambda^2}+ O\left(s^{-1/2}\right),\quad \textrm{as } s\to+\infty.
\end{equation}
Due to the change of variables $\lambda^2s\rightarrow s$ and $\lambda^2\to \lambda$, the above formula and \eqref{eq:rsmall} gives us
\begin{equation}\label{eq:F-est}
\lambda F_{\lambda}(s;\lambda)=O\left(\lambda^{3/2}\right), \quad \textrm{as } \lambda\to 0^+,
\end{equation}
uniformly for $s$ in any compact subset of $(0,+\infty)$.
From the asymptotics of $a(\lambda;s) $ as $\lambda\to 0^+$ given in \eqref{eqr: thm-asy}, we also have
\begin{equation}
\sqrt{\lambda} a \left(\sqrt{\lambda};0\right)=\frac{1-4\alpha^2}{8}+O\left(\lambda^{1+\alpha}\right).
\end{equation}
Inserting the above two formulas into \eqref{eq:F-a}, we obtain
\begin{equation}\label{eq:F-a-small-lambda1}
F_s(s;\lambda) =-\frac {1}{2s}\left(\sqrt{\lambda}a\left(\sqrt{\lambda};s/\lambda\right)-\frac{1-4\alpha^2}{8}\right)+o(1), \quad \textrm{as } \lambda\to 0^+,
\end{equation}
where $o(1)$ term is uniform for $s$ in any compact subset of $(0,+\infty)$. On the other hand, on account of \eqref{eq:MandX} and \eqref{def:a}, we have
\begin{equation}
\label{eq:a-X}
\sqrt{\lambda}a\left(\sqrt{\lambda};s/\lambda\right)=i\left(X_{\infty}\right)_{12}(\lambda;s),
\end{equation}
where $X_\infty$ is given in \eqref{eq:X-infty}. A combination of \eqref{eq:F-a-small-lambda1} and \eqref{eq:a-X} then gives us
\begin{equation}\label{eq:F-a-small-lambda}
F_s(s;\lambda) =-\frac {i}{2s}\left(X_{\infty}\right)_{12}(\lambda;s)+\frac{1-4\alpha^2}{16s}+o(1), \quad \textrm{as } \lambda\to 0^+,
\end{equation}
uniformly for $s$ in any compact subset of $(0,+\infty)$.

Let us compare our RH problem for $X$ in Proposition \ref{prop:RHPforX} with RH problem 6.2 for the Bessel kernel in Bothner et al. \cite{BIP}.
They satisfy the following simple relation:
\begin{equation}\label{eq:X-BS}
X^{\mathrm{BIP}}(z;s,v=0) =e^{\frac{\pi }{4}i\sigma_3} X(-sz;s,\lambda=0),
\end{equation}
where $X^{\mathrm{BIP}}(z;s,v)$ is the solution to RH problem 6.2 in \cite{BIP}. Similar to \eqref{eq:X-infty}, let us denote the $1/z$ coefficient in the large $z$ expansion of $X^{\mathrm{BIP}}$ by $X^{\mathrm{BIP}}_\infty$. From \cite[Equation (6.3)]{BIP}, the large $z$ expansions of RH problems 6.2 and B.1 in \cite{BIP}, we have
\begin{equation} \label{bip-x-infty}
  \left(X^{\mathrm{BIP}}_\infty\right)_{12} = 2\mathcal{H}_H(\mathfrak{q}, \mathfrak{p}, s) + \frac{4\alpha^2 - 1}{8s},
\end{equation}
where
\begin{equation}
  \mathcal{H}_H(\mathfrak{q}, \mathfrak{p}, s) = \frac{\mathfrak{q}^2-1}{4s} \mathfrak{p}^2 - \frac{\alpha^2 \mathfrak{q}^2}{4s (\mathfrak{q}^2 -1 )} -\frac{\mathfrak{q}^2}{4},
\end{equation}
is the Hamiltonian for the Painlev\'{e} III equation; see \cite[Equation (1.13)]{BIP}. Here, $\mathfrak{q}(s)$ satisfies the equation \eqref{q-pv} and the boundary condition \eqref{q-pv-bc}, and $\mathfrak{p}(s) = 2s \mathfrak{q}'(s)/(\mathfrak{q}^2(s)-1)$.   Then, the formulas \eqref{eq:F-a-small-lambda}-\eqref{bip-x-infty} give us
\begin{equation}\label{eq:F-a-small-lambda-final}
F_s(s;\lambda) =\mathcal{H}_H(s)+o(1), \quad \textrm{as }  \lambda\to 0^+,
\end{equation}
uniformly for $s$ in any compact subset of $(0,+\infty)$.

From \eqref{eq:Psi-origin}, \eqref{eq:psinegx}, one can derive that 
\begin{equation}
   K_{\textrm{PIII}}(u,v;\lambda)=O\left(e^{-c\left[\left(\frac{\lambda}{u}\right)^k + \left(\frac{\lambda}{u}\right)^k \right]} \right), \quad \textrm{as } u,v \to 0^+,
\end{equation}
for $\lambda > 0$ and certain constant $c>0$. Hence, $F_\lambda(s;\lambda)/s$ is bounded when $s$ is small and positive. This, together with \eqref{eq:F-a} and \eqref{eq:F-a-small-lambda-final}, implies that we can interchange the integral with respect to $s$ on both sides of \eqref{eq:F-a-small-lambda-final} and the limit $\lambda \to 0^+$. Therefore, we obtain
\begin{equation}
  \lim_{\lambda\to 0^+} F(s;\lambda) = \int_0^s \mathcal{H}_H(\tau) d\tau.
\end{equation}
Finally, due to the fact that $[-\tau \,\mathcal{H}_H(\tau)]' = \mathfrak{q}^2(\tau)/4$ (cf. \cite[Equation (2.26)]{TW94}), an integration by parts of the above formula gives us the logarithm of \eqref{eq:TWbessel}.

\section*{Acknowledgements}
We thank the anonymous referees for their careful reading and constructive suggestions. We also thank Prof. Yang Chen for drawing our attention to the preprint \cite{Lyu:Gri:Chen2017} upon completion of the present work. Dan Dai was partially supported by grants from the Research Grants Council of the Hong Kong Special Administrative Region, China (Project No. CityU 11300814, CityU 11300115, CityU 11303016). Shuai-Xia Xu was partially supported by National Natural Science Foundation of China under grant numbers 11571376 and 11201493, GuangDong Natural Science Foundation under grant number 2014A030313176. Lun Zhang was partially supported by National Natural Science Foundation of China under grant number 11501120, by The Program for Professor of Special Appointment (Eastern Scholar) at Shanghai Institutions of Higher Learning, and by Grant EZH1411513 from Fudan University.



\begin{thebibliography}{99}
\bibitem{AS} M. Abramowitz and I.~A. Stegun,
Handbook of mathematical functions with formulas, graphs, and
mathematical tables, Dover, New York, 1972.

\bibitem{Akemann14}
G. Akemann, D. Villamaina and P. Vivo, A singular-potential random matrix
model arising in mean-field glassy systems, Phys. Rev. E 89 (2014), 062146.

\bibitem{ACM}
M. Atkin, T. Claeys and F. Mezzadri, Random matrix ensembles with singularities and a hierarchy of Painlev\'{e} III equations, Int. Math. Res. Notices 2016 (2016), 2320--2375.

\bibitem{BS08}
M.~V. Berry and P. Shukla, Tuck's incompressibility function: statistics for zeta zeros and
eigenvalues, J. Phys. A 41 (2008), 385202.

\bibitem{Bor:Dei2002}
A. Borodin and P. Deift, Fredholm determinants, Jimbo-Miwa-Ueno $\tau$-functions, and representation theory, Comm. Pure Appl. Math. 55 (2002), 1160--1230.

\bibitem{Bothner16}
T. Bothner, From gap probabilities in random matrix theory to eigenvalue expansions, J. Phys. A 49 (2016), 075204.

\bibitem{Bot:Its2014}
T. Bothner and A. Its, Asymptotics of a Fredholm determinant corresponding to
the first bulk critical universality class in random matrix models, Comm. Math.
Phys. 328 (2014), 155--202.

\bibitem{BIP}
T. Bothner, A. Its and A. Prokhorov, On the analysis of incomplete spectra in random marix theory through an extension of the Jimbo-Miwa-Ueno differntial, preprint arXiv: 1708.06480.

\bibitem{Bri:Mezz:Mo}
L. Brightmore, F. Mezzadri and M.~Y. Mo, A matrix model with a singular weight
and Painlev\'e III, Comm. Math. Phys. 333 (2015), 1317--1364.

\bibitem{Brouwer99}
P.~W. Brouwer, K.~M. Frahm and C.~W. Beenakker,
Distribution of the quantum mechanical time-delay matrix for a chaotic cavity, Waves Random Media 9 (1999),
91--104.

\bibitem{Brouwer97}
P.~W. Brouwer, K.~M. Frahm and C.~W. Beenakker,
Quantum mechanical time-delay matrix in chaotic scattering, Phys. Rev. Lett. 78 (1997), 4737--4740.


\bibitem{Char:Doera2017}
C. Charlier and  A. Doeraene, The generating function for the Bessel point process and a system of coupled Painlev\'e V equations, preprint arXiv:1709.07365.


\bibitem{CET95}
Y. Chen, K. Eriksen and C.~A. Tracy, Largest eigenvalue distribution in the double scaling limit of matrix models: a Coulomb fluid approach, J. Phys. A 28 (1995), L207--L211.


\bibitem{ChenIts2010}
Y. Chen and A. Its, Painlev\'e III and a singular linear statistics in Hermitian
random matrix ensembles I, J. Approx. Theory 162 (2010), 270--297.

\bibitem{Claeys:Doe}
T. Claeys and A. Doeraene, The generating function for the Airy point process and
a system of coupled Painlev\'e II equations, to appear in Stud. Appl. Math., preprint arXiv:1708.03481.

\bibitem{CGS}
T. Claeys, M. Girotti and D. Stivigny, Large gap asymptotics at the hard edge for product random matrices and Muttalib-Borodin ensembles, to appear in Int. Math. Res. Notices, preprint arXiv:1612.01916.

\bibitem{CIK10}
T. Claeys, A. Its and I. Krasovsky, Higher-order analogues of the {T}racy-{W}idom distribution and the {P}ainlev\'e {II} hierarchy, Comm. Pure Appl. Math. 63 (2010), 362--412.

\bibitem{Deift99book}
P.~Deift, Orthogonal polynomials and random matrices: a
{R}iemann-{H}ilbert approach, vol.~3 of Courant Lecture Notes in
Mathematics, New York University Courant Institute of Mathematical
Sciences, New York, 1999.


\bibitem{DIZ97}
P. Deift, A. Its and X. Zhou, A Riemann-Hilbert approach to asymptotic problems arising in the
theory of random matrix models, and also in the theory of integrable statistical mechanics,
Ann. of Math. (2) 146 (1997), 149--235.

\bibitem{DKV11}
P. Deift, I. Krasovsky and J. Vasilevska, Asymptotics for a determinant with a confluent
hypergeometric kernel, Int. Math. Res. Notices 2011 (2011), 2117--2160.



\bibitem{DKMVZ99}
P. Deift, T. Kriecherbauer, K. T.-R. McLaughlin, S. Venakides and X. Zhou, Uniform asymptotics for polynomials orthogonal with respect to varying exponential weights and applications to universality questions in random matrix theory, Comm. Pure Appl. Math. 52 (1999), no. 11, 1335--1425.

\bibitem{DZ93}
P. Deift and X. Zhou,
A steepest descent method for oscillatory Riemann-Hilbert problems. Asymptotics for the MKdV equation,
Ann. of Math. (2) 137 (1993), 295--368.


\bibitem{Dyson}
F. Dyson, Fredholm determinants and inverse scattering problems, Commun. Math. Phys. 47 (1976), 171--183.


\bibitem{FZ92}
A.~S. Fokas and X. Zhou, On the solvability of Painlev\'{e} II and IV, Comm. Math. Phys. 144
(1992), 601-- 622.

\bibitem{Forrester93}
P.~J. Forrester, The spectrum edge of random matrix ensembles, Nucl. Phys. B 402 (1993),
709--728.

\bibitem{Girotti14}
M. Girotti, Gap probabilities for the generalized Bessel process: a Riemann-Hilbert approach, Math. Phys. Anal. Geom. 17 (2014), 183--211.

\bibitem{IIKS90}
A.~R. Its, A.~G. Izergin, V.~E. Korepin and N.~A. Slavnov, Differential
equations for quantum correlation functions, Internat. J. Modern Phys.
B 4 (1990), 1003--1037.


\bibitem{IKO2009}
A.~R. Its, A.~B.~J. Kuijlaars and J. \"{O}stensson,  Asymptotics for a special solution of the thirty fourth Painlev\'{e} equation, Nonlinearity 22 (2009),  1523--1558.

\bibitem{Johansson06}
K.~Johansson,
\newblock Random matrices and determinantal processes,
Mathematical statistical physics (Lecture notes of the Les Houches Summer School), Elsevier B. V., Amsterdam 2006, 1--55.

\bibitem{KM2000}
A. B. J. Kuijlaars and K. T-R. McLaughlin,  Generic behavior of the density of states in random matrix theory and equilibrium problems in the presence of real analytic external fields, Comm. Pure Appl. Math. 53 (2000), 736--785.

\bibitem{KMVV2004}
A.~B.~J.~Kuijlaars, K.~T-R.~McLaughlin, W.~Van~Assche and
M.~Vanlessen, The {R}iemann-{H}ilbert approach to strong
asymptotics for orthogonal polynomials on {$[-1,1]$}, Adv. Math.
188 (2004), 337--398.

\bibitem{Lyu:Gri:Chen2017}
S. Lyu, J. Griffin and Y. Chen, Hankel determinant generated by a singularly
perturbed Laguerre weight on $[s,\infty)$, preprint (2017).

\bibitem{DLMF}
F.~W.~J. Olver, D.~W. Lozier, R.~F. Boisvert and C.~W. Clark, editors,
NIST Handbook of Mathematical Functions, Cambridge
University Press, Cambridge 2010. Print companion to [DLMF].


\bibitem{Soshnikov00}
A. Soshnikov, Determinantal random point fields,
Russian Math. Surveys 55 (2000), 923--975.

\bibitem{Stro14}
E. Strahov, Differential equations for singular values of products of Ginibre random matrices,
J. Phys. A 47 (2014), 325203.

\bibitem{TM13}
C. Texier and S.~N. Majumdar, Wigner time-delay distribution in chaotic cavities and freezing
transition, Phys. Rev. Lett. 110 (2013), 250602.

\bibitem{TW94c}
C.~A. Tracy and H.~Widom, Fredholm determinants, differential equations and matrix models, Comm. Math. Phys. 163 (1994), 33--72.

\bibitem{TW94}
C.~A. Tracy and H.~Widom,
\newblock Level spacing distributions and the {B}essel kernel,
\newblock Comm. Math. Phys. 161 (1994), 289--309.


\bibitem{TW94a}
C.~A. Tracy and H.~Widom, Level-spacing distributions and the Airy kernel, Comm. Math. Phys. 159
(1994), 151--174.


\bibitem{TW93}
C.~A. Tracy and H.~Widom, Introduction to random matrices. Geometric and quantum aspects
of integrable systems (Scheveningen, 1992), Lecture Notes in Physics, vol 424, Berlin:
Springer, 1993.


\bibitem{Vanlessen}
M. Vanlessen, Strong asymptotics of Laguerre-type orthogonal polynomials and applications
in random matrix theory, Constr. Approx. 25 (2007), 125--175.

\bibitem{XD2017}
S.-X. Xu and D. Dai, Tracy-Widom distributions in critical unitary random
matrix ensembles and the coupled Painlev\'e II system, preprint arXiv:1708.06113.

\bibitem{XDZ2015}
S.-X. Xu, D. Dai and Y.-Q. Zhao, Painlev\'e III asymptotics of Hankel determinants
for a singularly perturbed Laguerre weight, J. Approx. Theory 192 (2015), 1--18.

\bibitem{XDZ}
S.-X. Xu, D. Dai and Y.-Q. Zhao, Critical edge behavior and the Bessel to Airy
transition in the singularly perturbed Laguerre unitary ensemble, Comm. Math. Phys. 332 (2014), 1257--1296.

\bibitem{Zhang17}
L. Zhang, On Wright's generalized Bessel kernel, Physica D 340 (2017), 27--39.

\bibitem{Zhou89}
X. Zhou, The Riemann-Hilbert problem and inverse scattering, SIAM J. Math. Anal. 20
(1989), 966--986.



\end{thebibliography}
\end{document}